\documentclass[draft]{econsocart}
\RequirePackage{silence}
\RequirePackage[colorlinks,citecolor=blue,linkcolor=blue,urlcolor=blue,hypertexnames=false]{hyperref}

\startlocaldefs

\makeatletter
\def\ps@workingpaper{%
  \let\@mkboth\@gobbletwo
  \def\@oddhead{}%
  \def\@evenhead{}%
  \def\@oddfoot{\hfil\thepage\hfil}%
  \def\@evenfoot{\hfil\thepage\hfil}%
}
\let\ps@plain\ps@workingpaper
\makeatother

\makeatletter
\let\@font@warning\@gobble
\makeatother
\AtBeginDocument{%
  \hbadness=10000
  \vbadness=10000
  \hfuzz=16000pt
  \vfuzz=16000pt
  \raggedbottom
}

\usepackage{booktabs}
\usepackage{longtable}
\usepackage{tabularx}
\usepackage{array}
\usepackage{pdflscape}
\usepackage{threeparttable}
\usepackage{threeparttablex}
\usepackage{enumitem}

\usepackage{algorithm}
\usepackage{algpseudocode}

\makeatletter
\@ifundefined{abovecaptionskip}{\newskip\abovecaptionskip}{ }
\@ifundefined{belowcaptionskip}{\newskip\belowcaptionskip}{ }
\makeatother

\theoremstyle{plain}

\newtheorem{theorem}{Theorem}

\newtheorem{lemma}[theorem]{Lemma}

\theoremstyle{definition}
\newtheorem{definition}{Definition}
\newtheorem*{example}{Example}
\newtheorem{remark}{Remark}
\newtheorem{assumption}{Assumption}
\newtheorem{proposition}{Proposition}
\newtheorem{corollary}{Corollary}

\endlocaldefs

\begin{document}
\hbadness=10000
\vbadness=10000
\hfuzz=16000pt
\vfuzz=16000pt
\raggedbottom

\begin{frontmatter}

\title{Choosing What to Calibrate and What to Estimate in
Structural Models}
\runtitle{Choosing What to Calibrate}

\begin{aug}
%
%
%
\author[add1]{\fnms{Joan}~\snm{Alegre Canton}\ead[label=e1]{joan.alegre.canton@gmail.com}}
\address[add1]{%
\orgdiv{Department of Economics},
\orgname{Universidad Carlos III de Madrid}}

\end{aug}

\begin{funding}
I am grateful to Juan Carlos Escanciano and Andrey Ramos for their advice and helpful comments. 
Research funded by Ministerio de Ciencia e Innovación, grants PGC2018-096732-B-I00 and PID2021-127794NB-I00, and Comunidad de Madrid, grants EPUC3M11 (VPRICIT) and H2019/HUM-589.
\end{funding}
\begin{abstract}
Structural models often fix (calibrate) some parameters and estimate the rest, but this calibration--estimation partition is usually chosen by convention. This paper treats that choice as an econometric partition-selection problem. For each admissible partition, we construct a scalar sensitivity statistic measuring the local response of a target object---such as a policy effect, welfare measure, impulse response, or treatment effect---to perturbations of the calibrated parameters. The selected partition minimizes this statistic and therefore minimizes worst-case local bias from calibration errors. We first illustrate the decision problem in two canonical examples. We then apply it to the New Keynesian model of Nakamura and Steinsson (2018), where the partition choice has large implications for credibility: some partitions remain reliable under sizeable miscalibrations, whereas others generate large bias from small calibration errors. The procedure requires only local derivatives, avoids repeated re-estimation, and applies to a broad class of structural models.
\end{abstract}

\begin{keyword}
\kwd{Model selection}
\kwd{Structural parameters}
\kwd{Sensitivity analysis}
\kwd{Partial identification}
\end{keyword}

\begin{keyword}[class=JEL] 
\kwd{C10}
\kwd{C12}
\kwd{C16}
\end{keyword}

\end{frontmatter}

\makeatletter
\@ifundefined{nolinenumbers}{}{\nolinenumbers}
\makeatother

\pagestyle{workingpaper}
\thispagestyle{workingpaper}

\section{Introduction}

A common practice in structural model estimation is to calibrate some structural parameters---that is, to hold them fixed at prespecified values---and estimate the rest.\footnote{Throughout the paper, a parameter is \emph{calibrated} if it is held fixed during the estimation problem. This usage is narrower than in some applied and econometric work, where \emph{calibration} refers to choosing parameters to match empirical moments, i.e. to minimum-distance estimation; see, for example, \citet{cocci2021standard}. In our terminology, the latter is estimation.} This approach is often justified when full estimation is computationally infeasible \citep{jorgensen2023sensitivity}, when the model is not fully identified \citep{iskrev2019expect}, or when population values are known. In practice, however, the calibration--estimation partition is rarely chosen by a formal econometric criterion. It is often inherited from previous studies, justified informally, or dictated by computational convenience. 

This paper provides an econometric criterion for choosing what to calibrate and what to estimate in a general class of structural models. We treat the calibration--estimation split as a partition-selection problem. For each admissible partition, we construct a scalar sensitivity statistic that measures the local response of the object of interest to perturbations of the calibrated parameters. We then select the admissible partition that minimizes this sensitivity. The resulting rule chooses the partition that delivers the smallest worst-case local bias induced by calibration error. This perspective connects the paper to robust estimation under local misspecification, especially \cite{bonhomme2022minimizing}.

A natural alternative would be to select the partition by minimizing an MSE criterion that trades off calibration bias and sampling variance. We focus instead on bias, through the sensitivity statistic, because implementing an MSE criterion requires the full joint variance--covariance matrix of the stacked moments or reduced-form estimates. This requirement is often restrictive when different moments are estimated from different datasets and not all cross-covariances are identified.

A key feature of the implementation is that the least-sensitive partition is computed relative to two inputs. The first is a reference point for the structural parameters, typically obtained from a first-stage estimation. This reference point is where local sensitivities are evaluated. It also allows the researcher to compare candidate partitions without re-estimating the model separately for each one. The second input is a normalization of the calibrated parameters. This normalization specifies the scale on which calibration errors are measured and makes perturbations comparable across parameters expressed in different units. Hence, the least-sensitive partition should be interpreted as conditional on both the reference point and the chosen normalization of calibration errors. 

The method is developed in a minimum-distance framework \citep{newey1994large}, broad enough to cover several estimation approaches commonly used in structural work, including the Generalized Method of Moments (GMM), Classical Minimum Distance (CMD), Simulated Method of Moments (SMM), or Indirect Inference (II). 

We illustrate the importance of partition choice in three settings. The first two are New Keynesian models, based on \citet{an2007bayesian} and \citet{nakamura2018high}. The third is the dynamic discrete-choice model in \citet{kalouptsidi2021identification}. Across these examples, we show that the calibration--estimation partition has large implications for robustness. In the Nakamura--Steinsson application (Section~\ref{sec: empirical application}), where the object of interest is the monetary policy's information-effect parameter, the sensitivity statistic varies substantially across admissible partitions. For some partitions, miscalibrations of only a few percent are sufficient to overturn the qualitative conclusion about the object of interest. By contrast, miscalibrations of comparable magnitude generate only a small worst-case bias under the least-sensitive partition. These results show that fragility can be avoided purely by choosing more carefully which parameters to fix and which to estimate.

The paper is related to several strands of literature. It builds most directly on work on sensitivity analysis for structural models. \citet{iskrev2019expect} and \citet{jorgensen2023sensitivity} study how estimates and model outputs respond to perturbations in calibrated parameters, while \citet{lau2024sensitivity} develops local and global sensitivity measures for dynamic discrete choice models with fixed parameters. Relative to \cite{jorgensen2023sensitivity}, our proposed statistic is scalar and independent of calibrated parameter units. Related work studies sensitivity to moments or identifying assumptions, including \citet{andrews2017measuring}, \citet{honore2020informativeness}, and \citet{escanciano2013set}. Our contribution differs in that sensitivity is not the final object of interest. Instead, it is the input into a decision rule that selects which parameters should be fixed and which should be estimated. 

The paper is also related to the literature on identification and partial identification of counterfactuals in dynamic discrete choice models. Building on \cite{aguirregabiria2010another}, \cite{aguirregabiria2014identification}, \cite{norets2014semiparametric}, and \cite{arcidiacono2020identifying}, \cite{kalouptsidi2021identification} characterize which counterfactuals are point identified despite payoff nonidentification, while \cite{kalouptsidi2026counterfactual} characterize the corresponding identified set for the counterfactual. We reverse the direction of this exercise. In that literature, the calibration or identifying strategy is taken as given, and the question is which counterfactuals are point or set identified. Here, the counterfactual exercise is taken as given, and the question is which normalization—or calibration–estimation partition—makes that same object of interest least-sensitive to local calibration error. 

The minimax interpretation connects the paper to the literature on estimation and inference under local misspecification (e.g., \citealp{newey1985generalized,conley2012plausibly,kitamura2013robustness}). The closest paper is \citet{bonhomme2022minimizing}, who choose an estimator, through its influence function, to minimize worst-case mean squared error under local misspecification. We differ in two main respects: our criterion targets worst-case bias rather than worst-case MSE, and our decision variable is the calibration--estimation partition rather than the influence function. Also closely related is \citet{armstrong2021sensitivity}, who show that GMM weights should account for both sampling variance and misspecification bias when moments may be only approximately valid. We apply a similar local-robustness logic to a different margin: the choice of which parameters to calibrate and which to estimate.

The remainder of the paper is organized as follows. Section~\ref{sec: motivational example section} presents two motivational examples. Section~\ref{sec: Framework} defines the framework and discusses the construction of the partition-selection methodology in detail. Section~\ref{sec: Justification} offers a risk-minimization interpretation and relates the methodology to \cite{bonhomme2022minimizing}. Section~\ref{sec: Implementation} provides a step-by-step implementation guide. Section~\ref{sec: empirical application} applies the method to \cite{nakamura2018high}, and Section~\ref{sec: Monte carlo Simulation} presents Monte Carlo evidence.
\section{Two Motivational Examples}\label{sec: motivational example section}
This section illustrates the importance of choosing the calibration–estimation partition. In
both examples, the model requires fixing at least one structural parameter before the remaining
parameters can be estimated. The central point is that different admissible choices of the fixed
parameter can generate very different sensitivity of the object of interest to calibration error.

\subsection{A dynamic discrete choice model}
\label{subsec:ddc_motivation}

Consider the entry--exit model in \citet{kalouptsidi2021identification}. In each period, a firm chooses whether to be inactive, \(a=0\), or active, \(a=1\). The state is \(x=(k,w)\), where \(k\in\{0,1\}\) is the firm's lagged activity status and \(w\in\{w_L,w_M,w_H\}\) is an exogenous demand state. Up to the idiosyncratic choice shock, flow payoffs are
\[
\pi_a(k,w;\eta)=
\begin{cases}
k\,sv, & a=0,\\[2mm]
k\{\bar\pi(w)-fc\}-(1-k)ec, & a=1,
\end{cases}
\qquad
\eta=(ec,fc,sv).
\]
Here \(ec\) is the entry cost, \(fc\) is the fixed cost, and \(sv\) is the scrap value. Given \(\eta\), the choice-specific value and CCP are
\[
\begin{aligned}
v_a(x;\eta)
&=
\pi_a(x;\eta)+\rho\,\mathbb E[V(x';\eta)\mid a,x],\\
p_a(x;\eta)
&=
\frac{\exp\{v_a(x;\eta)\}}
{\exp\{v_0(x;\eta)\}+\exp\{v_1(x;\eta)\}},
\end{aligned}
\]
where the second expression uses the type-I extreme-value normalization.\footnote{The numerical primitives are those in \citet{kalouptsidi2021identification}. Writing \(\xi\) for the inverse-demand slope, residual demand is \(P_t=w_t-\xi Q_t\), marginal cost is \(c=11\), \(\xi=1.5\), \(w=(20,17,12)\), and \(\bar\pi(w)=(w-c)^2/(4\xi)\). The remaining primitives are \(\rho=0.95\), \(ec=9\), \(fc=5.5\), \(sv=10\), and
\[
F_w=
\begin{pmatrix}
0.40 & 0.35 & 0.25\\
0.30 & 0.40 & 0.30\\
0.20 & 0.20 & 0.60
\end{pmatrix}.
\]
The idiosyncratic shocks are type-I extreme value.}

The counterfactual policy is a \(20\%\) proportional tax on the active operating payoff. Let \(\tau=0.20\). The counterfactual flow payoff is
\[
\pi^\tau_0(k,w;\eta)=\pi_0(k,w;\eta),
\qquad
\pi^\tau_1(k,w;\eta)
=
(1-\tau)\,k\{\bar\pi(w)-fc\}-(1-k)ec .
\]
The object of interest is the counterfactual CCP vector,
\[
p^\tau_1(x;\eta).
\]
The model is not point identified without one additional normalization: one of \(ec\), \(fc\), or \(sv\) must be fixed \citep{aguirregabiria2014identification}. This is not merely a computational detail. Observationally equivalent parameterizations of the model can imply different values of \(p^\tau_1(x;\eta)\). In applications, researchers most often fix either the scrap value or the fixed cost. Table~\ref{tab:DDC-calibration} summarizes this choice in related studies.

\begin{table}[H]
    \centering
    \begin{threeparttable}
        \caption{Calibration Strategy in Recent Studies}
        \label{tab:DDC-calibration}
        \begin{tabular}{l|c}
            \toprule
            \textit{Calibrate the scrap value} & \textit{Calibrate the fixed cost} \\ 
            \midrule
            \cite{das2007market}              & \cite{ryan2012costs} \\
            \cite{kryukov2010dynamic}         & \cite{sweeting2013dynamic} \\
            \cite{collard2011productivity}    & \cite{santos2017sunk} \\
            \cite{collard2013demand}          & \cite{beresteanu2024dynamics} \\
            \cite{aguirregabiria2012dynamic}  & \\
            \cite{suzuki2013land}             & \\
            \cite{barwick2015costs}           & \\
            \cite{lin2015quality}             & \\
            \cite{igami2017estimating}        & \\
            \cite{varela2018costs}            & \\
            \bottomrule
        \end{tabular}
        \begin{tablenotes}
            \small
            \item Note: No study in this list calibrated the entry cost.
        \end{tablenotes}
    \end{threeparttable}
\end{table}

Figure~\ref{fig:ddc_calibration_sensitivity} plots the object of interest as a function of the fixed parameter. For each curve, either \(sv\) or \(fc\) is fixed at a value around its population value, the remaining cost parameters are then estimated, and the counterfactual CCP is recomputed. The horizontal axis is centered at the true value of the fixed parameter, so zero corresponds to correct calibration. The horizontal line is the population counterfactual CCP; the vertical distance from this line is the bias induced by miscalibration.

\begin{figure}[htbp]
    \centering
    \caption{Counterfactual CCP as a function of the fixed parameter}
    \includegraphics[width=0.72\textwidth,keepaspectratio]{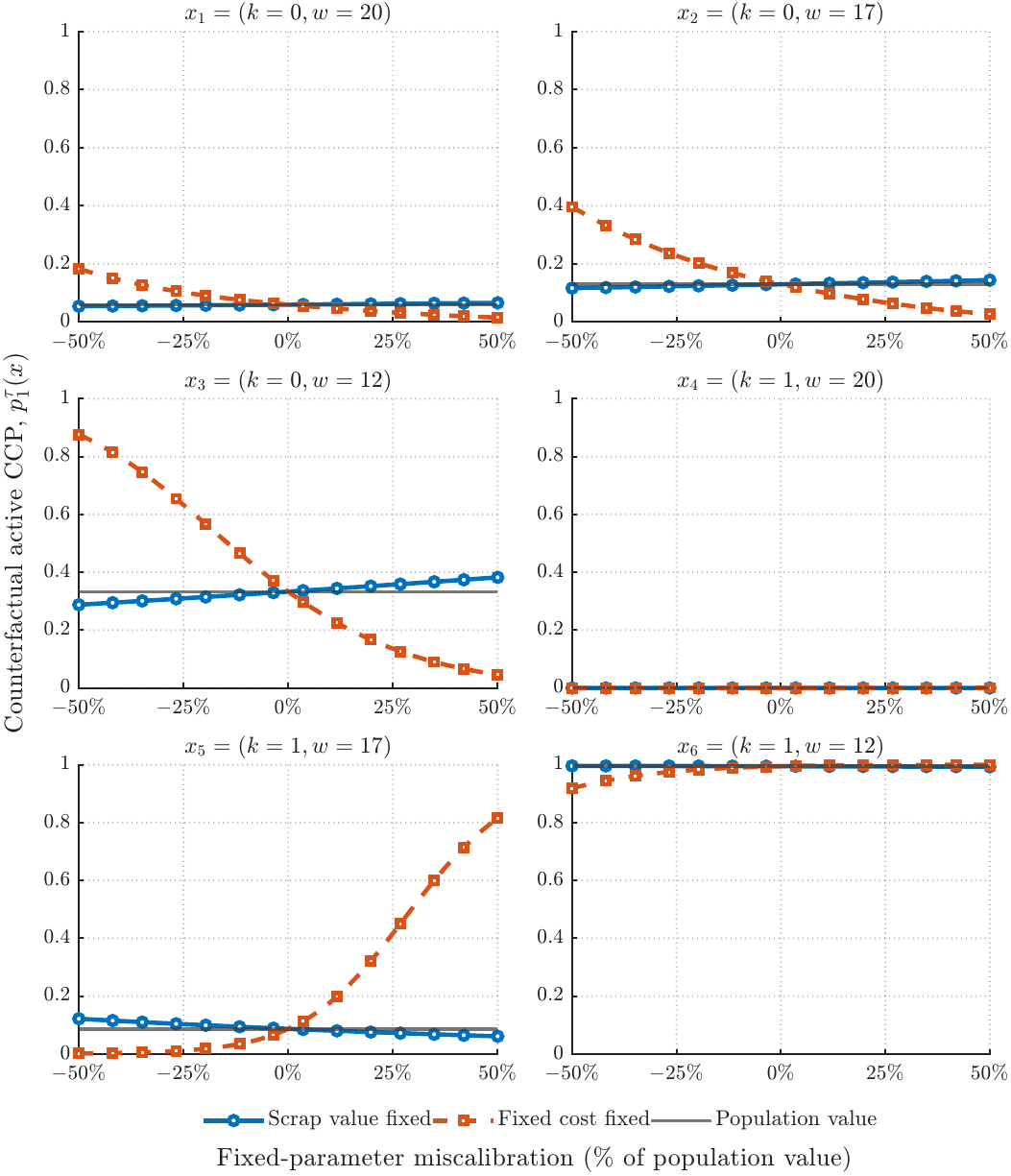}
    \label{fig:ddc_calibration_sensitivity}
\end{figure}

The figure shows that fixing the scrap value produces a relatively flat curve, while fixing the fixed cost produces a much steeper one. Thus, two common ways of identifying the same entry--exit model imply very different robustness properties for the same counterfactual object.

This motivational example is related to \cite{kalouptsidi2021identification} and \cite{kalouptsidi2026counterfactual}. Those papers take the calibration strategy as given and ask which counterfactuals are point identified, or what set of counterfactual values is implied when point identification fails. We ask the dual question. We take the counterfactual object, $p^\tau_1(x;\eta)$, as given and compare the admissible calibration strategies themselves. In this example, fixing sv, fc, or ec generates a different local range for the same counterfactual CCP under calibration errors; our method selects the strategy that makes this local range smallest.

\subsection{A canonical New Keynesian model}
\label{subsec:nk_motivation}

The same issue arises in a standard macroeconomic environment. Consider the three-equation forward-looking New Keynesian model used in \citet{an2007bayesian}:

\begin{align*}
	\textbf{Euler Equation:} \quad & 
	y_t = \mathbb{E}_t y_{t+1} + g_t - \mathbb{E}_t g_{t+1} - \tfrac{1}{\tau}(r_t - \mathbb{E}_t \pi_{t+1} - \mathbb{E}_t z_{t+1}). \\
	\textbf{Phillips curve:} \quad & \pi_t = \rho \mathbb{E}_t \pi_{t+1} + \kappa(y_t - g_t), \\
	& c_t = y_t - g_t. \\
	\textbf{Monetary Policy:} \quad & r_t = \rho_r r_{t-1} + (1-\rho_r)\psi_1 \pi_t + (1-\rho_r)\psi_2 (y_t - g_t) + \epsilon_{rt}, \\
	& g_t = \rho_g g_{t-1} + \epsilon_{gt}, \\
	& z_t = \rho_z z_{t-1} + \epsilon_{zt}.
\end{align*}  

The shocks \(\varepsilon_{r,t}\), \(\varepsilon_{g,t}\), and \(\varepsilon_{z,t}\) are mutually uncorrelated, with variances \(\sigma_r^2\), \(\sigma_g^2\), and \(\sigma_z^2\)\footnote{We use the calibration in \citet{an2007bayesian} as the data-generating process:
\[
(\rho,\tau,\kappa,\psi_1,\psi_2,\rho_r,\rho_g,\rho_z,\sigma_r^2,\sigma_g^2,\sigma_z^2)
=
(0.9975,2,0.33,1.5,0.125,0.75,0.95,0.9,0.04,0.36,0.09).
\]}.

Let the object of interest be \(\psi_1\), the coefficient on inflation in the Taylor rule. In this model, \(\psi_1\) cannot be jointly identified with all elements of \(\{\psi_2,\rho_r,\sigma_r^2\}\); one of these three parameters must be fixed in order to estimate the others \citep[e.g.][]{iskrev2010local,komunjer2011dynamic,qu2012identification,qu2017global,kocikecki2023solution}. The researcher must therefore decide whether to fix \(\psi_2\), \(\rho_r\), or \(\sigma_r^2\).

Because these parameters have different units, Figure \ref{fig:nk_calibration_sensitivity} reports miscalibrations in normalized percent-of-range units. Thus, \(x=0\) means correct calibration, and \(x=0.1\) means that the fixed parameter is set \(10\%\) of its plausible range above its baseline value\footnote{To compute the normalization ranges we use the priors for \(\psi_2\), \(\rho_r\), and \(\sigma_r\) in \cite{an2007bayesian}. We compute its respective $0.5\%$ and $99.5\%$ quantiles. Obtaining:
\[
\psi_2 \in [0,0.98], \qquad
\rho_r \in [0,1], \qquad
\sigma_r^2 \in [0,0.35].
\]}. For each curve, we fix one parameter at nearby values, re-estimate the remaining parameters by MLE, and record the implied value of \(\psi_1\).

\begin{figure}[htbp]
	\centering
	\caption{Value of \(\psi_1\) as a function of miscalibration in the fixed parameter}
	\includegraphics[width=0.78\textwidth,keepaspectratio]{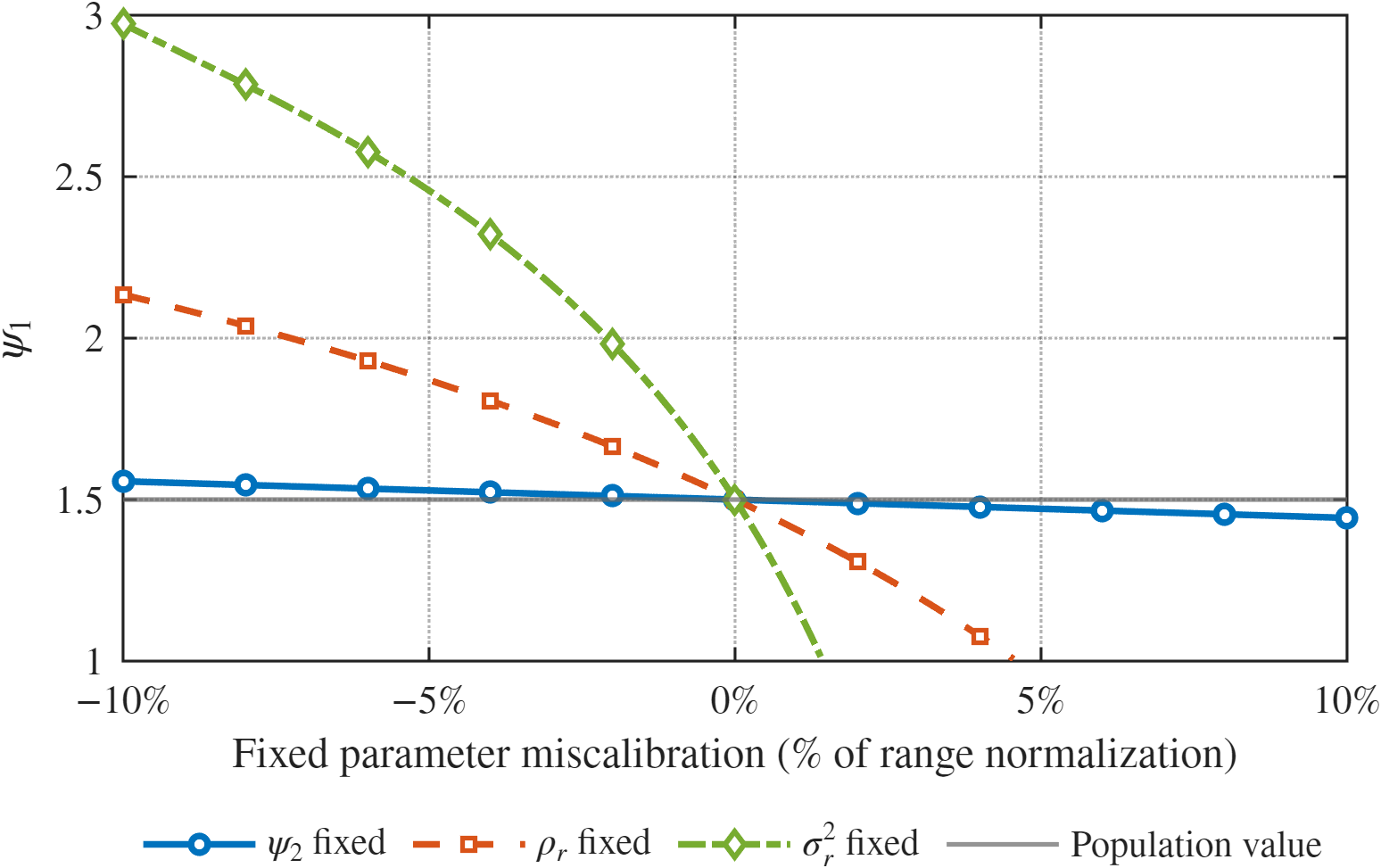}
	\label{fig:nk_calibration_sensitivity}
\end{figure}

The true value is \(\psi_1=1.5\). Hence, the vertical distance from \(1.5\) is the bias induced by miscalibration. Fixing \(\psi_2\) is relatively robust: the curve is nearly flat around zero. Fixing \(\rho_r\) or \(\sigma_r^2\), in contrast, can generate large changes in \(\psi_1\) even for small normalized perturbations.

These two examples motivate the criterion developed below. Full re-estimation curves are informative, but they are costly to compute, difficult to compare when the target or the calibrated block is multidimensional, and sensitive to the units of the fixed parameters. The statistic \(K_S\) replaces these curves with a local, scalar measure of worst-case sensitivity of the object of interest to calibration error.

\section{Framework and Partition Selection Methodology}\label{sec: Framework}

Let \(\eta \in \mathcal N \subset
\mathbb R^{d_\eta}\) denote the structural parameter vector  and let
$
\gamma=\Gamma(\eta)\in\mathbb R^{d_\gamma}
$
denote the object of interest. A partition is an index set $S\subseteq\{1,\dots,d_\eta\}$ selecting which components of $\eta$
are estimated. Define
\[
\alpha := (\eta_j)_{j\in S},
\qquad
\beta := (\eta_j)_{j\in S^c},
\]
so that $\eta=(\alpha,\beta)$, where $\alpha$ denotes the estimated parameters and $\beta$ the
calibrated (fixed) parameters\footnote{Equivalently, $S$ induces $\mathcal N=\mathcal A\times\mathcal B$ with
$\mathcal A:=\prod_{j\in S}\mathcal N_j$ and $\mathcal B:=\prod_{j\in S^c}\mathcal N_j$, where $\mathcal N_j$
is the parameter space associated with the $j$th element of $\eta$.}. For a given partition, fixing
\(\beta\) and solving the minimum-distance problem yields
\(\alpha(\beta;S)\), which in turn induces the partition-specific map
\[
\gamma(\beta;S)=\Gamma(\alpha(\beta;S),\beta).
\]
This induced map is the central object for partition selection: it describes
how the object of interest changes when the fixed parameters are perturbed and
the remaining parameters are re-estimated.

Our comparison is local. Partitions are evaluated around a reference point
\(\eta^*\), typically obtained from a first-stage estimation, which provides
the common baseline at which sensitivities and admissibility are assessed.

The goal of this section is to compare these induced maps across partitions and
select the one that is least-sensitive to local miscalibration of the fixed
parameters. Formally, after defining the admissible set of partitions $\mathcal{S}_{\mathrm{AD}}$, and the
sensitivity statistic \(K_S\), we choose
\[
S^* \in \arg\min_{S\in \mathcal{S}_{\mathrm{AD}}} K_S.
\]
The rest of the section builds the ingredients of this criterion step by step.
Subsection~\ref{subsec:framework} introduces the conditional
minimum-distance problem, the induced map \(\gamma(\beta;S)\), and the reference point $\eta^*$. Subsection~\ref{sec: normalization and sensitivity} derives
the sensitivity statistic \(K_S\). Subsection~\ref{sec: admissible_partitions}
defines the admissible set, \(\mathcal{S}_{\mathrm{AD}}\), and the least-sensitive partition $S^*$.

\subsection{Framework}\label{subsec:framework}

 We study the case in which a structural economic model implies a system of $d_g$ equations
\[
g(\eta): \mathcal{N} \longrightarrow \mathcal{G}\subseteq \mathbb{R}^{d_g}.
\]

The identified set is defined by all structural parameters that solve the model equations:
\begin{equation}
    \mathcal{N}_0 := \left\{\eta \in \mathcal{N} \;\middle|\; g(\eta)=0\right\}.
\end{equation}
Throughout the paper we assume the following conditions about the system of equations and the object of interest are satisfied.

 \begin{assumption} \label{ass: regularity conditions}
    $i)$ There exists a true population parameter $\eta_0\in \mathcal{N}_0$;  $ii)$ $g(\alpha,\beta;S)$ is twice continuously differentiable in $(\alpha,\beta)$; $iii)$ $\Gamma(\alpha,\beta)$ is continuously differentiable.
\end{assumption}

Given $\beta$, the conditional minimum-distance solution for the estimated parameters $\alpha$ is given by
\begin{equation}\label{eq: MD opt problem}
    \alpha(\beta;S)\in
\arg\min_{\alpha\in\mathcal{A}}
g(\alpha,\beta)'Wg(\alpha,\beta),
\end{equation}
where $W$ is a positive-definite weighting matrix.

To compute the sensitivity statistics and the set of admissible partitions, we work around a \emph{reference point} $\eta^*$, which may differ from the true parameter $\eta_0$. Typically, it is obtained from a first-stage estimation. For any partition we write $\eta^*=(\alpha^\ast,\beta^\ast)$ for its corresponding relabeled components. See Section \ref{sec:first-stage-reference-point} for a detailed discussion on the reference point.

\begin{assumption} \label{ass: regularity assumptions reference point}
    $(i)$ The reference point $\eta^*$ is an interior point of $\mathcal{N}$; $(ii)$ the reference point satisfies $\eta^* \in \mathcal{N}_0$.
\end{assumption}

Throughout the main text, we focus on cases in which the system of equations $g(\cdot)$ does
not depend on the partition S, except through the relabeling of coordinates induced by the
split $(\alpha, \beta)$. This condition is satisfied in many applications of GMM, CMD, SMM, and II.
There are, however, important cases in which the system of equations does depend on the
partition. A leading example is MLE: deciding which parameters are estimated and which are
fixed changes the score equations used in the estimation problem. In Appendix \ref{appendix: MLE reference point} we extend
the methodology to the MLE case.

\subsection{Normalization and Sensitivity Statistic} \label{sec: normalization and sensitivity}
We now formalize the local sensitivity measure used to compare candidate partitions. We proceed in three steps: derive how the MD solution responds to perturbations in the fixed parameters, translate this into sensitivity of the object of interest, and normalize the result so it can be compared across partitions.

The Jacobians of the system of equations, evaluated at the reference point, are denoted by:
\[
G_\alpha:=\frac{\partial g(\alpha^\ast,\beta^\ast)}{\partial \alpha},
\qquad
G_\beta:=\frac{\partial g(\alpha^\ast,\beta^\ast)}{\partial \beta}.
\]
Throughout the paper, we suppress the dependence of these Jacobians on the reference point and on the partition $S$.

\begin{lemma}
    Suppose that Assumptions \ref{ass: regularity conditions} and \ref{ass: regularity assumptions reference point} hold, and the Jacobian for the estimated parameters satisfies $Rank(G_{\alpha}) = |S|$. Then,
\begin{equation} \label{eq: der_est_fixed}
    \frac{\partial \alpha(\beta^\ast;S)}{\partial \beta}
=
-
\left(G_\alpha'WG_\alpha\right)^{-1}
G_\alpha'WG_\beta.
\end{equation}
Accordingly, by the chain rule,
\begin{equation} \label{eq: der_gamma_fixed}
    \frac{\partial \gamma(\beta^\ast;S)}{\partial \beta}
=
\frac{\partial \Gamma}{\partial \alpha}
\frac{\partial \alpha(\beta^\ast;S)}{\partial \beta}
+
\frac{\partial \Gamma}{\partial \beta},
\end{equation}
where $\partial \Gamma/\partial \alpha$ and $\partial \Gamma/\partial \beta$ are evaluated at $(\alpha^\ast,\beta^\ast)$. 
\end{lemma}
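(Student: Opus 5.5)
The plan is to apply the implicit function theorem to the first-order conditions of the conditional minimum-distance problem \eqref{eq: MD opt problem}. Define
\[
F(\alpha,\beta) := \frac{\partial g(\alpha,\beta)}{\partial \alpha}' W g(\alpha,\beta),
\]
which is (one half of) the gradient of the objective $Q(\alpha,\beta)=g(\alpha,\beta)'Wg(\alpha,\beta)$ with respect to $\alpha$. By Assumption~\ref{ass: regularity assumptions reference point}(ii), $g(\alpha^\ast,\beta^\ast)=0$. Hence $Q(\alpha^\ast,\beta^\ast)=0$, which is the global minimum of a nonnegative objective, so $\alpha^\ast=\alpha(\beta^\ast;S)$ is a valid selection of the argmin. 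Interiority from Assumption~\ref{ass: regularity assumptions reference point}(i) guarantees that the first-order condition $F(\alpha^\ast,\beta^\ast)=0$ holds.

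Next I compute the Jacobian of $F$ with respect to $\alpha$ at the reference point. Since $g$ is $C^2$ by Assumption~\ref{ass: regularity conditions}(ii), $F$ is $C^1$. Differentiating gives
\[
\partial_\alpha F = G_\alpha' W G_\alpha + \sum_k (\partial^2_{\alpha\alpha} g_k)\,[W g]_k .
\]
The second term vanishes at $(\alpha^\ast,\beta^\ast)$ because $g(\alpha^\ast,\beta^\ast)=0$. By the same argument,
\[
\partial_\beta F(\alpha^\ast,\beta^\ast) = G_\alpha' W G_\beta .
\]
This vanishing of the second-order terms is the one point requiring care, and it is exactly where the assumption $\eta^\ast\in\mathcal N_0$ is used. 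Without it, the Hessian of $g$ would enter the formula.

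The rank condition $\mathrm{Rank}(G_\alpha)=|S|$, together with $W$ positive definite, makes $G_\alpha'WG_\alpha$ positive definite and hence invertible. The implicit function theorem then yields a neighborhood of $\beta^\ast$ and a unique $C^1$ map $\beta\mapsto\alpha(\beta)$ with $F(\alpha(\beta),\beta)=0$ and $\alpha(\beta^\ast)=\alpha^\ast$. Its derivative is
\[
\frac{\partial \alpha(\beta^\ast;S)}{\partial \beta} = -(G_\alpha'WG_\alpha)^{-1}G_\alpha'WG_\beta ,
\]
which is \eqref{eq: der_est_fixed}.

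The main obstacle is to confirm that this local stationary-point branch coincides with the argmin selection $\alpha(\beta;S)$ near $\beta^\ast$. I would argue as follows. Since $\partial_\alpha F$ is positive definite at $(\alpha^\ast,\beta^\ast)$ and continuous, the Hessian of $Q$ in $\alpha$ stays positive definite on a neighborhood. So $Q(\cdot,\beta)$ is strictly convex there, and the IFT branch is its unique local minimizer in that neighborhood. I would then interpret $\alpha(\beta;S)$ as the (local) minimizer selected continuously from $\alpha^\ast$, which is the standard reading in the minimum-distance literature (cf.\ \citet{newey1994large}).

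Finally, \eqref{eq: der_gamma_fixed} follows from the chain rule applied to $\gamma(\beta;S)=\Gamma(\alpha(\beta;S),\beta)$. This is valid because $\Gamma$ is $C^1$ by Assumption~\ref{ass: regularity conditions}(iii) and $\alpha(\cdot;S)$ is $C^1$ near $\beta^\ast$, with all derivatives evaluated at $(\alpha^\ast,\beta^\ast)$.
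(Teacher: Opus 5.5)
Your proposal is correct and follows essentially the same route as the paper's appendix proof. You get $\alpha^\ast=\alpha(\beta^\ast;S)$ because the nonnegative criterion vanishes at the reference point, you apply the implicit function theorem to the first-order-condition map $F=G_\alpha'Wg$, whose second-derivative terms drop out since $g(\alpha^\ast,\beta^\ast)=0$, and you finish with the chain rule. Two small differences: your discussion of why the implicit-function branch is the local minimizer addresses a point the paper leaves implicit, and your appeal to interiority for $F(\alpha^\ast,\beta^\ast)=0$ is harmless but unnecessary, since $F$ vanishes directly once $g=0$.
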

\begin{proof}
    This follows directly from applying the Implicit Function Theorem (IFT) to the first-order conditions of the MD problem in (\ref{eq: MD opt problem}).
\end{proof}
\cite{jorgensen2023sensitivity} derives equivalent expressions in the GMM setting. Since the derivative of the object of interest depends on the units of the fixed parameters, we normalize them using the diagonal matrix
\[
\Sigma_S:=\operatorname{diag}\bigl((\Delta_j)_{j\in S^c}\bigr),
\]
where \(\Delta_j\) determines the scale used for parameter \(j\).  Section~\ref{sec: practical guidance Sigma_S} provides further discussion on how to construct it in practice.
\begin{definition}
We define the sensitivity statistic as
\begin{equation}\label{eq: sensitivity-statistic}
K_S (\eta^*, \Sigma_S)
:= \sqrt{|S^c|}\cdot\Bigl\lVert \frac{\partial \gamma(\beta^\ast;S)}{\partial \beta}\,\Sigma_{S} \Bigr\rVert_{2}.
\end{equation}
Here, $\lVert A \rVert_{2} = \sigma_{\max}(A)$ denotes the spectral norm, where $\sigma_{\max}(A)$ is the largest singular value of $A$. In Section~\ref{sec: Justification}, we provide the formal justification for using $K_S$: we show that $K_S$ is proportional to the worst-case bias induced by a local miscalibration. Equivalently, $K_S$ can be read as the largest first-order change in the object of interest generated by a normalized local perturbation of the fixed parameters, so it summarizes worst-case normalized bias in a single scalar.
\end{definition}

Note that $K_S$ depends on the reference point
$\eta^*$, because the derivative
$\partial\alpha(\beta^\ast;S)/\partial\beta$ is obtained from the IFT using Jacobians
evaluated at $\eta^*$. It also depends on the normalization matrix.

Under range normalization, $\Sigma_S$ plays two roles: 
$(i)$ it removes dependence on arbitrary units of measurement, and 
$(ii)$ it incorporates external information about plausible parameter values 
through bounds, priors, or confidence intervals. As a result, $K_S$ captures 
both the model-implied sensitivity of $\gamma$ to $\beta$ and the uncertainty 
surrounding $\beta^\ast$.

\begin{example}
Suppose both the object of interest and the fixed parameter are scalars, and let
$S^c=\{j\}$. Under range normalization, the sensitivity statistic reduces to
\begin{equation}
\begin{aligned}
    K_S
    &=
    \underbrace{
    \left|
    \frac{\partial \gamma(\beta^\ast;S)}{\partial \beta}
    \right|
    }_{\scriptstyle \text{Model-implied sensitivity}}
    \cdot
    \overbrace{
    \left(\eta_{j,\max}-\eta_{j,\min}\right)
    }^{\scriptstyle \text{Uncertainty about fixed parameter}} .
\end{aligned}
\end{equation}
Thus, the statistic is simply the product of two components: 
$(i)$ the sensitivity of the object of interest to the parameter, and 
$(ii)$ the uncertainty surrounding that parameter's value.
\end{example}

\subsection{Admissible Partitions and Least-sensitive Partition}
\label{sec: admissible_partitions}
This subsection completes the partition-selection rule. We first define the admissible class of partitions and then select, within that class, the partition that minimizes the sensitivity statistic $K_S$.

Admissibility imposes three requirements: the object of interest must depend nontrivially on the estimated parameters, the estimated parameters must be conditionally locally identified, and the estimation error in the object of interest must vanish in mean square as $n$ grows.

\begin{assumption}[Admissibility] \label{ass: admissibility} The partition, $S$, and the reference point, $\eta^*$, satisfy:

\begin{enumerate}[label=\roman*)]
\item Non-trivial partition: $S\in \mathcal S$ where $\mathcal S$ is 
\[
\mathcal S
:=
\Big\{
S\subseteq\{1,\ldots,d_\eta\}:
0<|S|,\ 
\frac{\partial \Gamma(\alpha^*,\beta^*)}{\partial \alpha}\neq 0
\Big\}.
\]
\item Conditional local identification: $\mathrm{rank}\big(G_\alpha\big)=|S|$.
\item Declining second moments of the estimation error: 
\begin{equation}
\label{eq:target_variance_decay}
\mathbb{E}\!\left[
\left\lVert
\widehat{\gamma}(\beta^*;S)-\gamma(\beta^*;S)
\right\rVert_2^2
\right]=o(1).
\end{equation}
\end{enumerate}
\end{assumption}

Assumption \ref{ass: admissibility}$.i)$ ensures the object of interest is not only (locally) explained by the fixed parameters. In the case where a particular structural parameter is the object of interest it would rule out partitions where that parameter is fixed. 

 Assumption \ref{ass: admissibility}$.ii)$ would be the typical regular local identification rank condition for the estimated parameters. 

Third, Section~\ref{sec: Justification} shows that the least-sensitive partition minimizes
worst-case local bias and, under an additional higher-order condition, also minimizes
worst-case mean squared error asymptotically. For this MSE interpretation, the sampling
variation in the estimated object of interest must be asymptotically negligible, so that the
criterion is driven by calibration bias. We therefore restrict attention to partitions satisfying
Assumption~\ref{ass: admissibility}$.iii)$.

Finally, let $\mathcal R$ encode any hard researcher restrictions (for example, parameters
that must always be estimated or always fixed, or upper and lower bounds on the number of
estimated parameters). We define the admissible set as
\[
\mathcal S_{\mathrm{AD}}(\eta^*)\; := \;\{S\in \mathcal{R} \; \textit{ s.t. } \; \textit{Assumption }\ref{ass: admissibility} \textit{ holds}\}.
\]

Given a reference point
$\eta^*$, and a normalization matrix $\Sigma_S$ we evaluate the sensitivity statistic $K_S(\eta^*,\Sigma_S)$ for each partition
$S\in\mathcal S_{\mathrm{AD}}(\eta^*)$ and choose a partition with the smallest value.

\begin{definition}[Least-sensitive partition]
\label{def:least_sensitive_partition}
Conditional on the reference point $\eta^*$, a least-sensitive partition is any solution to
\[
S^*(\eta^*, \{\Sigma_S\})
\in
\arg\min_{S\in\mathcal S_{\mathrm{AD}}(\eta^*)} K_S(\eta^*, \Sigma_S).
\]
\end{definition}

\section{Justification of the Partition Selection Method} \label{sec: Justification}

This section first gives a local minimax-bias justification and then derives an asymptotic worst-case-MSE interpretation. For a given admissible partition $S$, we interpret the choice of partition through a minimax criterion over a local neighborhood of miscalibrations, and relate this perspective to \citet{bonhomme2022minimizing}'s approach to local robust estimation under misspecification.

For a fixed partition $S$, define the local neighborhood around $\beta^*$ by
\begin{equation}
    B_{\epsilon}(\beta^*;S)
    \;:=\;
    \left\{
        \beta_0 \in \mathcal{B}
        \; \textit{s.t.} \;
        \left\lVert \Sigma_S^{-1}(\beta^*-\beta_0) \right\rVert_{2}
        \leq \epsilon \sqrt{|S^c|}
    \right\}.
\end{equation}
The factor $\sqrt{|S^c|}$ scales the radius by the number of fixed parameters, so that local perturbations are comparable across partitions with different cardinalities. Thus, $\epsilon$ denotes the magnitude of the normalized miscalibration of a single fixed parameter. For a given $\epsilon$-neighborhood, define the worst-case bias by
\begin{equation} \label{eq:WorstBias}
    \operatorname{WorstBias}(\beta^*,\epsilon;S)
    \;:=\;
    \max_{\beta_0 \in B_{\epsilon}(\beta^*;S)}
    \left\lVert \gamma(\beta^*; \; S) - \gamma(\beta_0;\; S) \right\rVert_{2}.
\end{equation}

The following lemma connects worst-case bias to the sensitivity statistic $K_S$.

\begin{lemma} \label{lemma: WorstBias}
     Suppose Assumptions \ref{ass: regularity conditions} and \ref{ass: regularity assumptions reference point} hold, and $S\in \mathcal{S}_{\mathrm{AD}}(\eta^*)$. The worst-case bias and the sensitivity statistic are locally proportional:
    \begin{equation}
        \operatorname{WorstBias}(\beta^*,\epsilon;S) = \epsilon K_S + o(\epsilon).
    \end{equation}
\end{lemma}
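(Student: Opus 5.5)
The plan is to linearize the induced map $\gamma(\cdot;S)$ around $\beta^*$, reduce the worst-case bias to the maximization of a linear form over an ellipsoid, and then control the remainder uniformly over the shrinking neighborhood.

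\emph{Differentiability.} By admissibility, $\mathrm{rank}(G_\alpha)=|S|$, so $G_\alpha'WG_\alpha$ is invertible. Since $g$ is $C^2$, the first-order condition $G_\alpha(\alpha,\beta)'Wg(\alpha,\beta)=0$ of (\ref{eq: MD opt problem}) is $C^1$ in $(\alpha,\beta)$. Its Jacobian in $\alpha$ at $\eta^*$ equals $G_\alpha'WG_\alpha$, because $g(\eta^*)=0$ by Assumption~\ref{ass: regularity assumptions reference point} and so the second-derivative term vanishes. The Implicit Function Theorem then gives a $C^1$ local solution $\alpha(\beta;S)$ near $\beta^*$, with derivative given by (\ref{eq: der_est_fixed}). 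Composing with $\Gamma\in C^1$, the map $\beta\mapsto\gamma(\beta;S)$ is $C^1$ near $\beta^*$, with Jacobian $D:=\partial\gamma(\beta^*;S)/\partial\beta$ from (\ref{eq: der_gamma_fixed}). Interiority of $\eta^*$ ensures that, for small $\epsilon$, the ball $B_\epsilon(\beta^*;S)$ lies inside this neighborhood. I take $\alpha(\beta;S)$ to mean this local IFT branch of the argmin.

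\emph{Linearization and the linear problem.} Parametrize $\beta_0=\beta^*-\Sigma_S u$. The constraint then reads $\|u\|_2\le\epsilon\sqrt{|S^c|}$, and $\|\Sigma_S u\|_2\le c\,\epsilon$ with $c=\max_j\Delta_j\sqrt{|S^c|}$. Because $\gamma$ is $C^1$, a first-order Taylor expansion gives
\[
\gamma(\beta^*;S)-\gamma(\beta_0;S)=D\Sigma_S u+r(u),\qquad \sup_{\|u\|_2\le\epsilon\sqrt{|S^c|}}\|r(u)\|_2=o(\epsilon).
\]
The uniformity of the remainder follows from the mean value inequality: $\|r(u)\|_2\le \sup_{\|\beta-\beta^*\|\le c\epsilon}\|D\gamma(\beta)-D\|_2\,\|\Sigma_S u\|_2$, and the supremum tends to $0$ by continuity of the derivative. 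The linear problem is solved exactly by the definition of the spectral norm:
\[
\max_{\|u\|_2\le\epsilon\sqrt{|S^c|}}\|D\Sigma_S u\|_2=\epsilon\sqrt{|S^c|}\,\|D\Sigma_S\|_2=\epsilon K_S.
\]

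\emph{Sandwich.} For every feasible $u$, $\|D\Sigma_S u+r(u)\|_2\le\epsilon K_S+o(\epsilon)$, which gives the upper bound. For the lower bound, take $u^*$ along the top right singular vector of $D\Sigma_S$, scaled to length $\epsilon\sqrt{|S^c|}$. Then $\operatorname{WorstBias}\ge\|D\Sigma_Su^*\|_2-\|r(u^*)\|_2=\epsilon K_S-o(\epsilon)$. Together these give $\operatorname{WorstBias}(\beta^*,\epsilon;S)=\epsilon K_S+o(\epsilon)$.

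The main obstacle is the first step: ensuring that $\alpha(\beta;S)$ is a well-defined, single-valued $C^1$ map on the whole $\epsilon$-ball, which the uniform remainder bound requires. The argmin in (\ref{eq: MD opt problem}) could in principle jump to another branch. I handle this by restricting to the local IFT solution, which is also a strict local minimizer because the Hessian $G_\alpha'WG_\alpha$ is positive definite. The step where $g(\eta^*)=0$ is used to drop the second-order term in the Jacobian of the first-order condition also needs to be stated explicitly.
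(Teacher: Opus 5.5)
Your proposal is correct and follows the paper's route. Like the paper's Lemma~\ref{lemma: derivatives}, you apply the IFT to the first-order condition, with the Hessian term removed by $g(\eta^*)=0$. You then get a remainder that is $o(\epsilon)$ uniformly on the normalized ball, since that ball lies inside a Euclidean ball of radius proportional to $\epsilon$, and the reparametrization reduces the linear problem to $\epsilon\sqrt{|S^c|}\,\|D\Sigma_S\|_2=\epsilon K_S$.

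The differences are cosmetic. You obtain the lower bound by plugging in the top right singular vector, where the paper uses the reverse triangle inequality on the two suprema. You bound the remainder with the mean value inequality and continuity of the derivative, where the paper needs only Fr\'echet differentiability at $\beta^*$. Your explicit restriction to the local IFT branch of the argmin is a caveat the paper leaves implicit.
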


\begin{proof}
    See Appendix \ref{appendix:main-proofs}.
\end{proof}

\begin{assumption} \label{ass: uniq}
    The optimization problem $\min_{S \in \mathcal{S}_{AD}} K_S$ has a unique solution $S^*$.
\end{assumption}

The next theorem shows that minimizing $K_S$ is equivalent to minimizing worst-case local bias.

\begin{theorem} \label{thm: sufficiency}
    Suppose Assumptions \ref{ass: regularity conditions}, \ref{ass: regularity assumptions reference point}, and \ref{ass: uniq} hold. Then there exists a scalar $\mathcal{E} > 0$ such that for all $0 < \epsilon \leq \mathcal{E}$,
    \begin{equation}
        S^*
        \;:=\;
        \arg\min_{S \in \mathcal{S}_{AD}} K_S
        \;=\;
        \arg\min_{S \in \mathcal{S}_{AD}} \operatorname{WorstBias}(\beta^*,\epsilon;S).
    \end{equation}
\end{theorem}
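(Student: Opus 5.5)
The proof relies on Lemma \ref{lemma: WorstBias} and on the admissible set $\mathcal S_{\mathrm{AD}}$ being finite, since it is a subset of the power set of $\{1,\dots,d_\eta\}$. For each $S\in\mathcal S_{\mathrm{AD}}$, Lemma \ref{lemma: WorstBias} gives
\[
\operatorname{WorstBias}(\beta^*,\epsilon;S)=\epsilon K_S + r_S(\epsilon), \qquad r_S(\epsilon)/\epsilon\to 0 \text{ as } \epsilon\downarrow 0 .
\]
Dividing by $\epsilon>0$ leaves the argmin unchanged. It therefore suffices to show that, for all small $\epsilon$, $S^*$ is the unique minimizer of $K_S + r_S(\epsilon)/\epsilon$ over $\mathcal S_{\mathrm{AD}}$.

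The first step uses Assumption \ref{ass: uniq} to extract a strict gap. Define
\[
\delta := \min_{S\in\mathcal S_{\mathrm{AD}},\,S\neq S^*}(K_S - K_{S^*}).
\]
This is a minimum over finitely many strictly positive numbers, so $\delta>0$. If $\mathcal S_{\mathrm{AD}}$ is the singleton $\{S^*\}$, the statement is trivial.

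The second step chooses $\mathcal E$. For each $S$ there is $\mathcal E_S>0$ such that $|r_S(\epsilon)|/\epsilon<\delta/2$ for all $0<\epsilon\le\mathcal E_S$. Set $\mathcal E:=\min_S \mathcal E_S>0$, which is again positive because the minimum is finite. Then for $0<\epsilon\le\mathcal E$ and any $S\neq S^*$,
\[
\frac{\operatorname{WorstBias}(\beta^*,\epsilon;S)-\operatorname{WorstBias}(\beta^*,\epsilon;S^*)}{\epsilon}
= (K_S - K_{S^*}) + \frac{r_S(\epsilon)-r_{S^*}(\epsilon)}{\epsilon}
> \delta-\delta = 0,
\]
so $S^*$ is the unique minimizer of the worst-case bias as well.

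The main obstacle is making sure that Lemma \ref{lemma: WorstBias} can be invoked for every $S\in\mathcal S_{\mathrm{AD}}$ with a remainder that is genuinely $o(\epsilon)$ at a fixed $\beta^*$. The worst-case bias must also be well defined: the maximum in \eqref{eq:WorstBias} must be attained, and $B_\epsilon(\beta^*;S)\subset\mathcal B$ must hold for small $\epsilon$. Attainment follows from continuity of $\gamma(\cdot;S)$ near $\beta^*$ on a compact ellipsoid. Containment follows from interiority in Assumption \ref{ass: regularity assumptions reference point}. Continuity of $\gamma(\cdot;S)$ comes from the implicit function theorem and the rank condition in admissibility. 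Beyond these checks, no uniformity across partitions is needed, since finiteness of $\mathcal S_{\mathrm{AD}}$ lets us take minima of finitely many positive thresholds.
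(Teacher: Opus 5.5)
Your proof is correct and follows essentially the same route as the paper: finiteness of $\mathcal S_{\mathrm{AD}}(\eta^*)$, a strict gap from Assumption~\ref{ass: uniq}, and uniform control of the $o(\epsilon)$ remainders from Lemma~\ref{lemma: WorstBias}. The paper bounds $\max_S |r_S(\epsilon)|$ by $\Delta/3$, whereas you take the minimum of per-partition thresholds with $\delta/2$; these are equivalent. Your added checks on attainment of the maximum and on the ellipsoid lying in the implicit-function neighborhood are sensible; the paper leaves them implicit by working with a supremum.
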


\begin{proof}
    See Appendix \ref{appendix:main-proofs}.
\end{proof}

In the boundary case where all parameters are estimated ($|S^c| = 0$), no parameters are subject to calibration error. The worst-case bias is exactly zero, making the full-estimation partition strictly preferred under our criterion (for full estimation $K_S=0$) whenever it is empirically admissible.

Theorem~\ref{thm: sufficiency} also yields a worst-case mean squared error interpretation. Let
\begin{equation}
\begin{aligned}
MSE(\beta^*,\beta_0,S)
&:= \mathbb{E}\left[
\left(\widehat{\gamma}(\beta^*)-\gamma(\beta_0)\right)'
\left(\widehat{\gamma}(\beta^*)-\gamma(\beta_0)\right)
\right] \\
&=
\operatorname{tr}\left\{
\operatorname{Var}\left(\widehat{\gamma}(\beta^*)\right)
\right\}
+
\left\|
\gamma(\beta^*)-\gamma(\beta_0)
\right\|_2^2 \\
&\quad
+
2
\left(\gamma(\beta^*)-\gamma(\beta_0)\right)'
\left(
\mathbb{E}\left[\widehat{\gamma}(\beta^*)\right]
-
\gamma(\beta^*)
\right) \\
&\quad
+
\left\|
\mathbb{E}\left[\widehat{\gamma}(\beta^*)\right]
-
\gamma(\beta^*)
\right\|_2^2,
\end{aligned}
\end{equation}
where \(\mathbb{E}[\cdot]\) is taken under the fixed sampling law
governing \(\widehat\gamma(\beta^*;S)\). 

When $\widehat{\gamma}(\beta^*;S)$ satisfies the decay condition in
\eqref{eq:target_variance_decay}, namely
\[
\mathbb{E}\!\left[
\left\lVert
\widehat{\gamma}(\beta^*;S)-\gamma(\beta^*;S)
\right\rVert_2^2
\right]=o(1),
\]
the sampling component of the target estimator is asymptotically negligible. Hence, for fixed
$\epsilon$,
\begin{equation}
\textit{MSE}(\beta^*,\beta_0,S)
=
\left\lVert
\gamma(\beta^*;S)-\gamma(\beta_0;S)
\right\rVert_2^2
+
o(1),
\end{equation}
uniformly over $\beta_0\in B_\epsilon(\beta^*;S)$.

This decay requirement is closely related to weak identification. Section~\ref{sec: implementing Admisible set} explains
how we operationalize this restriction using spectral Jacobian information, following
\citet{antoine2024gmm} and \citet{forneron2024detecting}.

\begin{theorem}\label{thm: minimax} Under Assumptions \ref{ass: regularity conditions}, \ref{ass: regularity assumptions reference point}, and \ref{ass: uniq}, the least-sensitive partition minimizes, asymptotically, the mean squared error under worst-case local miscalibration. Specifically, there exists a scalar $\mathcal{E} > 0$ such that for all $0 < \epsilon \leq \mathcal{E}$, 
\begin{equation}
	\min_{S\in\mathcal S_{\mathrm{AD}}(\eta^*)}
\max_{\beta_0\in B_\epsilon(\beta^\ast;S)}
\operatorname{MSE}(\beta^\ast,\beta_0,S)
=
\operatorname{WorstBias}(\beta^\ast,\epsilon;S^*)^2
+
o(1).
\end{equation}
\end{theorem}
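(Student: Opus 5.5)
The plan is to reduce the worst-case MSE problem to the worst-case bias problem already solved in Theorem~\ref{thm: sufficiency}.

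\textbf{Step 1: the MSE reduces to the squared bias uniformly.} Fix $\epsilon\le\mathcal E$ and an admissible $S$. Write $\widehat\gamma(\beta^*)-\gamma(\beta_0)=[\widehat\gamma(\beta^*)-\gamma(\beta^*)]+[\gamma(\beta^*)-\gamma(\beta_0)]$ and expand the square. Set $V_S:=\mathbb E\|\widehat\gamma(\beta^*;S)-\gamma(\beta^*;S)\|_2^2$, which is $o(1)$ by Assumption~\ref{ass: admissibility}.iii). The cross term is bounded by Cauchy--Schwarz and Jensen:
\[
2\,\|\gamma(\beta^*)-\gamma(\beta_0)\|_2\,\|\mathbb E\widehat\gamma(\beta^*)-\gamma(\beta^*)\|_2\le 2\,\|\gamma(\beta^*)-\gamma(\beta_0)\|_2\,V_S^{1/2}.
\]
The bias term is bounded uniformly on $B_\epsilon(\beta^*;S)$ by $\operatorname{WorstBias}(\beta^*,\epsilon;S)$. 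This quantity is finite: by Lemma~\ref{lemma: WorstBias} it equals $\epsilon K_S+o(\epsilon)$, and continuity of $\gamma(\cdot;S)$ on the compact ball (from the IFT, Assumption~\ref{ass: regularity conditions}) also gives finiteness. We therefore get
\[
\bigl|\operatorname{MSE}(\beta^*,\beta_0,S)-\|\gamma(\beta^*;S)-\gamma(\beta_0;S)\|_2^2\bigr|\le V_S+2\,\operatorname{WorstBias}(\beta^*,\epsilon;S)\,V_S^{1/2}=o(1),
\]
uniformly in $\beta_0$.

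\textbf{Step 2: pass to the max over $\beta_0$.} Because $|\max f-\max h|\le\sup|f-h|$, Step 1 gives
\[
\max_{\beta_0}\operatorname{MSE}(\beta^*,\beta_0,S)=\operatorname{WorstBias}(\beta^*,\epsilon;S)^2+o(1)
\]
for each admissible $S$.

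\textbf{Step 3: pass to the min over $S$.} $\mathcal S_{\mathrm{AD}}(\eta^*)$ is a finite set of subsets of $\{1,\dots,d_\eta\}$. The $o(1)$ terms are therefore uniform over $S$, and using $|\min a_S-\min b_S|\le\max_S|a_S-b_S|$ we get
\[
\min_S\max_{\beta_0}\operatorname{MSE}=\min_S\operatorname{WorstBias}(\beta^*,\epsilon;S)^2+o(1).
\]
Since squaring is monotone on nonnegative reals, Theorem~\ref{thm: sufficiency} (which uses Assumption~\ref{ass: uniq} and the choice of $\mathcal E$) shows that for every $0<\epsilon\le\mathcal E$ this minimum is attained at $S^*$. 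This yields $\operatorname{WorstBias}(\beta^*,\epsilon;S^*)^2+o(1)$.

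\textbf{Main obstacle: the asymptotic framing.} The $o(1)$ is as $n\to\infty$ with $\epsilon$ fixed, while the $\mathcal E$ from Theorem~\ref{thm: sufficiency} is an $n$-free local constant. These two must not be conflated. The $\epsilon$ chosen by Theorem~\ref{thm: sufficiency} does not depend on $n$, so the order of limits is harmless. One should also state explicitly that the uniformity in $\beta_0$ holds because $\widehat\gamma(\beta^*;S)$ does not depend on $\beta_0$: only the deterministic bias varies over the ball.
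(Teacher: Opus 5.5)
Your proposal is correct and follows essentially the paper's own proof. You expand the MSE into squared calibration bias, a cross term and the second moment $V_S$ (the paper's $R_S(n)$), and bound the cross term via $\|\mathbb E[\widehat\gamma(\beta^\ast;S)]-\gamma(\beta^\ast;S)\|_2\le V_S^{1/2}$. This gives an error bound $2\operatorname{WorstBias}(\beta^\ast,\epsilon;S)V_S^{1/2}+V_S$ that is uniform over $B_\epsilon(\beta^\ast;S)$ and over the finite admissible set, and Theorem~\ref{thm: sufficiency} then places the minimizer at $S^*$, exactly as in the paper. Your remarks on finiteness of $\operatorname{WorstBias}$ for small $\epsilon$ and on the order of limits ($\epsilon$ fixed, $n\to\infty$) make explicit what the paper leaves implicit when it asserts $\max_{S}\operatorname{WorstBias}(\beta^\ast,\epsilon;S)<\infty$.
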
 
\begin{proof} See Appendix \ref{appendix:main-proofs}. \end{proof}

This interpretation is closely related to \citet{bonhomme2022minimizing}'s approach to robust estimation under misspecification.\footnote{Their notion of misspecification is more general than the classical formulation considered here and includes our setting as a special case.} In both frameworks, the objective can be viewed as minimizing a worst-case mean squared error. The main difference is the decision variable: \citet{bonhomme2022minimizing} optimize over influence functions, whereas we optimize over parameter partitions. A second difference is the asymptotic regime. Their misspecification neighborhoods shrink with the sample size, so bias and variance are asymptotically comparable, whereas in our setting the neighborhood is not indexed by $n$, so the asymptotic MSE is driven by the bias component.

A natural alternative to our sensitivity criterion is to select the partition by minimizing an MSE criterion. Nevertheless, implementing an MSE ranking would require a credible estimate of the full joint covariance matrix of all moments or reduced-form estimates. This is often unavailable when the moments are constructed from multiple datasets, as in many CMD applications. One could use an MSE-based criterion when this covariance matrix is available, or when a reliable joint bootstrap is feasible. In the \cite{nakamura2018high} application, however, different datasets are used to compute the reduced-form parameters entering the CMD estimation, which motivates our sensitivity-based approach.

\begin{remark} [\textbf{Why a local analysis?}]
\emph{First}, local methods are typically much easier to implement than global sensitivity methods, which is important in structural applications where each re-estimation can be costly. \emph{Second}, anchoring the analysis at a reference point is natural empirically, since applied work almost always begins with a first-stage estimation that delivers a candidate value for $\eta^*$. \emph{Third}, many structural analyses proceed under an implicit assumption of \emph{local} miscalibration: the calibrated (fixed) parameters are believed not to be too far from their population values, since otherwise counterfactual and policy conclusions would be difficult to interpret. This perspective is plausible in some applications---for example, for parameters such as discount factors that are often thought to lie in a relatively narrow range---but it need not hold universally. If $\eta^*$ is far from $\eta_0$, a local analysis becomes uninformative, and global sensitivity tools are more appropriate.
\end{remark} 

\begin{remark}[\textbf{Using the selected partition under recalibration}]
    After selecting the least-sensitive partition $S^*(\eta^*)$, the researcher will often want to use it together with an alternative calibration $\tilde{\beta}\neq\beta^*$. Keeping the selected partition fixed, this leads to the updated point
\[
\tilde{\eta}
\;:=\;
\big(\alpha(\tilde{\beta};S^*(\eta^*)),\tilde{\beta}\big).
\]
A natural question is whether the same partition remains optimal when the criterion is re-evaluated at $\tilde{\eta}$. Appendix~\ref{sec: identification S^*} studies this issue and gives sufficient conditions under which a nearby recalibration does not change the selected partition. In particular, it shows that if $\tilde{\eta}$ remains sufficiently close to $\eta^*$, then
\[
S^*(\tilde{\eta}) = S^*(\eta^*).
\]
This provides a formal justification for selecting the partition at a reference point and then using it with nearby alternative calibrations.
\end{remark}

\section{Estimation and Implementation}\label{sec: Implementation}

This section describes how to implement the partition-selection procedure in finite samples.
We begin by presenting the full algorithm.

\begin{algorithm}[htbp]
\caption{Least-Sensitive Partition}
\label{alg:ls-partition}

\begin{algorithmic}[1]

\State \textbf{Input:}
$(S_1,\beta_1)$, $\mathcal R$, $\{\Sigma_S\}_{S\in\mathcal S}$,
$\widehat g$, $\Gamma$, $\widehat W$, $n$

\vspace{0.3em}

\State \textbf{Step 1: First-stage estimation}
\State Compute
\[
\widehat\alpha(\beta_1;S_1)
\in \arg\min_{\alpha}
\widehat g(\alpha,\beta_1;S_1)'\widehat W \widehat g(\alpha,\beta_1;S_1)
\]
\State Set $\widehat\eta := (\widehat\alpha(\beta_1;S_1),\beta_1)$

\vspace{0.3em}

\State \textbf{Step 2: Derivatives at reference point}
\State Compute
\[
\widehat G_\eta = \left.\frac{\partial \widehat g(\eta)}{\partial \eta'}\right|_{\widehat\eta},
\qquad
\widehat \Gamma_\eta = \left.\frac{\partial \Gamma(\eta)}{\partial \eta'}\right|_{\widehat\eta}
\]

\vspace{0.3em}

\For{each $S \in \mathcal S \cap \mathcal R$}

    \State Compute singular values of $\widehat W^{1/2}\widehat G_{\alpha,S}$
    \State Define $\widehat r_S = \#\{j:\widehat\sigma_{j,S}>\tau_n\}$

    \If{$\widehat r_S = |S|$}
        \State Compute
        \[
        \widehat D_{\alpha\beta,S}
        =
        -(\widehat G_{\alpha}'\widehat W\widehat G_{\alpha})^{-1}
        \widehat G_{\alpha}'\widehat W\widehat G_{\beta}
        \]
        \State Compute
        \[
        \widehat D_{\gamma\beta,S}
        =
        \widehat\Gamma_{\alpha}\widehat D_{\alpha\beta,S}
        +
        \widehat\Gamma_{\beta}
        \]
        \State Compute sensitivity
        \[
        \widehat K_S
        =
        \sqrt{|S^c|}
        \left\|\widehat D_{\gamma\beta,S}\Sigma_S\right\|_2
        \]
    \EndIf

\EndFor

\vspace{0.3em}

\State \textbf{Step 5: Selection}
\State
\[
\widehat S^* \in \arg\min_S \widehat K_S
\]

\vspace{0.3em}

\State \textbf{Output:}
$\widehat{\mathcal S}_{AD}$, $\{\widehat K_S\}$, $\widehat S^*$

\end{algorithmic}
\end{algorithm}

The rest of this section discusses the three implementation choices that enter the algorithm:
how to obtain the reference point, how to construct the admissible set of partitions, and how
to choose the normalization matrix used in the sensitivity computation. 

Formal results on the consistency of the estimated least–sensitive partition are given in
Appendix \ref{sec: consistency}.

\subsection{First-stage estimation and the reference point}
\label{sec:first-stage-reference-point}

Starting from a first-stage calibration $(S_1,\beta_1)$, the reference point is
\[
\widehat{\eta}
:=
\bigl(\widehat\alpha(\beta_1;S_1),\beta_1\bigr),
\]
where $\widehat\alpha(\beta_1;S_1)$ is the minimum-distance estimate obtained under the
first-stage partition and calibration. We assume that the sample reference point satisfies
\[
\widehat{\eta} \xrightarrow{p} \eta^*.
\]

A primitive condition ensuring that $\eta^*\in\mathcal N_0$, as required in
Assumption~\ref{ass: regularity assumptions reference point}, is that the first-stage calibration
is compatible with the population equations. That is, there exists
$\alpha\in\mathcal A$ such that
\begin{equation}\label{eq: primitive of identified set reference point}
    g(\alpha,\beta_1;S_1)=0.
\end{equation}

This assumption that the reference point lies in the identified set is crucial for three reasons. 
First, it guarantees \emph{comparability} across partitions: for any candidate partition $S$ 
calibrating the same $\beta^*$, the conditional MD solution $\alpha(\beta^*;S)$ coincides with 
$\alpha^*$. This ensures that the points at which the local derivatives are evaluated for different 
MD problems are comparable. Second, it allows us to \emph{avoid re-estimating} the model for 
every candidate partition. Third, it allows us to omit the Hessian terms in the IFT formula, which are asymptotically negligible at a reference point in the identified set. The asymptotic justification of the direct derivative computation, 
which is detailed in the next subsection, relies on these properties.

In overidentified GMM or CMD settings, the  condition in (\ref{eq: primitive of identified set reference point}) is empirically testable. One can
use identification-robust calibration-validity tests, as in \cite{alegre2023robust}, or
related weak-identification robust methods such as \cite{forneron2024detecting}.

\subsection{Derivative computation without re-estimation}
The statistic $\widehat{D}_{\alpha \beta,S}$, proposed in the algorithm, should be interpreted as a direct plug-in computation of the
local derivative, not as the exact finite-sample derivative of every partition-specific
re-estimation map. To see the distinction, fix a candidate partition \(S\) and write the
reference point under the \(S\)-specific ordering as
\[
    \widehat \eta = (\widehat\alpha,\widehat\beta).
\]
A literal finite-sample computation of the slope of the re-estimation map would first solve
\[
\widehat\alpha(\widehat\beta;S)
\in
\arg\min_{\alpha\in A_S}
\widehat g(\alpha,\widehat\beta;S)'
\widehat W
\widehat g(\alpha,\widehat\beta;S),
\]
and then differentiate the corresponding first-order conditions with respect to the fixed
parameters. Doing this separately for each partition would require re-estimating the model many
times. Importantly, $\widehat\alpha$, the $S$-coordinates of the reference point, need not coincide with
$\widehat\alpha(\widehat\beta;S)$, the MD solution obtained after fixing $\widehat\beta$. 

The exact finite-sample derivative would also include additional Hessian terms weighted by the residuals
\(\widehat g(\widehat{\alpha}(\widehat{\beta}),\widehat{\beta})\); under Assumption~\ref{ass: regularity assumptions reference point}, which places the reference point in the identified set, these terms vanish asymptotically.

The algorithm avoids this re-estimation step. Instead, for each partition \(S\), it uses the derivative
bundle evaluated once at the common reference point \(\widehat\eta\) and computes
\[
\widehat D_{\alpha\beta}(S)
=
-
\left(
\widehat G_{\alpha}(\widehat\eta;S)'
\widehat W
\widehat G_{\alpha}(\widehat\eta;S)
\right)^{-1}
\widehat G_{\alpha}(\widehat\eta;S)'
\widehat W
\widehat G_{\beta}(\widehat\eta;S).
\]
The justification is asymptotic. Under the regularity conditions in
Appendix~\ref{app:direct-derivative}, and using the reference point in the identified set, 
\(\widehat D_{\alpha\beta}(S)\) provides a consistent asymptotic approximation to the population slope,
\[
\left.
\frac{\partial\widehat\alpha(\beta;S)}{\partial \beta'}
\right|_{\beta=\widehat\beta}
=
\widehat D_{\alpha\beta}(S)+o_p(1),
\qquad
\widehat D_{\alpha\beta}(S)
=
\frac{\partial \alpha(\beta^*;S)}{\partial \beta'}+o_p(1).
\]
Therefore, re-estimating \(\widehat\alpha(\widehat\beta;S)\) for every
candidate partition would only change the derivative by an \(o_p(1)\) term. The same
argument applies to the derivative of the target object.

\subsection{Practical guidance on the implementation of the admissible set of partitions}\label{sec: implementing Admisible set}
The goal of this subsection is to explain how we implement the conditional local identification and declining-variance conditions used in the definition of the admissible set of partitions. We implement both restrictions using singular values of the weighted Jacobian of the estimating equations. The basic idea is that small singular values of \(W^{1/2}G_{\alpha}\) make the inverse problem unstable, which can prevent the variance of the induced estimator of \(\gamma\) from declining with the sample size.

For a candidate partition \(S\), let
\[
C_S := (W^{1/2}G_{\alpha})^+,
\]
where \((\cdot)^+\) denotes the Moore--Penrose generalized inverse. The norm
\(\|\cdot\|_2\) denotes the spectral norm, i.e., the largest singular value of a matrix. When \(W^{1/2}G_{\alpha}\) has full column rank,
\begin{equation}
\label{eq:minimum-singular-value}
\|C_S\|_2
=
\frac{1}{\sigma_{\min}\{W^{1/2}G_{\alpha}\}}.
\end{equation}
Thus, small singular values of the weighted Jacobian correspond to a large Moore--Penrose inverse and indicate that the partition is close to a weak-identification region.

To keep notation light, we suppress the local-sequence index in the main text.\footnote{Strictly speaking, the population Jacobian and the matrix \(C_S\) should be indexed by \(n\) along the local sequence. Appendix \ref{appendix: weak identification and Sad} gives the formal local-sequence argument.} 

Under local curvature restrictions on \(g\) and regularity conditions on \(\Gamma\), Appendix \ref{appendix: weak identification and Sad} shows that if
\[
\|C_S\|_2=o(\sqrt{n}),
\]
then
\begin{equation}\label{eq: second moment of gamma declining bounds}
    \mathbb{E}\!\left[
    \|\widehat{\gamma}(\beta^*)-\gamma(\beta^*)\|^2
    \right]
    =
    o(1),
    \qquad
    \operatorname{tr}\!\left\{
    \operatorname{Var}\big(\widehat{\gamma}(\beta^*)\big)
    \right\}
    =
    o(1).
    \end{equation}

    Equations \eqref{eq:minimum-singular-value} and \eqref{eq: second moment of gamma declining bounds} motivate a hard-thresholding rule for the singular values of the estimated weighted Jacobian. For each partition \(S\), let
    \[
    \widehat\sigma_{1,S}\geq \widehat\sigma_{2,S}\geq \cdots
    \]
    denote the singular values of \(\widehat W^{1/2}\widehat G_{\alpha,S}\). We estimate the conditional rank of partition \(S\) by
    \begin{equation}
    \widehat r_S
    =
    \#\left\{
    j:\widehat\sigma_{j,S}>\tau_n
    \right\},
    \qquad
    \tau_n=\sqrt{\frac{\log n}{n}},
    \end{equation}
    where the threshold follows \cite{forneron2024detecting}. A partition is kept only if
    \[
    \widehat r_S=|S|.
    \]
    This rule removes partitions with exact rank failure and partitions whose weighted Jacobian is numerically close to rank deficient. The estimated admissible set is therefore
    \[
    \widehat{\mathcal S}_{AD}(\widehat{\eta})
    =
    \left\{
    S\in \mathcal S\cap \mathcal{R}:
    \widehat r_S=|S|
    \right\}.
    \]

\subsection{Practical Guidance on the Normalization Matrix}
\label{sec: practical guidance Sigma_S}

The normalization matrix $\Sigma_S$ defines the metric used to measure local miscalibrations of the fixed parameters. Its purpose is to make perturbations comparable across structural parameters that may be expressed in different units. This subsection explains how we construct this metric, and how we conduct robustness analysis.

Before evaluating candidate partitions, we assign each structural parameter $j$ an interval
\[
    [\eta_{j,\min},\eta_{j,\max}].
\]
We use the range $\eta_{j,\max}-\eta_{j,\min}$. To obtain the ranges, we can use one of the following options.

\begin{remark}[\textbf{Sources of Bounds}]
The intervals $(\eta_{j,\min},\eta_{j,\max})$ should be chosen before evaluating the candidate partitions and should be based on conservative restrictions. In the applications below we use three sources.

\begin{enumerate}[label=\roman*)]
\item \textbf{Parameter-space bounds.} When the model imposes natural bounds on a parameter, we can use those bounds directly. 

\item \textbf{Model-implied economic bounds.} When a parameter is more naturally disciplined through an economic object—for example, the steady-state interest rate—, we impose bounds on that object and map them into the structural parameter, say the discount factor. For example, bounds on steady-state markups can be translated into bounds on the elasticity of substitution across varieties.

\item \textbf{Well-established external intervals.} When direct external evidence is available, we can use published intervals or widely used conservative quantiles from the relevant literature. For example, in the empirical application we use external evidence on the slope of the Phillips curve given by \cite{hazell2022slope}.
\end{enumerate}
\end{remark}

Because the normalization matrix encodes what the researcher regards as a plausible
calibration error, robustness to this choice is an important part of the implementation.
The baseline normalization should therefore be interpreted as the main specification, not
as the only admissible metric. When several normalizations are equally defensible, the
researcher should verify whether the selected partition and the substantive conclusions
are stable across them.

A first robustness exercise is to replace the baseline normalization by a family of
plausible normalization matrices. Let \(\mathcal{M}_S\) denote the set of normalization
matrices considered plausible for partition \(S\). Instead of selecting the partition that
minimizes \(K_S(\eta^*,\Sigma_S)\) for a single matrix, one can compute
\[
S^{rob} \in \arg\min_{S\in \mathcal{S}_{\mathrm{AD}}(\eta^*)}
\max_{\Sigma_S\in \mathcal{M}_S} K_S(\eta^*,\Sigma_S).
\]

A second exercise is diagnostic. For each partition, define the normalized Jacobian
\[
A_S := D_{\gamma\beta,S}\Sigma_S,
\qquad
D_{\gamma\beta,S}:=\frac{\partial \gamma(\beta^*;S)}{\partial \beta'}.
\]
Let \(v_S\) be a right singular vector associated with the largest singular value of
\(A_S\), equivalently an eigenvector of \(A_S'A_S\) associated with its largest eigenvalue. The squared entries of $v_S$ quantify the proportion of the worst-case normalized perturbation allocated to each calibrated parameter.

These contribution measures identify which calibrated parameters drive the sensitivity
statistic. A targeted robustness check is then to widen the normalization bounds only for
the parameters with the largest contributions and recompute \(K_S\), the selected
partition, and the implied worst-case bounds. This stress test asks whether the empirical
conclusion depends on tight bounds for the parameters that matter most for worst-case
miscalibration. If the selected partition and the substantive conclusions remain stable
after widening those influential bounds, the robustness of the result is more credible.

\section{Empirical Application}\label{sec: empirical application}

\citet{nakamura2018high} provide evidence of monetary non-neutrality using high-frequency responses of real interest rates, expected inflation, and output growth to Federal Reserve announcements. A central implication of their analysis is that monetary policy affects the economy not only through conventional channels, but also by shifting market beliefs about broader economic fundamentals via \emph{information effects}. They study this mechanism by extending a New Keynesian model to incorporate an information effect and estimating the model by classical minimum distance (CMD), matching model-implied impulse response functions (IRFs) to reduced-form IRFs obtained via local projections.

We use their environment as a laboratory for our partition-selection criterion. Our object of interest is the information-effect parameter $\gamma$. Throughout, we hold fixed the model, the CMD objective, and the reduced-form moments; what varies across implementations is only the partition of structural parameters into estimated and calibrated (fixed) components. In this setting, the mapping $g(\cdot;\, S)$ depends on $S$ only through the relabeling of estimated and fixed parameters, while its functional form remains constant across partitions. The number of reduced-form parameters matched in CMD is $d_{m}=33$, and the number of structural parameters is $d_{\eta}=10$.

This application illustrates why we use the sensitivity-based statistic rather than a plug-in MSE criterion. The CMD reduced-form objects are constructed from different datasets, so the full joint covariance matrix of the stacked reduced-form parameters is not identified. In particular, the cross-covariances across moment blocks estimated from different datasets are not generally identified from the reported reduced-form estimates. 

\begin{table}[!t]
	\centering
	\begin{threeparttable}
		\caption{\textbf{Estimated and Fixed Parameters in }\cite{nakamura2018high}}
		\label{tab:parameters_NK18}
		\small
\setlength{\tabcolsep}{3pt}
\begin{tabular*}{\textwidth}{@{\extracolsep{\fill}}l p{0.52\textwidth} c@{}}
			\hline \hline
			\textit{Notation} & \textit{Definition} & \textit{Value} \\
			\hline \hline
			\multicolumn{3}{c}{\textit{Fixed parameters}} \\ \hline
			$\rho$   & \textit{Subjective discount factor} & 0.99 \\
			$\delta$ & \textit{Nominal rigidity} & 0.75 \\
			$\omega$ & \textit{Elasticity of marginal cost to output} & 2 \\
			$\theta$ & \textit{Elasticity of substitution across varieties} & 10 \\
			$\sigma$ & \textit{Intertemporal elasticity of substitution} & 0.50 \\
			$b$      & \textit{Consumption habit} & 0.90 \\
			\hline
			\multicolumn{3}{c}{\textit{Estimated parameters}} \\ \hline
			$\rho_{1}$ & \textit{Autoregressive first root of monetary shock} & 0.90 \\
			$\rho_{2}$ & \textit{Autoregressive second root of monetary shock} & 0.79 \\
			$\kappa \zeta \times 10^{5}$ & \textit{Phillips curve slope} & 11.2 \\
			$\gamma$ & \textit{Information effect} & 0.675 \\
			\hline \hline
		\end{tabular*}
		\begin{tablenotes}
			\item \small Notes: In the original paper, three additional fixed structural parameters (the inflation target shock, the endogenous feedback of the Taylor rule, and the coefficient of lagged inflation) are omitted here to streamline the analysis. 
		\end{tablenotes}
	\end{threeparttable}
\end{table}

Table~\ref{tab:parameters_NK18} reports the structural parameters and the values reported in \citet{nakamura2018high}. We evaluate all local objects at the empirical reference point implied by their estimates, and we denote the corresponding estimate of the information effect by $\widehat \gamma$. \cite{alegre2023robust} test the Nakamura--Steinsson benchmark calibration on a
feasible subset of moments and do not reject it, supporting its use as the local reference point.

\subsection{Informative Range Normalization}
In this New Keynesian model, credible external information about the calibrated parameters is available, so we can use it to enrich the sensitivity analysis by tightening the bounds used in the normalization.

Because ad hoc modifications of the normalization ranges could potentially change the analysis in arbitrary ways, it is important to be transparent about how these tighter bounds are chosen. It is also important to conduct robustness checks for any parameter whose normalization plays a central role in determining the selected partition.

We proceed as follows. First, we explain, parameter by parameter, how the informative bounds are chosen. We then present the main results. Finally, we conduct a robustness exercise for the discount-factor bounds in Subsection \ref{sec:Robustnesssigma_S}, since this parameter is a key driver of the main results.

\subsubsection{Discount factor \texorpdfstring{$\rho$}{rho}.} We follow \citet{schorfheide2000loss} and set $\rho\sim \mathrm{Beta}(\mu=0.993,\ \sigma=0.002)$.
The normalization bounds are the $1\%$ and $99\%$ quantiles of this prior.
Similar tight priors are standard in estimated DSGE work; see, e.g.,
\citet{lubik2005bayesian}, \citet{smets2007shocks}, and \citet{an2007bayesian}.
This reflects the view that the discount factor is tightly pinned down by the steady-state relationship with the steady-state intereste rate.

\subsubsection{Nominal rigidity \texorpdfstring{$\delta$}{delta}.}
We follow \citet{smets2007shocks} and set $\delta\sim \mathrm{Beta}(\mu=0.5,\ \sigma=0.1)$, taking
the $0.5\%$ and $99.5\%$ quantiles as bounds.
Closely related priors are used by 
\citet{rabanal2005comparing}, and \citet{lubik2005bayesian}.

\subsubsection{Intertemporal elasticity parameter \texorpdfstring{$\sigma$}{sigma}.}
Following \citet{smets2007shocks}, we place a Normal prior on $1/\sigma$:
\begin{equation}
\frac{1}{\sigma}\sim \mathcal{N}(1.5,\ 0.375).
\end{equation}
We take the $2.5\%$ and $97.5\%$ quantiles of $1/\sigma$ and invert them to obtain the implied bounds
$[\sigma_{\min},\sigma_{\max}]$ (noting that inversion typically yields an asymmetric interval in
$\sigma$).
\subsubsection{Habit formation \texorpdfstring{$b$}{b}.}
As in \citet{smets2007shocks}, we set $b\sim \mathrm{Beta}(\mu=0.7,\ \sigma=0.1)$ and use the $1\%$
and $99\%$ quantiles as bounds.
These values capture the empirically plausible region in which habits provide internal propagation
without implying nearly unit root consumption dynamics.

\subsubsection{Monetary shock persistence \texorpdfstring{$\rho_1$ and $\rho_2$}{rho1 and rho2}.}
For the autoregressive roots governing the monetary shock, we use the natural positive bounds implied by
stationarity and the model's parametric restrictions.

\subsubsection{Elasticity of marginal cost to output \texorpdfstring{$\omega$}{omega}.}
We follow \citet{smets2007shocks} by anchoring $\omega$ to more primitive objects: the labor share $a$
and the Frisch elasticity $v$.
Using the model-implied mapping $\omega= (v^{-1}+1-a)/a$, we propagate
uncertainty from $(a,v)$ into $\omega$.
Concretely, we set $a\in\{1/2,\ 3/4\}$ and
\begin{equation}
v\sim \mathcal{N}(2,\ 0.75),
\end{equation}
and compute the implied bounds on $\omega$ from conservative quantiles of $v$ ($2.5\%$ and $97.5\%$) and the endpoints for
$a$.
This construction yields economically interpretable variation in $\omega$ while remaining grounded in
well-studied labor-supply and factor-share objects.

\subsubsection{Slope of the Phillips curve.}
We discipline the range for the Phillips-curve slope using
the full-sample tradeable-demand IV estimate from \citet{hazell2022slope}. Since the
units differ, we map it into \cite{nakamura2018high} units using the following model-implied relationship:
\[
\kappa\zeta
=
\frac{v}{1+\omega\theta}\kappa^{Hazell},
\]
where \(v\) is the Frisch labor-supply elasticity, \(\omega\) is the elasticity of
marginal cost with respect to output, and \(\theta\) is the elasticity of substitution
across varieties. Using \(v=1\), which corresponds to a labor share of $2/3$ when $\omega=2$, the confidence
interval yields
\[
\kappa\zeta\in [6.30,52.73]\times 10^{-5}.
\]

\subsubsection{Elasticity of substitution across varieties \texorpdfstring{$\theta$}{theta}.}
We use the steady-state relationship between the markup $M$ and $\theta$,
\begin{equation}
M \;=\; \frac{\theta}{\theta-1},
\end{equation}
and choose $M\in[1.05,\ 1.30]$, i.e.\ markups between $5\%$ and $30\%$.
We then map this interval into bounds for $\theta$.
This pins down $\theta$ using a primitive object (markups) with a clear economic interpretation and
avoids implausible values that would imply either near-perfect competition or excessively high steady-state
profit shares.
\begin{table}[!t]
	\centering
	\begin{threeparttable}
		\caption{\textbf{Ranges used for normalization of parameter units}}
		\label{tab:priors_NK18}
		\renewcommand{\arraystretch}{1.5}
		\begin{tabular}{@{}lccc@{}}
			\hline \hline
			\textit{Parameter} & \textit{Definition} & $\eta_{j,\min}$ & $\eta_{j,\max}$ \\
			\hline \hline
			$\rho$     & \textit{Subjective discount factor} & 0.987 & 0.997 \\
			$\delta$   & \textit{Nominal rigidity} & 0.25 & 0.75 \\
			$\omega$   & \textit{Elasticity of marginal cost to output} & 0.72 & 4.77 \\
			$\theta$   & \textit{Elasticity of substitution across varieties} & 4.3 & 21 \\
			$\sigma$   & \textit{Intertemporal elasticity of substitution} & 0.447 & 1.307 \\
			$\rho_{1}$ & \textit{Autoregressive first root of monetary shock} & 0 & 1 \\
			$\rho_{2}$ & \textit{Autoregressive second root of monetary shock} & 0 & 1 \\
			$\kappa \zeta \times 10^{5}$ & \textit{Phillips curve slope} & 6.3 & 52.73 \\
			$b$        & \textit{Consumption habit} & 0.36 & 0.94 \\
			\hline \hline
		\end{tabular}
	\end{threeparttable}
\end{table}

\subsection{Main Results}
Using the hard-thresholding rank estimator proposed in Section \ref{sec: implementing Admisible set}, we estimate the set of admissible partitions. In the present application, this procedure yields a total of 60 admissible partitions. For readability, we present a representative subset in the main text; full tables are reported in \ref{appendix:MainresultsNS18}.

For each admissible partition $S$, we compute the sensitivity statistic $\widehat{K}(\widehat{\beta};\,S)$, which measures the magnitude of worst-case local bias in $\gamma$ induced by small miscalibrations. To aid interpretation, we report \emph{$5\%$ worst-case bounds}, defined as $\widehat \gamma \pm 0.05\,\widehat{K}(\widehat{\beta};\,S)$, which correspond to a first-order approximation of the worst-case bias of size $\epsilon = 5\%$. 

Table~\ref{tab:ns_sensitivity_informative} reports these statistics for the representative subset of admissible partitions. For comparison, we also report the implementation in \citet{nakamura2018high} (labeled \textit{Original}), even though it is not admissible under our rank hard-thresholding criterion.

\begin{table}[!t]
	\centering
	\begin{threeparttable}
		\caption{\textbf{Admissible partitions and sensitivity statistic $\widehat{K}(\widehat{\beta};\,S)$}}
		\label{tab:ns_sensitivity_informative}
		\renewcommand{\arraystretch}{1.5}
		\begin{tabular}{@{}lcccc@{}}
			\hline \hline
			\textit{Name} & \textit{Estimated} & \textit{Calibrated} & $\widehat{K}(\widehat{\beta};\,S)$ & $5\%$ bounds \\
			\hline \hline
			\textit{Original} & $( \rho_1, \rho_2, \gamma, \kappa\zeta)$ &$(\rho,\delta,\omega,\theta,\sigma,b) $ & 1.82 & [0.58, 0.768] \\
			$S^*$ & $(\delta, \sigma, \rho_{1}, \rho_{2}, \gamma, b)$ & $(\rho, \omega, \theta, \kappa\zeta)$ & 0.06 & [0.67, 0.68] \\
			$S_{2}$ & $(\rho, \delta, \rho_{1}, \rho_{2}, \gamma, b)$ & $(\omega, \theta, \sigma, \kappa\zeta)$ & 0.32 & [0.66, 0.69] \\
			$S_{3}$ & $(\rho, \rho_{1}, \rho_{2}, \gamma, b)$ & $(\delta, \omega, \theta, \sigma, \kappa\zeta)$ & 0.36 & [0.66, 0.70] \\
			$S_{4}$ & $(\rho, \delta, \rho_{2}, \gamma, b)$ & $(\omega, \theta, \sigma, \rho_{1}, \kappa\zeta)$ & 0.38 & [0.66, 0.70] \\
			$S_{12}$ & $(\rho, \delta, \sigma, \rho_{1}, \rho_{2}, \gamma)$ & $(\omega, \theta, \kappa\zeta, b)$ & 1.05 & [0.63, 0.73] \\
			$S_{14}$ & $(\rho, \sigma, \rho_{1}, \rho_{2}, \gamma, b)$ & $(\delta, \omega, \theta, \kappa\zeta)$ & 1.19 & [0.62, 0.74] \\
			$S_{25}$ & $(\delta, \rho_{1}, \rho_{2}, \gamma)$ & $(\rho, \omega, \theta, \sigma, \kappa\zeta, b)$ & 1.82 & [0.59, 0.77] \\
			$S_{39}$ & $(\rho, \sigma, \rho_{2}, \gamma, b)$ & $(\delta, \omega, \theta, \rho_{1}, \kappa\zeta)$ & 3.70 & [0.49, 0.86] \\
			$S_{46}$ & $(\rho, \delta, \sigma, \gamma)$ & $(\omega, \theta, \rho_{1}, \rho_{2}, \kappa\zeta, b)$ & 9.25 & [0.22, 1.14] \\
			$S_{60}$ & $(\gamma)$ & $(\rho, \delta, \omega, \theta, \sigma, \rho_{1}, \rho_{2}, \kappa\zeta, b)$ & 11.07 & [0.12, 1.23] \\
			\hline \hline
		\end{tabular}
	\end{threeparttable}
\end{table}
Table~\ref{tab:ns_sensitivity_informative} reports the sensitivity statistic
\(\widehat K(\widehat\beta;S)\) for a representative subset of admissible
partitions. Across the full
set of admissible partitions, \(\omega\), \(\theta\), and the Phillips-curve
slope \(\kappa_\zeta\) are never estimated. This suggests that these parameters
are weakly identified in the local rank sense used to construct the admissible
set. 

Conditional on this admissibility restriction, the choice of partition matters
substantially. The least-sensitive partition,
\(S^{\ast}\), estimates
\((\delta,\sigma,\rho_1,\rho_2,\gamma,b)\) and fixes
\((\rho,\omega,\theta,\kappa_\zeta)\). It delivers
\(\widehat K(\widehat\beta;S^{\ast})=0.06\), so a \(5\%\) normalized
miscalibration generates only the local worst-case interval
\([0.67,0.68]\). At the opposite end of the admissible set, the statistic rises
to \(11.07\), producing the much wider interval \([0.12,1.23]\). The original
Nakamura--Steinsson implementation is not admissible under the rank criterion
and has \(\widehat K=1.82\). Hence the original partition is not in the extreme tail, but
it is far from the least-sensitive admissible partition. 

This heterogeneity is not well summarized by the number of estimated
parameters. Among the largest admissible partitions, changing the identity of a
single calibrated parameter moves the statistic from \(0.06\) to values above
one. Fixing \(\rho\), as in \(S^{\ast}\), is much less costly than fixing
\(\sigma\), \(b\), or \(\delta\). The contribution table clarifies why: in the
low- and middle-sensitivity partitions, worst-case bias is typically
concentrated in one preference or propagation parameter, such as
\(\rho\), \(\sigma\), \(\delta\), or \(b\). Robustness therefore depends less on
estimating many parameters per se than on estimating the parameters through
which calibration errors are most strongly transmitted to \(\gamma\).

The role of the discount factor illustrates this point. In \(S^{\ast}\), almost
all remaining sensitivity comes from the calibrated discount factor \(\rho\).
This might suggest that the result is mechanically driven by the tight
normalization range imposed on \(\rho\). The comparison across partitions argues
against this interpretation. Several low-sensitivity alternatives, such as
\(S_2\), \(S_3\), and \(S_4\), estimate \(\rho\) and still deliver small values
of \(\widehat K\). Moreover, the robustness exercise in
Table~\ref{tab: Robustness to discount factor} shows directly how much the
discount-factor range can be widened before \(S^{\ast}\) ceases to be the
least-sensitive partition. Thus, the tight bound on \(\rho\) helps explain the
ranking, but it does not trivially determine the conclusion.

Finally, the high-sensitivity tail has a clear economic structure. Extreme
fragility appears when the persistence of the monetary-policy shock is imposed
by calibration rather than disciplined by the data. Once both persistence roots
are fixed, \(\widehat K\) jumps into the range between roughly \(9\) and \(11\),
and the contribution decomposition shows that \(\rho_1\), with a smaller role
for \(\rho_2\), accounts for almost all of the statistic. Thus, for the
information-effect parameter, the key positive prescription is to estimate the
monetary-shock persistence parameters, especially \(\rho_1\). By contrast,
calibrating the weakly identified Phillips-curve and markup-related block
\((\omega,\theta,\kappa_\zeta)\) appears comparatively harmless for local
robustness.
\begin{table}[!t]
	\centering
	\begin{threeparttable}
		\caption{\textbf{Contribution of fixed parameters to the sensitivity statistic}}
		\label{tab:ns_contributions_informative}
		\renewcommand{\arraystretch}{1.5}
		\begin{tabular}{@{}lcccccccccc@{}}
			\hline \hline
			\textit{Name} & $\rho$ & $\delta$ & $\omega$ & $\theta$ & $\sigma$ & $\rho_{1}$ & $\rho_{2}$ & $\kappa\zeta$ & $\gamma$ & $b$ \\
			\hline \hline
			\textit{Original} & 0.0\% & 0.0\% & 0.0\% & 0.0\% & 0.0\% & - & - & - & - & 100.0\% \\
			$S^*$ & 99.6\% & - & 0.4\% & 0.0\% & - & - & - & 0.0\% & - & - \\
			$S_{2}$ & - & - & 0.0\% & 0.0\% & 100.0\% & - & - & 0.0\% & - & - \\
			$S_{3}$ & - & 22.9\% & 2.0\% & 4.3\% & 70.7\% & - & - & 0.2\% & - & - \\
			$S_{4}$ & - & - & 0.0\% & 0.0\% & 87.6\% & 12.4\% & - & 0.0\% & - & - \\
            $S_{12}$ & - & - & 0.0\% & 0.0\% & - & - & - & 0.0\% & - & 100.0\% \\
            $S_{14}$ & - & 78.0\% & 6.8\% & 14.7\% & - & - & - & 0.5\% & - & - \\
			$S_{25}$ & 0.0\% & - & 0.0\% & 0.0\% & 0.0\% & - & - & 0.0\% & - & 100.0\% \\
			$S_{39}$ & - & 10.1\% & 0.9\% & 1.9\% & - & 87.1\% & - & 0.1\% & - & - \\
			$S_{46}$ & - & - & 0.0\% & 0.0\% & - & 80.0\% & 18.1\% & 0.0\% & - & 1.9\% \\
			$S_{60}$ & 0.0\% & 0.0\% & 0.0\% & 0.0\% & 0.0\% & 82.9\% & 15.0\% & 0.0\% & - & 2.1\% \\
			\hline \hline
		\end{tabular}
	\end{threeparttable}
\end{table}
Table~\ref{tab:ns_contributions_informative} decomposes
\(\widehat K(\widehat\beta;S)\) into the relative contribution of each fixed
parameter. The main message is that sensitivity is typically highly
concentrated: in most partitions, one fixed parameter accounts for almost all
of the statistic. In the least-sensitive partition \(S^{\ast}\), essentially
all residual sensitivity comes from the discount factor \(\rho\), which accounts
for \(99.6\%\) of the total. In \(S_2\), the contribution is entirely due to
\(\sigma\). In \(S_3\), most of the contribution comes from \(\sigma\)
(\(70.7\%\)), with a secondary role for nominal rigidity \(\delta\)
(\(22.9\%\)). The same concentration appears among admissible partitions that
estimate the maximum number of parameters: in \(S_{12}\), the sensitivity is entirely driven
by habit formation \(b\), whereas in \(S_{14}\), most of the sensitivity comes
from \(\delta\), with smaller contributions from \(\omega\) and \(\theta\).

Another important feature is that the parameters that are always fixed because of weak-identification concerns---\(\omega\), \(\theta\), and \(\kappa_\zeta\)---have close to zero contribution to worst-case bias. In particular, the Phillips curve slope \(\kappa_\zeta\) contributes essentially zero, and \(\omega\) and \(\theta\) are almost never dominant sources of sensitivity. Thus, although these parameters must be calibrated, local miscalibration of them would not have large effects on the object of interest. By contrast, in the low- and middle-sensitivity regions of the admissible set, most of the sensitivity comes from preference and propagation parameters such as \(\rho\), \(\sigma\), \(\delta\), and \(b\).

The high-sensitivity tail has a different structure. Once monetary-shock
persistence is fixed rather than estimated, the contribution of \(\rho_1\)
becomes dominant. In \(S_{39}\), for example, \(\rho_1\) accounts for
\(87.1\%\) of total sensitivity. When both persistence roots are fixed, as in
\(S_{46}\), \(\rho_1\) accounts for \(80.0\%\) and \(\rho_2\) for \(18.1\%\).
The same pattern appears in the most fragile partitions: in \(S_{58}\),
\(\rho_1\) and \(\rho_2\) account for \(85.2\%\) and \(14.2\%\), respectively;
in \(S_{59}\), for \(77.4\%\) and \(22.6\%\); and in \(S_{60}\), for \(82.9\%\)
and \(15.0\%\). The economic interpretation is therefore clear: the most
fragile partition strategies are those that force the persistence of the monetary
shock to come from calibration rather than estimation. From a robustness
perspective, the most valuable discipline is to estimate the shock-persistence
parameters, especially \(\rho_1\), while calibration errors in the slope of the Phillips curve,
\(\kappa_\zeta\), and other parameters such as \(\omega\) and \(\theta\) are comparatively harmless.

Because the estimated admissible set depends on the singular-value cutoff used in the rank estimator, Appendix \ref{appendix: weakid_tolerance} reports a robustness exercise with respect to this implementation choice. In that exercise, we replace the baseline cutoff $\tau_n=(\log n/n)^{1/2}$ by $\tau_n(a)=(\log n/n)^a$ for $a\in\{0.50,0.60,0.70,0.80,0.90,1.00\}$, recompute the admissible set, and rank partitions again by the sensitivity statistic. The least-sensitive partition remains $S^*$ for all thresholds considered. For $a\leq 0.80$, both the admissible set and the top-five ranking are unchanged; more permissive cutoffs admit additional partitions but do not overturn the main ranking.

\subsection{Robustness to discount-factor normalization bounds}
\label{sec:Robustnesssigma_S}

The discount-factor normalization is a natural place to check robustness. $\rho$ maps into the steady-state annual nominal interest rate $r$ according to
\begin{equation}
\rho \;=\; \frac{1}{1+r/400}
\qquad\Longleftrightarrow\qquad
r \;=\; 400\left(\frac{1}{\rho}-1\right).
\end{equation}
Thus, small numerical changes in the range for $\rho$ can correspond to large
economic changes in steady-state interest rates.

The exercise below asks how far the discount-factor bound must be relaxed before the
partition ranking changes. For each value of $\rho_{\min}$ we
recompute the sensitivity statistics for the least-sensitive partition, its closest competitors, and the
original partition, and report the annual rate implied by $\rho_{\min}$. The baseline
selected partition remains least-sensitive until the lower bound is pushed below roughly
$\rho_{\min}=0.945$. At that point, the implied steady-state annual nominal
rate is above $23\%$, which is difficult to interpret as a plausible benchmark for the economies
typically targeted by this class of models.

\begin{table}[!t]
\centering
\begin{threeparttable}
\caption{\textbf{Sensitivities with different normalization bounds of the discount factor}}
\label{tab: Robustness to discount factor}
\scriptsize
\setlength{\tabcolsep}{3pt}
\renewcommand{\arraystretch}{1.2}
\begin{tabular*}{\textwidth}{@{\extracolsep{\fill}}lcccccc@{}}
\hline \hline
\textit{$\rho_{min}$} & $S^*$ & $S_2$ & $S_{3}$ & $S_{4}$ & \textit{Original} & \textit{Implied S.S. rate} \\
\hline \hline
0.990 & 0.04 & 0.32 & 0.36 & 0.38 & 1.82 & 4.04\% \\
0.985 & 0.07 & 0.32 & 0.36 & 0.38 & 1.82 & 6.09\% \\
0.980 & 0.10 & 0.32 & 0.36 & 0.38 & 1.82 & 8.16\% \\
0.975 & 0.13 & 0.32 & 0.36 & 0.38 & 1.82 & 10.26\% \\
0.970 & 0.16 & 0.32 & 0.36 & 0.38 & 1.82 & 12.37\% \\
0.965 & 0.19 & 0.32 & 0.36 & 0.38 & 1.82 & 14.51\% \\
0.960 & 0.22 & 0.32 & 0.36 & 0.38 & 1.82 & 16.67\% \\
0.955 & 0.24 & 0.32 & 0.36 & 0.38 & 1.82 & 18.85\% \\
0.950 & 0.27 & 0.32 & 0.36 & 0.38 & 1.82 & 21.05\% \\
0.945 & 0.30 & 0.32 & 0.36 & 0.38 & 1.82 & 23.28\% \\
0.940 & 0.33 & 0.32 & 0.36 & 0.38 & 1.82 & 25.53\% \\
\hline \hline
\end{tabular*}
\end{threeparttable}
\end{table}

\section{Monte Carlo Simulation} \label{sec: Monte carlo Simulation}

This section uses Monte Carlo simulations to study the finite-sample implications of the partition-selection criterion in the \cite{nakamura2018high} application. For details in the implementation of the Monte Carlo simulation see \ref{app:mc-implementation}. 

 The goal of the simulations is to assess how informative the $K_S$-based sensitivity ranking is for finite-sample bias, variance, and MSE. We generate repeated samples, of sizes $n=150$ and $n=500$, from a data-generating process (DGP) calibrated to the reference parameter vector used in the empirical exercise. The Monte Carlo setting is useful because it allows us to compute finite-sample distributions and relevant statistics---such as the variance and mean squared error (MSE) of $\hat{\gamma}$---that are difficult to obtain in the empirical application, where the reduced-form parameters entering the CMD estimation are constructed from different datasets and their joint variance--covariance matrix is not readily available.

The exercise proceeds as follows. For each admissible partition $S$, we first compute the local worst-case direction at the reference point and then scale the perturbation by the miscalibration size $\varepsilon \in \{5\%,10\%,15\%\}$. We then re-estimate the model under partition $S$ using this approximated worst-case miscalibration. Repeating this procedure across Monte Carlo samples delivers, for each
partition, the finite-sample distribution of $\hat{\gamma}$ under its \emph{own} worst-case miscalibration. We
summarize these distributions by reporting the variance and the corresponding
worst-case MSE, and by plotting histograms that make the differences across partitions visually
transparent.

\begingroup
\scriptsize
\setlength{\tabcolsep}{1.8pt}
\renewcommand{\arraystretch}{1.10}

\begin{longtable}{
@{}
>{\raggedright\arraybackslash}p{0.095\textwidth}
>{\raggedright\arraybackslash}p{0.145\textwidth}
>{\raggedright\arraybackslash}p{0.13\textwidth}
!{\hspace{0.2em}\vrule width 0.25pt\hspace{0.2em}}
rrr
!{\hspace{0.2em}\vrule width 0.25pt\hspace{0.2em}}
rrr
!{\hspace{0.2em}\vrule width 0.25pt\hspace{0.2em}}
rrr
@{}}

\caption{Finite-sample performance by partition, sample size, and miscalibration size}
\label{tab:mc_bound_aware_mse_multi_n_reduced}\\

\toprule
Partition 
& Estimated parameters 
& Calibrated parameters 
& \multicolumn{3}{c}{$\varepsilon = 5\%$} 
& \multicolumn{3}{c}{$\varepsilon = 10\%$} 
& \multicolumn{3}{c}{$\varepsilon = 15\%$} \\
\cmidrule(lr){4-6}
\cmidrule(lr){7-9}
\cmidrule(lr){10-12}
& & 
& Bias & Var. & MSE 
& Bias & Var. & MSE 
& Bias & Var. & MSE \\
\midrule
\endfirsthead

\toprule
Partition 
& Estimated 
& Calibrated 
& \multicolumn{3}{c}{$\varepsilon = 5\%$} 
& \multicolumn{3}{c}{$\varepsilon = 10\%$} 
& \multicolumn{3}{c}{$\varepsilon = 15\%$} \\
\cmidrule(lr){4-6}
\cmidrule(lr){7-9}
\cmidrule(lr){10-12}
& & 
& Bias & Var. & MSE 
& Bias & Var. & MSE 
& Bias & Var. & MSE \\
\midrule
\endhead

\midrule
\multicolumn{12}{r}{\emph{Continued on next page}}\\
\endfoot

\bottomrule
\endlastfoot

\multicolumn{12}{@{}l}{\textbf{Panel A.} Simulated sample size $n = 150$}\\
\addlinespace[0.20em]

$S^{\star}$ 
& $(\delta, \sigma, \rho_{1}, \rho_{2}, \gamma, b)$ 
& $(\rho, \omega, \theta, \kappa\zeta)$ 
& 42.5 & 4.6 & 4.7 
& 77.3 & 4.9 & 5.5 
& 115.4 & 5.4 & 6.7 \\

$S_{2}$ 
& $(\rho, \delta, \rho_{1}, \rho_{2}, \gamma, b)$ 
& $(\omega, \theta, \sigma, \kappa\zeta)$ 
& 196.2 & 1.9 & 5.7 
& -269.4 & 1.4 & 8.6 
& -458.6 & 1.4 & 22.4 \\

$S_{3}$ 
& $(\rho, \rho_{1}, \rho_{2}, \gamma, b)$ 
& $(\delta, \omega, \theta, \sigma, \kappa\zeta)$ 
& 303.4 & 1.5 & 10.7 
& -267.2 & 1.7 & 8.8 
& -339.8 & 1.3 & 12.9 \\

$S_{4}$ 
& $(\rho, \delta, \rho_{2}, \gamma, b)$ 
& $(\omega, \theta, \sigma, \rho_{1}, \kappa\zeta)$ 
& 438.8 & 1.8 & 21.1 
& -1867.4 & 0.6 & 349.3 
& -3861.6 & 1.6 & 1492.8 \\

$S_{5}$ 
& $(\sigma, \rho_{1}, \rho_{2}, \gamma, b)$ 
& $(\rho, \delta, \omega, \theta, \kappa\zeta)$ 
& 882.4 & 1.8 & 79.7 
& 2094.4 & 10.3 & 448.9 
& 2427.0 & 13.8 & 602.8 \\

$S_{6}$ 
& $(\delta, \rho_{1}, \rho_{2}, \gamma, b)$ 
& $(\rho, \omega, \theta, \sigma, \kappa\zeta)$ 
& -291.5 & 1.5 & 10.0 
& -576.0 & 1.7 & 34.9 
& -879.5 & 2.0 & 79.4 \\

Original 
& $(\rho_{1}, \rho_{2}, \kappa\zeta, \gamma)$ 
& $(\rho, \delta, \omega, \theta, \sigma, b)$ 
& 951.7 & 0.8 & 91.4 
& 1880.6 & 0.6 & 354.3 
& 2389.4 & 0.3 & 571.3 \\

\midrule
\multicolumn{12}{@{}l}{\textbf{Panel B.} Simulated sample size $n = 500$}\\
\addlinespace[0.20em]

$S^{\star}$ 
& $(\delta, \sigma, \rho_{1}, \rho_{2}, \gamma, b)$ 
& $(\rho, \omega, \theta, \kappa\zeta)$ 
& 32.6 & 1.2 & 1.4 
& 66.0 & 1.3 & 1.8 
& 102.5 & 1.4 & 2.5 \\

$S_{2}$ 
& $(\rho, \delta, \rho_{1}, \rho_{2}, \gamma, b)$ 
& $(\omega, \theta, \sigma, \kappa\zeta)$ 
& 198.7 & 0.6 & 4.5 
& -256.6 & 0.5 & 7.0 
& -453.5 & 0.5 & 21.0 \\

$S_{3}$ 
& $(\rho, \rho_{1}, \rho_{2}, \gamma, b)$ 
& $(\delta, \omega, \theta, \sigma, \kappa\zeta)$ 
& 301.4 & 0.6 & 9.6 
& -259.5 & 0.6 & 7.3 
& -334.7 & 0.5 & 11.7 \\

$S_{4}$ 
& $(\rho, \delta, \rho_{2}, \gamma, b)$ 
& $(\omega, \theta, \sigma, \rho_{1}, \kappa\zeta)$ 
& 445.6 & 0.6 & 20.4 
& -1857.2 & 0.2 & 345.1 
& -3861.7 & 3.1 & 1494.3 \\

$S_{5}$ 
& $(\sigma, \rho_{1}, \rho_{2}, \gamma, b)$ 
& $(\rho, \delta, \omega, \theta, \kappa\zeta)$ 
& 880.0 & 0.6 & 78.1 
& 2051.6 & 17.0 & 437.9 
& 2430.0 & 14.8 & 605.3 \\

$S_{6}$ 
& $(\delta, \rho_{1}, \rho_{2}, \gamma, b)$ 
& $(\rho, \omega, \theta, \sigma, \kappa\zeta)$ 
& -286.3 & 0.5 & 8.7 
& -569.0 & 0.5 & 32.9 
& -870.6 & 0.6 & 76.4 \\

Original 
& $(\rho_{1}, \rho_{2}, \kappa\zeta, \gamma)$ 
& $(\rho, \delta, \omega, \theta, \sigma, b)$ 
& 945.7 & 0.3 & 89.7 
& 1877.9 & 0.2 & 352.9 
& 2392.8 & 0.1 & 572.7 \\

\end{longtable}
\endgroup

Table~\ref{tab:mc_bound_aware_mse_multi_n_reduced} delivers three main lessons. First, the $K_S$-based ranking is highly informative about finite-sample worst-case bias. The least-sensitive partition, $S^{\star}$, has the smallest absolute bias and the lowest MSE for every miscalibration size and for both sample sizes. This is not mechanically imposed by the criterion, which is based on a local worst-case bias approximation rather than on simulated finite-sample MSE. The Monte Carlo results therefore support the interpretation of $K_S$ as a useful finite-sample diagnostic of calibration risk.

Second, the results make clear that sampling variance is not the main source of the differences across partitions. Increasing the sample size from $n=150$ to $n=500$ reduces the variance of $\hat{\gamma}$ substantially, but it leaves the miscalibration-induced bias largely unchanged. As a result, larger samples do not eliminate the consequences of fixing sensitive parameters at incorrect values. Instead, they make the bias component more visible. For $S^{\star}$, the MSE remains small even at $\varepsilon=15\%$, whereas the MSE of more fragile partitions grows rapidly as the miscalibration size increases.

Third, the deterioration in performance is strongly partition-specific. Partitions such as $S_2$, $S_3$, and $S_6$ generate moderate biases relative to the most fragile cases, but they still perform worse than $S^{\star}$ across the table. By contrast, partitions such as $S_4$, $S_5$, and the original Nakamura--Steinsson implementation can generate very large worst-case biases, especially at $\varepsilon=10\%$ and $\varepsilon=15\%$. The original partition is therefore not fragile because of sampling variability; it is fragile because the fixed block contains parameters through which calibration errors are strongly transmitted to $\gamma$.

Figure~\ref{fig:mc_histograms_gamma} visualizes the same pattern. For very small miscalibrations,
the distributions of \(\hat\gamma\) under the admissible partitions are relatively close to one
another, so sampling variation can obscure the gains from reducing worst-case bias. As
\(\epsilon\) increases, the distributions separate sharply. The original partition moves far away
from the true value, while the distribution under \(S^*\) remains much closer to it. Increasing
the sample size tightens the distributions, but it does not remove the displacement generated
by fixing miscalibrated parameters. This explains why the advantage of \(S^*\) becomes clearer
for larger samples and larger miscalibrations: sampling uncertainty falls, while calibration bias
remains.

\begin{figure}[H]
    \centering
    \includegraphics[width=\textwidth]{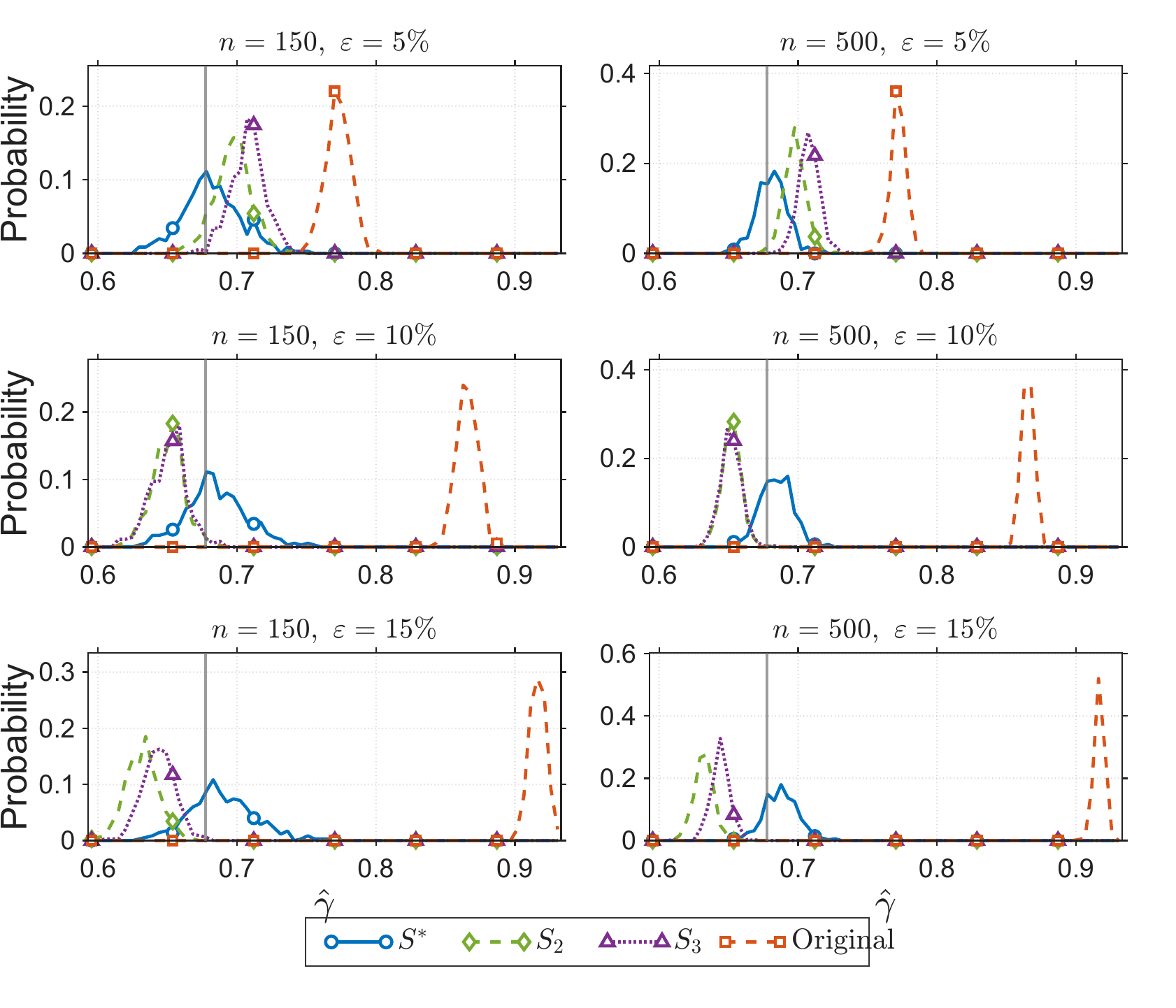}
    \caption{Finite-sample distribution of $\hat{\gamma}$ across selected partitions, sample sizes, and miscalibration sizes.}
    \label{fig:mc_histograms_gamma}
\end{figure}

Figure~\ref{fig: worsecasemiscalibration vs size} reports the worst-case information effect, $\gamma$, as a function of the miscalibration size $\epsilon$ for the main partitions. The figure summarizes how the target object responds under increasing local calibration error, highlighting substantial differences in robustness across partitions: the selected partition remains comparatively stable, while the original implementation exhibits a pronounced deterioration as $\epsilon$ increases, with intermediate partitions lying between these extremes. Overall, the plot provides a direct visual counterpart to the $K_S$-based ranking, illustrating how sensitivity translates into divergence in worst-case outcomes.

\begin{figure}[!t]
    \centering
    \includegraphics[width=\textwidth]{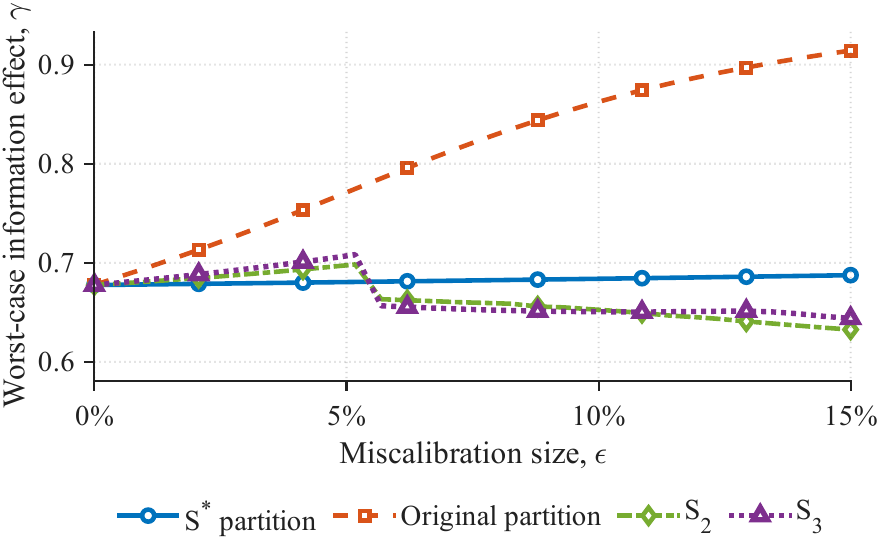}
    \caption{Information effect under worst-case miscalibration over different sizes.}
    \label{fig: worsecasemiscalibration vs size}
\end{figure}

The Monte Carlo results above use the baseline admissible set generated by $\tau_n=(\log n/n)^{1/2}$. Appendix \ref{appendix: weakid_tolerance} repeats the ranking and the worst-case Monte Carlo comparison under alternative rank thresholds. The results are substantively stable: $S^*$ remains the top-ranked partition according to $K_S$ for every cutoff, and the finite-sample MSE patterns continue to support the main conclusion that choosing the least-sensitive partition sharply reduces calibration risk.

\section{Conclusion} \label{sec: conclusion}

This paper studies a common but under-formalized choice in structural estimation: which parameters should be estimated and which should be calibrated. The main contribution is to treat this choice as a partition-selection problem. For each admissible calibration--estimation partition, the paper constructs a scalar sensitivity statistic that measures the worst-case local effect of calibration errors on the object of interest. The selected partition is the one that minimizes this statistic, and therefore minimizes worst-case local bias induced by plausible miscalibration of the fixed parameters.

The strength of the approach is that it turns sensitivity analysis into a decision rule. Rather than reporting that some calibrated parameters matter, the method provides a criterion for deciding which parameters should be fixed and which should be disciplined by the data. The rule is easy to implement, requires only local derivative information, avoids repeated re-estimation across partitions, and applies to a broad class of structural models. The applications show that this choice can matter substantially: partitions that look similar from the perspective of identification or computational convenience can imply very different robustness properties for the same target object.

The analysis also clarifies the limits of the method. The criterion is local: it evaluates robustness around a reference parameter vector and is informative when calibration errors are plausibly small or moderate. If the reference point is far from the relevant population value, or if large calibration errors are empirically plausible, global sensitivity analysis is more appropriate. A second limitation is that the normalization matrix is part of the economic content of the exercise. It defines what counts as a plausible and comparable miscalibration across parameters. Different normalizations can therefore lead to different selected partitions.

These limitations are not defects of the approach, but conditions for its interpretation. Structural work with calibrated parameters already relies on the assumption that the fixed values are credible approximations. The proposed statistic makes this assumption explicit and forces the researcher to state the scale on which calibration errors are judged. Robustness checks over alternative normalizations, together with the contribution decomposition of the sensitivity statistic, provide a transparent way to assess whether the selected partition depends on fragile choices about plausible parameter ranges.

Overall, the paper's message is that calibration is not merely a computational convenience. It is an econometric choice with direct consequences for the credibility of structural conclusions. By choosing the calibration--estimation partition through the lens of worst-case local bias, researchers can make this choice more systematic, transparent, and aligned with the object of interest.

\begin{appendix}

\section{Further Discussions}
\subsection{Framework and notation for the appendix}
\label{appendix:framework-notation}
This appendix uses the notation of the main text. Unless otherwise stated, all local objects
are evaluated at the reference point \(\eta^*\). For a partition \(S\), write
\(\eta=(\alpha_S,\beta_S)\), where \(\alpha_S\) collects the estimated coordinates and
\(\beta_S\) collects the fixed coordinates. When no confusion can arise, the subscript \(S\) is
suppressed and we write \((\alpha,\beta)\). The Jacobian blocks are
\[
G_{\alpha,S}:=\frac{\partial g(\eta^*;S)}{\partial \alpha'},
\qquad
G_{\beta,S}:=\frac{\partial g(\eta^*;S)}{\partial \beta'},
\]
and
\[
\Gamma_{\alpha,S}:=\frac{\partial \Gamma(\eta^*)}{\partial \alpha'},
\qquad
\Gamma_{\beta,S}:=\frac{\partial \Gamma(\eta^*)}{\partial \beta'}.
\]
For any partition satisfying the rank condition in the main text, define
\begin{equation}
D_{\alpha\beta,S}
:=-\left(G_{\alpha,S}'WG_{\alpha,S}\right)^{-1}G_{\alpha,S}'WG_{\beta,S},
\qquad
D_{\gamma\beta,S}:=\Gamma_{\alpha,S}D_{\alpha\beta,S}+\Gamma_{\beta,S}.
\label{eq:appendix-D-definitions}
\end{equation}

We omit the $S$ subscript when the context is clear.

\subsection{Local derivative and first-order expansion}
\label{appendix:local-differentiability}
\label{apppendix: well defined sensitivity}

The next lemma records the only differentiability fact needed repeatedly in the appendix and is used in the proof of Lemma~\ref{lemma: WorstBias}.

\begin{lemma}[Local derivative and uniform first-order expansion]
\label{lemma: derivatives}
\label{lemma: unif convergence first-taylor expansion}
Fix \(S\in\mathcal S_{\mathrm{AD}}(\eta^*)\). Under
Assumptions~\ref{ass: regularity conditions} and
\ref{ass: regularity assumptions reference point}, the following statements hold.
\begin{enumerate}[label=(\roman*)]
\item \(\alpha^\ast=\alpha(\beta^\ast;S)\).
\item \(\partial\alpha(\beta^\ast;S)/\partial\beta'=D_{\alpha\beta,S}\) and
\(\partial\gamma(\beta^\ast;S)/\partial\beta'=D_{\gamma\beta,S}\), where the matrices are
defined in \eqref{eq:appendix-D-definitions}.
\item If
\[
R_S(\beta)
:=
\gamma(\beta;S)-\gamma(\beta^\ast;S)-D_{\gamma\beta,S}(\beta-\beta^\ast),
\]
then
\[
\sup_{0<\|\beta-\beta^\ast\|_2\le\delta}
\frac{\|R_S(\beta)\|_2}{\|\beta-\beta^\ast\|_2}
\longrightarrow0
\qquad\text{as }\delta\downarrow0.
\]
In particular, for
\[
B_\epsilon(\beta^\ast;S)
:=
\left\{\beta_0\in B:
\left\|\Sigma_S^{-1}(\beta^\ast-\beta_0)\right\|_2
\le\epsilon\sqrt{|S^c|}\right\},
\]
one has
\[
\sup_{\beta_0\in B_\epsilon(\beta^\ast;S)}\|R_S(\beta_0)\|_2=o(\epsilon).
\]
\end{enumerate}
\end{lemma}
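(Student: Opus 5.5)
For part (i), I will use Assumption~\ref{ass: regularity assumptions reference point}(ii): \(\eta^*\in\mathcal N_0\), so \(g(\alpha^*,\beta^*)=0\). Since \(W\) is positive definite, the objective \(Q(\alpha,\beta^*)=g(\alpha,\beta^*)'Wg(\alpha,\beta^*)\) is nonnegative and vanishes at \(\alpha^*\). Hence \(\alpha^*\) is a global minimizer.

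To make \(\alpha(\beta;S)\) a well-defined map near \(\beta^*\), I will select the local minimizer branch through \(\alpha^*\). The Hessian at the reference point is
\[
\partial^2 Q/\partial\alpha\partial\alpha' = 2G_\alpha'WG_\alpha + 2\sum_k (Wg)_k\,\partial^2 g_k/\partial\alpha\partial\alpha'.
\]
The second term vanishes because \(g=0\). The first term is positive definite by the rank condition (Assumption~\ref{ass: admissibility}(ii)). Hence \(\alpha^*\) is a strict, locally unique minimizer, and I take \(\alpha(\beta;S)\) to be the branch of minimizers through it.

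For part (ii), I apply the Implicit Function Theorem to the first-order condition
\[
F(\alpha,\beta):=G_\alpha(\alpha,\beta)'Wg(\alpha,\beta)=0.
\]
By Assumption~\ref{ass: regularity conditions}(ii), \(g\) is \(C^2\), so \(F\) is \(C^1\). Its Jacobians at \((\alpha^*,\beta^*)\) are \(F_\alpha=G_\alpha'WG_\alpha\) and \(F_\beta=G_\alpha'WG_\beta\); the second-order terms are multiplied by \(g(\eta^*)=0\) and drop out. The IFT then gives a \(C^1\) map \(\alpha(\beta;S)\) on a neighbourhood \(U\) of \(\beta^*\), with derivative \(D_{\alpha\beta,S}\) at \(\beta^*\). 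Because the Hessian stays positive definite on a smaller neighbourhood, this solution of the first-order condition is the local minimizer selected above. The chain rule, with \(\Gamma\in C^1\) by Assumption~\ref{ass: regularity conditions}(iii), then gives \(D_{\gamma\beta,S}\).

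For part (iii), \(\gamma(\cdot;S)=\Gamma(\alpha(\cdot;S),\cdot)\) is \(C^1\) on \(U\). I will use the mean-value integral form of the remainder:
\[
R_S(\beta)=\int_0^1\bigl[D\gamma(\beta^*+t(\beta-\beta^*))-D\gamma(\beta^*)\bigr]\,dt\,(\beta-\beta^*).
\]
This yields
\[
\|R_S(\beta)\|_2\le \omega(\|\beta-\beta^*\|_2)\,\|\beta-\beta^*\|_2,
\qquad
\omega(\delta):=\sup_{\|b-\beta^*\|\le\delta}\|D\gamma(b)-D\gamma(\beta^*)\|_2 .
\]
By continuity of \(D\gamma\), \(\omega(\delta)\to0\) as \(\delta\downarrow0\), which is the uniform statement.

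For the neighbourhood statement, take \(\beta_0\in B_\epsilon(\beta^*;S)\). Then
\[
\|\beta_0-\beta^*\|_2\le\|\Sigma_S\|_2\,\|\Sigma_S^{-1}(\beta_0-\beta^*)\|_2\le \|\Sigma_S\|_2\sqrt{|S^c|}\,\epsilon=:c\,\epsilon.
\]
For \(\epsilon\) small the whole neighbourhood lies inside \(U\), so
\[
\sup_{\beta_0\in B_\epsilon(\beta^*;S)}\|R_S(\beta_0)\|_2\le \omega(c\epsilon)\,c\,\epsilon=o(\epsilon).
\]

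The main obstacle is the selection issue in parts (i) and (ii), namely that \(\alpha(\beta;S)\) is defined as an argmin, possibly set-valued. I will resolve it by restricting to the locally unique minimizer branch through \(\alpha^*\), as justified above by the positive-definite Hessian. I will also stress that the vanishing of the Hessian correction terms at \(\beta^*\) relies on \(g(\eta^*)=0\) from Assumption~\ref{ass: regularity assumptions reference point}(ii); this is exactly what makes the derivative equal to \(D_{\alpha\beta,S}\) rather than a residual-weighted version of it.
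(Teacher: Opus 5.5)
Your proof is correct and follows the paper's argument. Part (i) comes from nonnegativity of $g'Wg$ with a zero at $\alpha^\ast$, part (ii) from the implicit function theorem applied to the first-order condition with the residual-weighted Hessian terms vanishing because $g(\eta^*)=0$, and the final $o(\epsilon)$ bound from containing $B_\epsilon(\beta^\ast;S)$ in a Euclidean ball of radius $\|\Sigma_S\|_2\sqrt{|S^c|}\,\epsilon$. The differences are minor: you obtain the uniform remainder bound from the $C^1$ property of $\gamma(\cdot;S)$ via the integral mean-value form, whereas the paper uses only Fr\'echet differentiability at $\beta^\ast$, and you are more explicit than the paper about selecting the locally unique minimizer branch through $\alpha^\ast$.
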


\begin{proof}
See Appendix \ref{appendix:main-proofs}.
\end{proof}

\subsection{Using the least-sensitive partition}
\label{appendix:stable-selected-partition}
\label{sec: identification S^*}

Let \(S^*:=S^*(\eta^*)\). After selecting \(S^*\), a researcher may use a nearby
alternative calibration \(\widetilde\beta\), re-estimate the free parameters under the same
partition, and form
\[
\widetilde\eta:=\bigl(\alpha(\widetilde\beta;S^*),\widetilde\beta\bigr).
\]
The relevant question is whether the least-sensitive partition remains the same when the
criterion is re-evaluated at \(\widetilde\eta\).

\begin{assumption}[Local stability of the admissible set and sensitivity ranking]
\label{ass:stable-selection}
\label{ass: local identification of S_T}
There exists a neighborhood \(U_0\) of \(\eta^*\) such that
\(\mathcal S_{\mathrm{AD}}(\eta)=\mathcal S_{\mathrm{AD}}(\eta^*)\) for all
\(\eta\in U_0\). Moreover, for every
\(S\in\mathcal S_{\mathrm{AD}}(\eta^*)\), the map
\(\eta\mapsto K_S(\eta)\) is continuous in a neighborhood of \(\eta^*\).
\end{assumption}

\begin{theorem}[Local stability of the selected partition]
\label{thm: local identification}
Let Assumptions~\ref{ass: uniq} and \ref{ass:stable-selection} hold. Then there exists a
neighborhood \(U\) of \(\eta^*\) such that, for all \(\eta\in U\),
\[
S^*(\eta)=S^*(\eta^*).
\]
\end{theorem}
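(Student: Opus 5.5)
The argument is a standard finite-argmin stability argument: a strict minimum over a finite set of continuous functions persists in a neighborhood.

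\textbf{Step 1: finiteness and the strict gap.} The index set $\mathcal S_{\mathrm{AD}}(\eta^*)$ is a subset of the power set of $\{1,\dots,d_\eta\}$, so it is finite. By Assumption~\ref{ass: uniq}, the minimizer $S^*=S^*(\eta^*)$ is unique. If $\mathcal S_{\mathrm{AD}}(\eta^*)=\{S^*\}$ the result is immediate. Otherwise define the gap
\[
\delta:=\min_{S\in\mathcal S_{\mathrm{AD}}(\eta^*),\,S\neq S^*}\bigl\{K_S(\eta^*)-K_{S^*}(\eta^*)\bigr\}.
\]
This is strictly positive because it is a minimum of finitely many strictly positive numbers.

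\textbf{Step 2: a common neighborhood from continuity.} By Assumption~\ref{ass:stable-selection}, each map $\eta\mapsto K_S(\eta)$ with $S\in\mathcal S_{\mathrm{AD}}(\eta^*)$ is continuous near $\eta^*$. So for each such $S$ there is a neighborhood $U_S$ of $\eta^*$ on which $|K_S(\eta)-K_S(\eta^*)|<\delta/2$. Set
\[
U:=U_0\cap\bigcap_{S\in\mathcal S_{\mathrm{AD}}(\eta^*)}U_S,
\]
where $U_0$ is the neighborhood from Assumption~\ref{ass:stable-selection}. This $U$ is a finite intersection of neighborhoods of $\eta^*$, hence itself a neighborhood of $\eta^*$.

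\textbf{Step 3: the ranking is preserved on $U$.} Fix $\eta\in U$.
\begin{itemize}
\item Because $\eta\in U_0$, the candidate set is unchanged: $\mathcal S_{\mathrm{AD}}(\eta)=\mathcal S_{\mathrm{AD}}(\eta^*)$. So the argmin defining $S^*(\eta)$ runs over the same finite collection.
\item For any $S\neq S^*$ in that collection,
\[
K_S(\eta)>K_S(\eta^*)-\delta/2\ge K_{S^*}(\eta^*)+\delta/2>K_{S^*}(\eta).
\]
\end{itemize}
Hence $S^*$ is the unique minimizer of $K_\cdot(\eta)$ over $\mathcal S_{\mathrm{AD}}(\eta)$, that is, $S^*(\eta)=S^*(\eta^*)$.

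\textbf{Possible obstacles.} The argument has no hard step; its care lies in two points.
\begin{itemize}
\item The normalization $\Sigma_S$ must be held fixed as $\eta$ varies, so that the continuity assumption applies to $K_S(\eta,\Sigma_S)$ as a function of $\eta$ alone.
\item The admissible set must not change with $\eta$. Assumption~\ref{ass:stable-selection} guarantees this. Without it, a new partition could enter near $\eta^*$ with a smaller $K_S$ and become the minimizer.
\end{itemize}

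\textbf{Link to the recalibration remark.} Continuity of $\widetilde\beta\mapsto\widetilde\eta=(\alpha(\widetilde\beta;S^*),\widetilde\beta)$ near $\beta^*$ follows from the implicit function theorem, as in Lemma~\ref{lemma: derivatives}. Therefore $\widetilde\eta\in U$ for every $\widetilde\beta$ close enough to $\beta^*$, and the conclusion of the remark follows.
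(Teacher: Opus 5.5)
Your proof is correct and follows essentially the same argument as the paper: finiteness of $\mathcal S_{\mathrm{AD}}(\eta^*)$, a strictly positive gap from uniqueness, a common neighborhood inside $U_0$ on which every $K_S$ moves by less than a fraction of the gap, and preservation of the strict ranking on that neighborhood. The only cosmetic difference is that you use a tolerance of $\delta/2$ where the paper uses $\Delta/3$; this works equally well, since your chain of inequalities is strict at both ends.
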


\begin{proof}
See Appendix \ref{appendix:main-proofs}.
\end{proof}

In practice, the maintained requirement is that
\(\widetilde\eta=(\alpha(\widetilde\beta;S^*),\widetilde\beta)\) lies in the neighborhood
\(U\). This can be checked by recomputing the least-sensitive partition at \(\widetilde\eta\).

\subsection{Weak identification and the admissible set of partitions}\label{appendix: weak identification and Sad}

This subsection gives the local-sequence argument underlying the variance-decay restriction
in the definition of \(\mathcal S_{\mathrm{AD}}(\eta^*)\). Let
\[
g_n:\mathcal A_S\times\mathcal B_S\to\mathbb R^{d_g}
\]
denote the population system along the \(n\)th element of a local population sequence, and let
\[
\eta_n^\ast:=(\alpha_n^\ast,\beta_n^\ast)
\]
satisfy
\[
g_n(\alpha_n^\ast,\beta_n^\ast)=0.
\]
The fixed-population case in the main text is obtained by setting
\(g_n\equiv g\), \(\alpha_n^\ast\equiv\alpha^\ast\), and
\(\beta_n^\ast\equiv\beta^\ast\).

Throughout this subsection, the fixed component is held at its population reference value
\(\beta_n^\ast\). Thus, the stochastic object studied here is the conditional minimum-distance
estimator at fixed \(\beta_n^\ast\), which is the object entering the variance-decay
admissibility condition. The use of a sample reference point in plug-in derivative computations
is handled separately.

Define
\[
\widehat\alpha_n
\equiv
\widehat\alpha_n(\beta_n^\ast)
\in
\arg\min_{\alpha\in\mathcal A_S}
\widehat g_n(\alpha,\beta_n^\ast)'
\widehat W
\widehat g_n(\alpha,\beta_n^\ast).
\]
Equivalently, the sample point studied in this subsection is
\[
\widehat\eta_n^\ast
:=
(\widehat\alpha_n(\beta_n^\ast),\beta_n^\ast).
\]

For \(\alpha\in\mathcal A_S\), define the sample equation error at the fixed population
component by
\[
e_n(\alpha,\beta_n^\ast)
:=
\widehat g_n(\alpha,\beta_n^\ast)
-
g_n(\alpha,\beta_n^\ast).
\]
Let
\[
G_n
:=
\frac{\partial g_n(\alpha_n^\ast,\beta_n^\ast)}{\partial\alpha'},
\qquad
C_n
:=
\left(W_n^{1/2}G_n\right)^+.
\]

\begin{assumption}[Local-sequence regularity at fixed component]
\label{ass:local-sequence-md}
There exist finite constants \(K<\infty\), \(\bar r>0\), \(L_\Gamma<\infty\),
\(\rho\in[0,1)\), and a sequence \(q_n\to0\) such that the following conditions hold for all
large \(n\).

\begin{enumerate}
\item[(i)] \(W_n^{1/2}G_n\) has full column rank,
\[
C_nW_n^{1/2}G_n=I,
\]
and
\[
\|W_n^{1/2}\|\leq K.
\]

\item[(ii)] If \(\widehat W^{1/2}\) is the sample weighting matrix, then, with probability one,
\[
K^{-1}\|W_n^{1/2}v\|
\leq
\|\widehat W^{1/2}v\|
\leq
K\|W_n^{1/2}v\|
\]
for every conformable vector \(v\).

\item[(iii)] The estimator is localized. For some \(r_n\in(0,\bar r]\),
\[
P\{\widehat\alpha_n(\beta_n^\ast)\in\mathcal N_n(r_n)\}=1,
\qquad
\mathcal N_n(r):=\{\alpha:\|\alpha-\alpha_n^\ast\|\leq r\}.
\]
The map
\[
\alpha\mapsto g_n(\alpha,\beta_n^\ast)
\]
is continuously differentiable on \(\mathcal N_n(\bar r)\).

\item[(iv)] The sample equation satisfies
\[
\mathbb{E}\left[
\|e_n(\alpha_n^\ast,\beta_n^\ast)\|^2
\right]
=
O(q_n^2),
\]
and
\[
\mathbb{E}\left[
\sup_{\alpha\in\mathcal N_n(\bar r)}
\|e_n(\alpha,\beta_n^\ast)
-
e_n(\alpha_n^\ast,\beta_n^\ast)\|^2
\right]
=
O(q_n^2).
\]

\item[(v)] The Jacobian satisfies
\[
\sup_{\alpha\in\mathcal N_n(r_n)}
\left\|
C_nW_n^{1/2}
\left[
\frac{\partial g_n(\alpha,\beta_n^\ast)}{\partial\alpha'}
-
G_n
\right]
\right\|
\leq
\rho.
\]

\item[(vi)] The target is uniformly locally Lipschitz on a neighborhood of the population
reference point. For some \(\bar s>0\), define
\[
\mathcal B_n(s):=\{\beta:\|\beta-\beta_n^\ast\|\leq s\}.
\]
The map
\[
(\alpha,\beta)\mapsto \Gamma(\alpha,\beta)
\]
is continuously differentiable on
\[
\mathcal N_n(\bar r)\times \mathcal B_n(\bar s),
\]
and
\[
\sup_{(\alpha,\beta)\in \mathcal N_n(\bar r)\times \mathcal B_n(\bar s)}
\left\|
\begin{bmatrix}
\dfrac{\partial\Gamma(\alpha,\beta)}{\partial\alpha'}
&
\dfrac{\partial\Gamma(\alpha,\beta)}{\partial\beta'}
\end{bmatrix}
\right\|
\leq
L_\Gamma .
\]
\end{enumerate}
\end{assumption}

\begin{proposition}[Second-moment bound for the estimated parameters]
\label{prop:l2_alpha_rate_local_sequence}
Under Assumption~\ref{ass:local-sequence-md},
\[
\mathbb{E}\left[
\|\widehat\alpha_n(\beta_n^\ast)-\alpha_n^\ast\|^2
\right]
=
O\left(\|C_n\|^2q_n^2\right).
\]
If \(\|C_n\|q_n=o(1)\), then
\[
\mathbb{E}\left[
\|\widehat\alpha_n(\beta_n^\ast)-\alpha_n^\ast\|^2
\right]
=
o(1).
\]
\end{proposition}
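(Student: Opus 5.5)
The plan is to compare the sample objective at $\widehat\alpha_n:=\widehat\alpha_n(\beta_n^\ast)$ with its value at $\alpha_n^\ast$, which gives a deterministic inequality for $\|\widehat\alpha_n-\alpha_n^\ast\|$ in terms of the equation errors. Squaring and taking expectations then yields the stated bound. All statements below are on the event of Assumption~\ref{ass:local-sequence-md}(iii), which has probability one, so $\widehat\alpha_n\in\mathcal N_n(r_n)$.

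\textbf{Step 1 (basic inequality).} Since $\widehat\alpha_n$ minimizes the sample objective, $\|\widehat W^{1/2}\widehat g_n(\widehat\alpha_n,\beta_n^\ast)\|\le\|\widehat W^{1/2}\widehat g_n(\alpha_n^\ast,\beta_n^\ast)\|$. Applying the equivalence in (ii) twice converts this into a bound under $W_n^{1/2}$:
\[
\|W_n^{1/2}\widehat g_n(\widehat\alpha_n,\beta_n^\ast)\|\le K^2\|W_n^{1/2}\widehat g_n(\alpha_n^\ast,\beta_n^\ast)\|.
\]
Because $g_n(\alpha_n^\ast,\beta_n^\ast)=0$, we have $\widehat g_n(\alpha_n^\ast,\beta_n^\ast)=e_n(\alpha_n^\ast,\beta_n^\ast)=:e_n^\ast$. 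Together with $\|W_n^{1/2}\|\le K$, the right-hand side is at most $K^3\|e_n^\ast\|$.

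\textbf{Step 2 (lower bound via the Jacobian).} Write
\[
\widehat g_n(\widehat\alpha_n,\beta_n^\ast)=g_n(\widehat\alpha_n,\beta_n^\ast)+e_n(\widehat\alpha_n,\beta_n^\ast).
\]
By the mean value theorem in integral form, valid by (iii),
\[
g_n(\widehat\alpha_n,\beta_n^\ast)=\bar G_n(\widehat\alpha_n-\alpha_n^\ast),\qquad \bar G_n=\int_0^1\partial_\alpha g_n(\alpha_n^\ast+t(\widehat\alpha_n-\alpha_n^\ast),\beta_n^\ast)\,dt.
\]
Since $\mathcal N_n(r_n)$ is convex, every point on this segment lies in it. Now apply $C_n$ and use $C_nW_n^{1/2}G_n=I$ from (i):
\[
C_nW_n^{1/2}\bar G_n(\widehat\alpha_n-\alpha_n^\ast)=(\widehat\alpha_n-\alpha_n^\ast)+C_nW_n^{1/2}(\bar G_n-G_n)(\widehat\alpha_n-\alpha_n^\ast).
\]
Condition (v), integrated along the segment, bounds the second term by $\rho\|\widehat\alpha_n-\alpha_n^\ast\|$. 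Hence
\[
(1-\rho)\|\widehat\alpha_n-\alpha_n^\ast\|\le\|C_n\|\,\|W_n^{1/2}g_n(\widehat\alpha_n,\beta_n^\ast)\|.
\]

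\textbf{Step 3 (combine).} By the triangle inequality, $\|W_n^{1/2}g_n(\widehat\alpha_n,\beta_n^\ast)\|$ is at most $\|W_n^{1/2}\widehat g_n(\widehat\alpha_n,\beta_n^\ast)\|+K\|e_n(\widehat\alpha_n,\beta_n^\ast)\|$. Split the error term as
\[
\|e_n(\widehat\alpha_n,\beta_n^\ast)\|\le\|e_n^\ast\|+\sup_{\alpha\in\mathcal N_n(\bar r)}\|e_n(\alpha,\beta_n^\ast)-e_n^\ast\|,
\]
which is legitimate because $r_n\le\bar r$. Combining with Steps 1 and 2 gives
\[
\|\widehat\alpha_n-\alpha_n^\ast\|\le\frac{\|C_n\|}{1-\rho}\Big[(K^3+K)\|e_n^\ast\|+K\sup_{\alpha\in\mathcal N_n(\bar r)}\|e_n(\alpha,\beta_n^\ast)-e_n^\ast\|\Big].
\]
Square, use $(a+b)^2\le2a^2+2b^2$, and take expectations. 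Condition (iv) then gives $O(\|C_n\|^2q_n^2)$. The second claim is immediate: if $\|C_n\|q_n=o(1)$, then $\|C_n\|^2q_n^2=o(1)$.

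The main obstacle is Step 2. Condition (v) is stated only for pointwise Jacobians on $\mathcal N_n(r_n)$, so the bound on $C_nW_n^{1/2}(\bar G_n-G_n)$ requires the whole segment between $\alpha_n^\ast$ and $\widehat\alpha_n$ to lie in that ball (it does, by convexity), followed by integrating the pointwise bound. A smaller point in Step 1 is that (ii) must be applied under the weighting matrix $W_n$ rather than $\widehat W$, which is where the $K^2$ factor comes from.
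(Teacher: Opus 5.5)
Your proposal is correct and follows essentially the same route as the paper. The paper likewise uses the integral-form expansion premultiplied by $C_nW_n^{1/2}$, with condition (v) absorbing the remainder along the segment (by convexity of $\mathcal N_n(r_n)$) to get $\|\widehat\alpha_n-\alpha_n^\ast\|\le (1-\rho)^{-1}\|C_n\|\,\|W_n^{1/2}g_n(\widehat\alpha_n,\beta_n^\ast)\|$. It then bounds the sample part by $K^3\|e_n(\alpha_n^\ast,\beta_n^\ast)\|$ using the minimizing property and the weight equivalence in (ii), splits $e_n(\widehat\alpha_n,\beta_n^\ast)$ exactly as you do, and finishes by squaring and applying (iv).
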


\begin{proposition}[Vanishing second moments for the object of interest]
\label{prop:gamma_l2_rate_local_sequence}
Under Assumption~\ref{ass:local-sequence-md},
define
\[
\widehat\gamma_n
:=
\Gamma\!\left(
\widehat\alpha_n(\beta_n^\ast),
\beta_n^\ast
\right),
\qquad
\gamma_n^\ast
:=
\Gamma(\alpha_n^\ast,\beta_n^\ast).
\]
Then
\[
\mathbb{E}\left[
\|\widehat\gamma_n-\gamma_n^\ast\|^2
\right]
=
O\left(\|C_n\|^2q_n^2\right),
\qquad
\operatorname{tr}\{\operatorname{Var}(\widehat\gamma_n)\}
=
O\left(\|C_n\|^2q_n^2\right).
\]
If \(\|C_n\|q_n=o(1)\), then
\[
\mathbb{E}\left[
\|\widehat\gamma_n-\gamma_n^\ast\|^2
\right]
=
o(1),
\qquad
\operatorname{tr}\{\operatorname{Var}(\widehat\gamma_n)\}
=
o(1).
\]
\end{proposition}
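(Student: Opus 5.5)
The plan is to reduce Proposition~\ref{prop:gamma_l2_rate_local_sequence} to Proposition~\ref{prop:l2_alpha_rate_local_sequence}, which we take as given, using the Lipschitz bound on the target in Assumption~\ref{ass:local-sequence-md}(vi).

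The key point is that \(\widehat\gamma_n\) and \(\gamma_n^\ast\) share the same second argument \(\beta_n^\ast\). Only the \(\alpha\)-block of the derivative of \(\Gamma\) therefore matters.

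\emph{Pathwise bound.} By Assumption~\ref{ass:local-sequence-md}(iii), with probability one \(\widehat\alpha_n(\beta_n^\ast)\in\mathcal N_n(r_n)\subseteq\mathcal N_n(\bar r)\). This ball is convex, so the segment joining \(\alpha_n^\ast\) and \(\widehat\alpha_n\) lies in it, and \(\beta_n^\ast\in\mathcal B_n(\bar s)\) trivially. Write
\[
\Gamma(\widehat\alpha_n,\beta_n^\ast)-\Gamma(\alpha_n^\ast,\beta_n^\ast)=\int_0^1 \frac{\partial\Gamma}{\partial\alpha'}\bigl(\alpha_n^\ast+t(\widehat\alpha_n-\alpha_n^\ast),\beta_n^\ast\bigr)\,dt\,(\widehat\alpha_n-\alpha_n^\ast),
\]
using the mean value theorem in integral form, which is valid since \(\Gamma\) is \(C^1\) there. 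The operator norm of the \(\alpha\)-block is at most the norm of the full row block \([\partial\Gamma/\partial\alpha'\;\;\partial\Gamma/\partial\beta']\), which Assumption~\ref{ass:local-sequence-md}(vi) bounds by \(L_\Gamma\). Hence, almost surely,
\[
\|\widehat\gamma_n-\gamma_n^\ast\|\le L_\Gamma\|\widehat\alpha_n-\alpha_n^\ast\|.
\]

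\emph{Moments.} Squaring and taking expectations gives
\[
\mathbb E\|\widehat\gamma_n-\gamma_n^\ast\|^2\le L_\Gamma^2\,\mathbb E\|\widehat\alpha_n-\alpha_n^\ast\|^2.
\]
By Proposition~\ref{prop:l2_alpha_rate_local_sequence}, the right-hand side is \(O(\|C_n\|^2q_n^2)\). The pathwise bound also gives finiteness of the second moment of \(\widehat\gamma_n\), so the variance exists. For the variance, use that the mean minimizes expected squared deviation:
\[
\operatorname{tr}\{\operatorname{Var}(\widehat\gamma_n)\}=\mathbb E\|\widehat\gamma_n-\mathbb E\widehat\gamma_n\|^2\le \mathbb E\|\widehat\gamma_n-\gamma_n^\ast\|^2,
\]
since \(\gamma_n^\ast\) is a deterministic vector. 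Both rates follow, and when \(\|C_n\|q_n=o(1)\) both quantities are \(o(1)\).

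\emph{Main obstacle.} The only delicate step is the pathwise Lipschitz bound. It requires the estimator to stay in the region where the derivative bound holds, which is exactly what the almost-sure localization in (iii) provides, together with convexity of \(\mathcal N_n(\bar r)\). If localization held only with high probability, I would split on the complement event and need a uniform bound on \(\Gamma\) there. Under the stated assumption this case does not arise, so the argument is essentially a direct corollary of Proposition~\ref{prop:l2_alpha_rate_local_sequence}.
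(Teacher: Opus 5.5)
Your proposal is correct and follows the paper's proof: the mean-value integral form along the segment inside the convex ball $\mathcal N_n(\bar r)$, the bound $L_\Gamma$ from Assumption~\ref{ass:local-sequence-md}(vi), Proposition~\ref{prop:l2_alpha_rate_local_sequence}, and $\operatorname{tr}\{\operatorname{Var}(\widehat\gamma_n)\}\le \mathbb E\|\widehat\gamma_n-\gamma_n^\ast\|^2$. You also make explicit two points the paper leaves implicit: that the $\alpha$-block norm is dominated by the full-block norm, and that the second moment is finite.
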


\subsection{Consistency of the estimated least-sensitive partition}
\label{appendix:partition-consistency}
\label{sec: consistency}

\begin{assumption}[High-level consistency conditions]
\label{ass:partition-consistency-highlevel}
\label{ass: consistency of eta}
\label{ass: rank estimation}
\label{ass: u.c of nabla}
The following conditions hold.
\begin{enumerate}[label=(\roman*)]
\item Given the first-stage calibration \((\beta_1,S_1)\),
\(\widehat\eta\overset{p}{\to}\eta^*\).
\item The estimated admissible set is consistent:
\[
\Pr\left\{\widehat{\mathcal S}_{\mathrm{AD}}(\widehat\eta)
=\mathcal S_{\mathrm{AD}}(\eta^*)\right\}\to1.
\]
\item For every candidate partition \(S\),
\[
\sup_{\eta\in N}
\left\|
\frac{\partial \widehat g(\eta;S)}{\partial\eta'}
-
\frac{\partial g(\eta;S)}{\partial\eta'}
\right\|
\overset{p}{\to}0.
\]
\item \(\widehat W\overset{p}{\to}W\), where \(W\) is symmetric positive definite.
\end{enumerate}
\end{assumption}

Assumption~\ref{ass:partition-consistency-highlevel}(iii) is a higher-order uniform convergence
condition for the Jacobians used in both the sensitivity statistic and the admissible-set
computation. For moment functions \(g(\eta;S)=E[m(X_i,\eta;S)]\), primitive sufficient
conditions follow from Lemma 2.4 of \cite{newey1994large}: compactness of \(N\), stationarity
and ergodicity of \(\{X_i\}\), continuity in \(\eta\) of the derivative of \(m\), and an integrable
dominating function for that derivative.

\begin{theorem}[Consistency of the selected partition]
\label{thm: Consistency of partition}
Suppose Assumptions~\ref{ass: regularity conditions},
\ref{ass: regularity assumptions reference point}, \ref{ass: uniq}, and
\ref{ass:partition-consistency-highlevel} hold. Then
\[
\Pr\left\{\widehat S^*(\widehat\eta)=S^*(\eta^*)\right\}\to1.
\]
\end{theorem}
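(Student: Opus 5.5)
The approach is to reduce the claim to uniform consistency of the plug-in statistics over the finite family of candidate partitions, and then use the strict gap that Assumption~\ref{ass: uniq} provides at the population optimum. The set of candidate partitions is a subset of the power set of $\{1,\dots,d_\eta\}$ and is therefore finite. Let $\mathcal S_{\mathrm{AD}}:=\mathcal S_{\mathrm{AD}}(\eta^*)$ and define the population gap
\[
\Delta:=\min_{S\in\mathcal S_{\mathrm{AD}},\,S\neq S^*}\bigl(K_S-K_{S^*}\bigr).
\]
By Assumption~\ref{ass: uniq} and finiteness, $\Delta>0$. It then suffices to establish two events, each of probability tending to one:
\begin{enumerate}[label=(\alph*)]
\item $\widehat{\mathcal S}_{\mathrm{AD}}(\widehat\eta)=\mathcal S_{\mathrm{AD}}$;
\item $\max_{S\in\mathcal S_{\mathrm{AD}}}|\widehat K_S-K_S|<\Delta/2$.
\end{enumerate}
On their intersection, every $S\neq S^*$ in the estimated admissible set satisfies $\widehat K_S>K_S-\Delta/2\ge K_{S^*}+\Delta/2>\widehat K_{S^*}$. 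Hence $\widehat S^*(\widehat\eta)=S^*(\eta^*)$. Event (a) is exactly Assumption~\ref{ass:partition-consistency-highlevel}(ii), so the work lies in (b). Because the family is finite, (b) follows from $\widehat K_S\overset{p}{\to}K_S$ for each fixed $S\in\mathcal S_{\mathrm{AD}}$ together with a union bound.

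To show pointwise consistency for fixed $S$, I would use the continuous mapping theorem. First I would show $\widehat G_\eta:=\partial\widehat g(\widehat\eta)/\partial\eta'\overset{p}{\to}G_\eta:=\partial g(\eta^*)/\partial\eta'$. This splits as
\[
\|\widehat G_\eta-G_\eta\|\le\sup_{\eta\in N}\Bigl\|\tfrac{\partial\widehat g(\eta)}{\partial\eta'}-\tfrac{\partial g(\eta)}{\partial\eta'}\Bigr\|+\Bigl\|\tfrac{\partial g(\widehat\eta)}{\partial\eta'}-\tfrac{\partial g(\eta^*)}{\partial\eta'}\Bigr\|.
\]
The first term vanishes by Assumption~\ref{ass:partition-consistency-highlevel}(iii). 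The second vanishes because $\partial g/\partial\eta'$ is continuous (Assumption~\ref{ass: regularity conditions}(ii)), $\widehat\eta\overset{p}{\to}\eta^*$ (Assumption~\ref{ass:partition-consistency-highlevel}(i)), and $\eta^*$ is interior (Assumption~\ref{ass: regularity assumptions reference point}), so $\widehat\eta\in N$ with probability approaching one. In the same way, $\widehat\Gamma_\eta\overset{p}{\to}\Gamma_\eta$ by continuity of $\partial\Gamma$ (Assumption~\ref{ass: regularity conditions}(iii)). By Assumption~\ref{ass:partition-consistency-highlevel}(iv), $\widehat W\overset{p}{\to}W$. Selecting the $\alpha$- and $\beta$-columns for partition $S$ is a fixed linear map, so the blocks $\widehat G_{\alpha,S},\widehat G_{\beta,S},\widehat\Gamma_{\alpha,S},\widehat\Gamma_{\beta,S}$ are consistent as well.

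The main obstacle is continuity of the map
\[
(G_\alpha,G_\beta,\Gamma_\alpha,\Gamma_\beta,W)\mapsto\sqrt{|S^c|}\,\bigl\|\bigl(-\Gamma_\alpha(G_\alpha'WG_\alpha)^{-1}G_\alpha'WG_\beta+\Gamma_\beta\bigr)\Sigma_S\bigr\|_2
\]
at the population values, since the matrix inversion could fail. This is where admissibility enters. For $S\in\mathcal S_{\mathrm{AD}}$, Assumption~\ref{ass: admissibility}(ii) gives $\mathrm{rank}(G_\alpha)=|S|$, and $W$ is positive definite, so $G_\alpha'WG_\alpha$ is nonsingular. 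Inversion is continuous on nonsingular matrices, and since $\widehat G_\alpha'\widehat W\widehat G_\alpha\overset{p}{\to}G_\alpha'WG_\alpha$, the estimate is nonsingular with probability approaching one. Matrix products, addition, and the spectral norm (which is Lipschitz) are continuous, and $\Sigma_S$ is nonrandom. Together these give $\widehat K_S\overset{p}{\to}K_S$, where $K_S=\sqrt{|S^c|}\,\|D_{\gamma\beta,S}\Sigma_S\|_2$ equals the population statistic by Lemma~\ref{lemma: derivatives}(ii).

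One subtlety remains. In the algorithm, $\widehat K_S$ is computed only for partitions that pass the rank threshold, and partitions outside $\mathcal S_{\mathrm{AD}}$ could have ill-defined statistics. Event (a) resolves this, because on that event the minimization runs over exactly $\mathcal S_{\mathrm{AD}}$. Intersecting (a) with (b) and taking the union bound over the finitely many partitions completes the argument.
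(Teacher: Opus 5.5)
Your proof is correct and follows the paper's argument essentially step for step. You take the strict gap $\Delta$ from uniqueness and intersect the admissible-set-consistency event with a uniform $\widehat K_S$-consistency event (the paper uses $\Delta/3$ where you use $\Delta/2$). You then get pointwise consistency of $\widehat K_S$ from uniform Jacobian convergence, nonsingularity of $G_{\alpha}'WG_{\alpha}$, and the continuous mapping theorem, which the paper isolates as a separate lemma. The only detail you leave implicit is the case $\mathcal S_{\mathrm{AD}}(\eta^*)=\{S^*\}$: there $\Delta$ is undefined, and the conclusion follows from event (a) alone.
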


\begin{proof}
See Appendix \ref{appendix:main-proofs}.
\end{proof}

\subsection{Direct and re-estimated derivatives}
\label{appendix:direct-derivative}
\label{app:direct-derivative}

This subsection formalizes why Algorithm in Section \ref{sec: Implementation} can compute the
local derivative at the common reference point rather than re-estimating the model separately
for every candidate partition. Fix a partition \(S\). For a generic fixed value \(\beta\), define
\[
\widehat\alpha(\beta;S)
\in
\arg\min_{\alpha\in A_S}
\widehat g(\alpha,\beta;S)'\widehat W\widehat g(\alpha,\beta;S).
\]
Let
\[
\widehat G_{\alpha}(\eta;S):=\frac{\partial\widehat g(\eta;S)}{\partial\alpha'},
\qquad
\widehat G_{\beta}(\eta;S):=\frac{\partial\widehat g(\eta;S)}{\partial\beta'}.
\]
For the \(S\)-specific ordering of the common reference point, write
\(\widehat\eta=(\widehat\alpha_{S},\widehat\beta_{S})\). The direct derivative used by
the algorithm is
\begin{equation}
\widehat D_{\alpha\beta}(S)
:=-
\left[\widehat G_{\alpha}(\widehat\eta;S)'
\widehat W
\widehat G_{\alpha}(\widehat\eta;S)
\right]^{-1}
\widehat G_{\alpha}(\widehat\eta;S)'
\widehat W
\widehat G_{\beta}(\widehat\eta;S).
\label{eq:direct-derivative-estimator}
\end{equation}

\begin{assumption}[Local regularity for direct derivative computation]
\label{ass:direct-derivative}
For the fixed partition \(S\), let \(\eta^*=(\alpha^\ast,\beta^\ast)\) denote the population
reference point under the \(S\)-specific ordering. The following conditions hold.
\begin{enumerate}[label=(\roman*)]
\item \(\eta^*\in\operatorname{int}(N)\), \(g(\eta^*;S)=0\), and
\(G_{\alpha,S}\) has full column rank.
\item \(\widehat\eta\overset{p}{\to}\eta^*\) and
\(\widehat\beta_{S}\overset{p}{\to}\beta^\ast\). 
\item \(\widehat W\overset{p}{\to}W\), where \(W\) is symmetric positive definite.
\item The local sample solution \(\widehat\alpha(\widehat\beta_{S};S)\) exists with
probability approaching one, satisfies the sample first-order conditions, and
\[
(\widehat\alpha(\widehat\beta_{S};S),\widehat\beta_{S})\overset{p}{\to}(\alpha^\ast,\beta^\ast).
\]
\item In a neighborhood of \(\eta^*\), \(\widehat g\), its first derivatives, and its second
derivatives converge uniformly in probability to their population counterparts.
\end{enumerate}
\end{assumption}

\begin{theorem}[Equivalence of direct and re-estimated derivatives]
\label{thm:direct-reestimated-derivative}
Suppose Assumption~\ref{ass:direct-derivative} holds. Then, with probability approaching one,
the derivative of the sample re-estimation map exists at \(\widehat\beta_{S}\), and
\[
\left.
\frac{\partial\widehat\alpha(\beta;S)}{\partial\beta'}
\right|_{\beta=\widehat\beta_{S}}
=
\widehat D_{\alpha\beta}(S)+o_p(1)
=
D_{\alpha\beta,S}+o_p(1),
\]
where \(D_{\alpha\beta,S}\) is defined in \eqref{eq:appendix-D-definitions}.
\end{theorem}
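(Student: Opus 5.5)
The plan is to apply the Implicit Function Theorem to the sample first-order conditions of the conditional MD problem at $\widehat\beta_S$. Then I will show that the resulting exact derivative differs from the direct plug-in $\widehat D_{\alpha\beta}(S)$ only by terms that vanish in probability. Finally, I will show that $\widehat D_{\alpha\beta}(S)$ converges to $D_{\alpha\beta,S}$.

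\emph{The FOC and its derivatives.} Define the sample FOC map
\[
\widehat F(\alpha,\beta):=\widehat G_\alpha(\alpha,\beta;S)'\widehat W\,\widehat g(\alpha,\beta;S),
\]
so that $\widehat F(\widehat\alpha(\beta;S),\beta)=0$ near $\widehat\beta_S$ by Assumption~\ref{ass:direct-derivative}(iv). Its partial derivatives are
\[
\partial_\alpha \widehat F=\widehat G_\alpha'\widehat W\widehat G_\alpha+\widehat H_{\alpha\alpha}[\widehat W\widehat g],
\qquad
\partial_\beta \widehat F=\widehat G_\alpha'\widehat W\widehat G_\beta+\widehat H_{\alpha\beta}[\widehat W\widehat g].
\]
Here $\widehat H[\cdot]$ denotes the second-derivative arrays of $\widehat g$ contracted with the weighted residual vector.

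\emph{Limits of the FOC derivatives.} Assumption~\ref{ass:direct-derivative}(v) gives uniform convergence of $\widehat g$ and its first and second derivatives. Together with consistency of $(\widehat\alpha(\widehat\beta_S;S),\widehat\beta_S)$ and a standard continuity/uniform-convergence argument, this yields
\[
\partial_\alpha\widehat F\to G_{\alpha,S}'WG_{\alpha,S},\qquad \partial_\beta\widehat F\to G_{\alpha,S}'WG_{\beta,S}
\]
in probability, uniformly on a shrinking neighborhood. The key point is that the residual $\widehat g(\widehat\alpha(\widehat\beta_S),\widehat\beta_S)\to g(\eta^*)=0$ by Assumption~\ref{ass:direct-derivative}(i). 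The second derivatives are bounded in probability, so both Hessian terms are $o_p(1)$.

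\emph{Exact derivative by the IFT.} Because $G_{\alpha,S}$ has full column rank and $W$ is positive definite, the limit $G_{\alpha,S}'WG_{\alpha,S}$ is nonsingular. Hence $\partial_\alpha\widehat F$ is invertible with probability approaching one, and the IFT applies at $\widehat\beta_S$. It gives existence and differentiability of the local solution branch, which coincides with $\widehat\alpha(\cdot;S)$ near $\widehat\beta_S$ by local uniqueness, with
\[
\partial\widehat\alpha/\partial\beta'=-(\partial_\alpha\widehat F)^{-1}\partial_\beta\widehat F.
\]
Continuity of matrix inversion then delivers $\partial\widehat\alpha/\partial\beta'=D_{\alpha\beta,S}+o_p(1)$.

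\emph{The direct plug-in.} Separately, $\widehat G_\alpha(\widehat\eta;S)\to G_{\alpha,S}$ and $\widehat G_\beta(\widehat\eta;S)\to G_{\beta,S}$ by Assumption~\ref{ass:direct-derivative}(ii) and (v) together with continuity of the population Jacobian. With $\widehat W\to W$, this gives $\widehat D_{\alpha\beta}(S)=D_{\alpha\beta,S}+o_p(1)$. The triangle inequality then links the two expressions and establishes the displayed chain of equalities.

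\emph{Main obstacle.} I expect the hardest step to be making the IFT step rigorous in a random setting. The sample solution must be the IFT branch: local uniqueness needs a positive-definite Hessian uniformly on a fixed neighborhood with probability approaching one. The Hessian residual terms must also be controlled at a random point. I would first establish uniform invertibility of $\partial_\alpha\widehat F$ on a fixed ball around $\eta^*$ with probability approaching one, using the uniform convergence in (v) and the fact that the residual is small near $\eta^*$. I would then invoke a quantitative (uniform-radius) IFT on the event where this holds.
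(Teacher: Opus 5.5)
Your proposal is correct and follows essentially the same route as the paper's proof. You differentiate the sample first-order condition $\widehat G_\alpha'\widehat W\widehat g=0$, show that the second-derivative terms multiplying $\widehat W\widehat g$ are $o_p(1)$ because the residual at $(\widehat\alpha(\widehat\beta_S;S),\widehat\beta_S)$ converges to $g(\eta^*)=0$, apply the implicit function theorem once $\widehat G_\alpha'\widehat W\widehat G_\alpha$ is nonsingular with probability approaching one, and link the re-estimated and plug-in derivatives through their common limit $D_{\alpha\beta,S}$; your extra local-uniqueness step (a uniformly positive-definite sample Hessian on a fixed ball, needed because Assumption~\ref{ass:direct-derivative}(iv) only gives the first-order condition at $\widehat\beta_S$ itself) rigorously identifies the IFT branch with the re-estimation map, a point the paper simply asserts by invoking a ``sample implicit function theorem.''
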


\begin{proof}
See Appendix \ref{appendix:main-proofs}.
\end{proof}

\begin{corollary}[Derivative of the object of interest]
\label{cor:direct-target-derivative}
Suppose the conditions of Theorem~\ref{thm:direct-reestimated-derivative} hold and
\(\Gamma\) is continuously differentiable in a neighborhood of \(\eta^*\). Define
\[
\widehat D_{\gamma\beta}(S)
:=
\Gamma_{\alpha}(\widehat\eta;S)\widehat D_{\alpha\beta}(S)
+
\Gamma_{\beta}(\widehat\eta;S).
\]
Then
\[
\left.
\frac{\partial\widehat\gamma(\beta;S)}{\partial\beta'}
\right|_{\beta=\widehat\beta_{S}}
=
\widehat D_{\gamma\beta}(S)+o_p(1)
=
D_{\gamma\beta,S}+o_p(1).
\]
\end{corollary}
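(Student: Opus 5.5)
The plan is to combine Corollary~\ref{cor:direct-target-derivative}'s two equalities through the chain rule applied to the sample target map $\widehat\gamma(\beta;S)=\Gamma(\widehat\alpha(\beta;S),\beta)$. By Theorem~\ref{thm:direct-reestimated-derivative}, with probability approaching one the sample re-estimation map $\beta\mapsto\widehat\alpha(\beta;S)$ is differentiable at $\widehat\beta_S$. Since $\Gamma$ is continuously differentiable near $\eta^*$, and $(\widehat\alpha(\widehat\beta_S;S),\widehat\beta_S)\overset{p}{\to}(\alpha^*,\beta^*)$ by Assumption~\ref{ass:direct-derivative}(iv), on that same event $\widehat\gamma(\cdot;S)$ is differentiable at $\widehat\beta_S$. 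Its derivative is
\[
\left.\frac{\partial\widehat\gamma(\beta;S)}{\partial\beta'}\right|_{\beta=\widehat\beta_S}
=\Gamma_\alpha(\widetilde\eta)\left.\frac{\partial\widehat\alpha(\beta;S)}{\partial\beta'}\right|_{\beta=\widehat\beta_S}+\Gamma_\beta(\widetilde\eta),
\qquad \widetilde\eta:=(\widehat\alpha(\widehat\beta_S;S),\widehat\beta_S).
\]

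Next I replace each factor by its limit. By the continuous mapping theorem and continuity of $\partial\Gamma$ at $\eta^*$, $\Gamma_\alpha(\widetilde\eta)\overset{p}{\to}\Gamma_{\alpha,S}$ and $\Gamma_\beta(\widetilde\eta)\overset{p}{\to}\Gamma_{\beta,S}$. Theorem~\ref{thm:direct-reestimated-derivative} gives that the $\widehat\alpha$-derivative equals $D_{\alpha\beta,S}+o_p(1)$. The product of an $O_p(1)$ matrix converging in probability with an $o_p(1)$ perturbation is $o_p(1)$, so the derivative above equals $\Gamma_{\alpha,S}D_{\alpha\beta,S}+\Gamma_{\beta,S}+o_p(1)=D_{\gamma\beta,S}+o_p(1)$, by \eqref{eq:appendix-D-definitions}.

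For the plug-in quantity, $\widehat\eta\overset{p}{\to}\eta^*$ by Assumption~\ref{ass:direct-derivative}(ii), which gives $\Gamma_\alpha(\widehat\eta;S)\overset{p}{\to}\Gamma_{\alpha,S}$ and $\Gamma_\beta(\widehat\eta;S)\overset{p}{\to}\Gamma_{\beta,S}$. Theorem~\ref{thm:direct-reestimated-derivative} also gives $\widehat D_{\alpha\beta}(S)=D_{\alpha\beta,S}+o_p(1)$. Hence $\widehat D_{\gamma\beta}(S)=D_{\gamma\beta,S}+o_p(1)$, and subtracting the two expansions yields the middle equality.

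The only delicate point is that the derivative is defined only on an event of probability approaching one. I would handle this by defining the derivative arbitrarily off that event, which does not affect $o_p$ statements. I would also note that $\Gamma$ is evaluated at $\widetilde\eta$ rather than at $\widehat\eta$. These two points need not coincide in finite samples, but both converge to $\eta^*$, so the discrepancy is absorbed into the $o_p(1)$ term.
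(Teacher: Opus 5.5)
Your proposal is correct and follows essentially the same route as the paper: apply the chain rule to $\widehat\gamma(\beta;S)=\Gamma(\widehat\alpha(\beta;S),\beta)$ at $\widetilde\eta_{S}$, then use Theorem~\ref{thm:direct-reestimated-derivative}, continuity of $\Gamma_\alpha,\Gamma_\beta$, and consistency of both $\widetilde\eta_{S}$ and $\widehat\eta$ to expand the re-estimated and direct derivatives around $D_{\gamma\beta,S}$, and combine the two expansions. Your extra remarks on the probability-approaching-one event and on $\widetilde\eta$ versus $\widehat\eta$ make explicit what the paper leaves implicit; the event should also include $\widetilde\eta$ lying in the neighborhood where $\Gamma$ is $C^1$, which again has probability approaching one.
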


\begin{proof}
See Appendix \ref{appendix:main-proofs}.
\end{proof}

\begin{remark}[Uniformity over partitions]
\label{rem:uniform-direct-derivative}
Let \(\mathcal S_0\) be a finite collection of candidate partitions. If
Assumption~\ref{ass:direct-derivative} holds uniformly over \(S\in\mathcal S_0\) and
\[
\inf_{S\in\mathcal S_0}
\lambda_{\min}\left(G_{\alpha,S}'WG_{\alpha,S}\right)>0,
\]
then
\[
\max_{S\in\mathcal S_0}
\left\|
\left.
\frac{\partial\widehat\alpha(\beta;S)}{\partial\beta'}
\right|_{\beta=\widehat\beta_{S}}
-
\widehat D_{\alpha\beta}(S)
\right\|=o_p(1),
\]
and the analogous statement holds for \(\widehat D_{\gamma\beta}(S)\). Hence the direct
algorithmic derivative is uniformly asymptotically equivalent to the derivative obtained after
partition-specific re-estimation.
\end{remark}

\section{Extension to MLE}

The examples below clarify when the population system \(g(\eta;S)\) depends on the partition.

\begin{example}[Maximum likelihood]
\label{ex:mle-partition-dependent}
In maximum likelihood estimation, the population moments are score equations. For a
partition \(S\),
\begin{equation}
 g_P(\alpha,\beta;S)
 :=E_P\left[\frac{\partial}{\partial\alpha}\log f(X_i\mid \alpha,\beta)\right].
\label{eq:mle-score-moments}
\end{equation}
Here \(S\) selects the score components included in the system. Hence both the functional
form and the dimension \(d_g(S)\) of \(g_P(\cdot;S)\) may depend on \(S\).
\end{example}

\begin{example}[Classical minimum distance]
\label{ex:cmd-partition-independent}
In the classical minimum-distance framework,
\begin{equation}
 g_P(\alpha,\beta;S)=\theta(P)-\psi(\alpha,\beta).
\label{eq:cmd-moments}
\end{equation}
The same population restrictions are used for all partitions, up to the relabeling of coordinates
into estimated and fixed blocks.
\end{example}

In CMD, and in GMM settings where the moment conditions do not depend on the partition,
a primitive condition for Assumption~\ref{ass: regularity assumptions reference point} to hold is that
the first-stage calibration is compatible with the population equations, i.e., there exists
\(\alpha\) such that \(g(\alpha,\beta_1;S_1)=0\).

In MLE, by contrast, \(g_P(\eta;S)\) typically depends on \(S\) because \(S\) selects a subset
of score equations. A convenient sufficient condition for
Assumption~\ref{ass: regularity assumptions reference point} is that \(\eta^*\) satisfies the
mean-zero restriction for the full score vector, not only for the score components associated
with the first-stage estimated parameters. Appendix~\ref{appendix:mle-reference-point}
constructs such a reference point.

\subsection{Reference point for the MLE case}
\label{appendix:mle-reference-point}
\label{appendix: MLE reference point}

Suppose the researcher first estimates the model by MLE under a calibration strategy
\((S_1,\beta_1)\) and obtains
\[
\widetilde\eta_n:=\bigl(\widehat\alpha_n(\beta_1;S_1),\beta_1\bigr).
\]
As discussed above, the MLE moment map
\(g_P(\eta;S)\) depends on \(S\). Consequently, \(\widetilde\eta_n\) need not satisfy the
mean-zero restrictions for score components corresponding to parameters that were fixed in the
first stage. This may violate Assumption~\ref{ass: regularity assumptions reference point}
when the reference point must be observationally equivalent across candidate partitions.

Let
\[
S_{\mathrm{full}}:=\{1,\ldots,d_\eta\},
\qquad
s(X_i,\eta):=\frac{\partial}{\partial\eta}\log f(X_i\mid \eta),
\]
and define the full-score sample and population moments by
\[
\widehat g_n(\eta;S_{\mathrm{full}}):=\frac{1}{n}\sum_{i=1}^n s(X_i,\eta),
\qquad
g(\eta;S_{\mathrm{full}}):=E_P[s(X_i,\eta)].
\]
Let
\[
Q_n(\eta):=\widehat g_n(\eta;S_{\mathrm{full}})'W_n\widehat g_n(\eta;S_{\mathrm{full}}),
\qquad
Q(\eta):=g(\eta;S_{\mathrm{full}})'Wg(\eta;S_{\mathrm{full}}),
\]
where \(W_n\overset{p}{\to}W\) and \(W\) is symmetric positive definite. To avoid confusion
with the weighting matrix, let \(H\) denote a fixed symmetric positive-definite penalty metric
and write \(\|x\|_H^2:=x'Hx\). Define
\begin{equation}
\widehat\eta_n
\in
\arg\min_{\eta\in N}
\left\{Q_n(\eta)+\lambda_n\|\eta-\widetilde\eta_n\|_H^2\right\}.
\label{eq:penalized_full_score}
\end{equation}
The special case \(H=I_{d_\eta}\) gives the Euclidean penalty.

Define the full-score identified set
\[
N_0(S_{\mathrm{full}}):=\{\eta\in N:g(\eta;S_{\mathrm{full}})=0\}.
\]
Since every partition-specific score system is a subvector of the full score, any
\(\eta\in N_0(S_{\mathrm{full}})\) also belongs to \(N_0(S)\) for every partition \(S\).
If \(\widetilde\eta_n\overset{p}{\to}\widetilde\eta_0\), define
\begin{equation}
\eta^*
:=
\arg\min_{\eta\in N_0(S_{\mathrm{full}})}\|\eta-\widetilde\eta_0\|_H^2.
\label{eq:projection_eta_star}
\end{equation}
Thus \(\eta^*\) is the element of the full-score identified set closest to the probability
limit of the first-stage estimate.

\begin{assumption}[Penalized full-score reference point]
\label{ass:mle-reference}
The following conditions hold.
\begin{enumerate}[label=(\roman*)]
\item \(N\subset\mathbb R^{d_\eta}\) is compact and \(H\) is symmetric positive definite.
\item \(\widetilde\eta_n\overset{p}{\to}\widetilde\eta_0\).
\item \(\sup_{\eta\in N}|Q_n(\eta)-Q(\eta)|\overset{p}{\to}0\).
\item \(N_0(S_{\mathrm{full}})\) is nonempty and separated by \(Q\): for every
\(\varepsilon>0\),
\[
\inf_{\eta\in N:d_H(\eta,N_0(S_{\mathrm{full}}))\ge\varepsilon}Q(\eta)>0,
\qquad
d_H(\eta,N_0):=\inf_{\zeta\in N_0}\|\eta-\zeta\|_H.
\]
\item The projection in \eqref{eq:projection_eta_star} exists and is unique.
\item Along the full-score identified set,
\[
\sup_{\eta\in N_0(S_{\mathrm{full}})}Q_n(\eta)=O_p(n^{-1}).
\]
\item The tuning parameter satisfies \(\lambda_n\to0\) and \(n\lambda_n\to\infty\).
\end{enumerate}
\end{assumption}

Assumption~\ref{ass:mle-reference} collects the conditions needed for the penalized
full-score estimator. Parts (i)--(iii) are standard extremum-estimation conditions. Parts
(iv)--(vi) handle the possibility that the full-score problem is set-identified. Part (vii)
ensures that the penalty vanishes asymptotically but dominates the \(O_p(n^{-1})\) stochastic
variation of \(Q_n\) along flat directions. A convenient deterministic choice is
\(\lambda_n=n^{-\kappa}\) with \(0<\kappa<1\). If one also wants the penalty to be
negligible at the \(\sqrt n\) scale in well-identified directions, one may choose
\(\kappa\in(1/2,1)\), for instance \(\lambda_n=n^{-2/3}\).

\begin{theorem}[Consistency of the penalized full-score reference point]
\label{thm:projection_consistency_penalized_score}
Under Assumption~\ref{ass:mle-reference},
\[
\widehat\eta_n\overset{p}{\to}\eta^*.
\]
Consequently, if \(\eta^*\in\operatorname{int}(N)\), the estimator in
\eqref{eq:penalized_full_score} delivers a reference point satisfying
Assumption~\ref{ass: regularity assumptions reference point} for the MLE partition-selection
problem.
\end{theorem}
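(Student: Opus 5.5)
This is a penalized-extremum-estimation argument for a set-identified criterion, and I would run it in three steps. First, show that $\widehat\eta_n$ approaches the full-score identified set $N_0:=N_0(S_{\mathrm{full}})$. Second, show that the vanishing penalty pins down the projection point $\eta^*$ inside that set. Third, conclude interiority.

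\textbf{Step 1: approach to $N_0$.} Compare $\widehat\eta_n$ with the feasible point $\eta^*\in N_0$. By optimality,
\[
Q_n(\widehat\eta_n)+\lambda_n\|\widehat\eta_n-\widetilde\eta_n\|_H^2
\le Q_n(\eta^*)+\lambda_n\|\eta^*-\widetilde\eta_n\|_H^2 .
\]
By Assumption~\ref{ass:mle-reference}(vi), $Q_n(\eta^*)=O_p(n^{-1})$. Compactness of $N$ bounds the penalty terms, so the right-hand side is $O_p(n^{-1})+O_p(\lambda_n)=o_p(1)$. Hence $Q_n(\widehat\eta_n)=o_p(1)$. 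Uniform convergence (iii) then gives $Q(\widehat\eta_n)=o_p(1)$. The separation condition (iv) finally yields $d_H(\widehat\eta_n,N_0)\overset{p}{\to}0$.

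\textbf{Step 2: selection of $\eta^*$ within $N_0$.} This is the main obstacle. I would divide the displayed inequality by $\lambda_n$ and use $Q_n\ge0$ to obtain
\[
\|\widehat\eta_n-\widetilde\eta_n\|_H^2\le \|\eta^*-\widetilde\eta_n\|_H^2+O_p\big((n\lambda_n)^{-1}\big).
\]
By (vii), the remainder is $o_p(1)$. Using (ii), the right side converges to $\|\eta^*-\widetilde\eta_0\|_H^2=\min_{\zeta\in N_0}\|\zeta-\widetilde\eta_0\|_H^2$. For the left side, Step 1 lets me pick $\zeta_n\in N_0$ with $\|\widehat\eta_n-\zeta_n\|_H\overset{p}{\to}0$. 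Since $N_0$ is closed in compact $N$ (by continuity of $Q$), a subsequence argument applies. Any limit point $\bar\eta$ of $\widehat\eta_n$ (along subsequences, in probability) lies in $N_0$ and satisfies $\|\bar\eta-\widetilde\eta_0\|_H\le\min_{N_0}\|\cdot-\widetilde\eta_0\|_H$. So $\bar\eta$ is a minimizer of the projection problem, and uniqueness (v) forces $\bar\eta=\eta^*$.

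To make this rigorous in probability, I would argue by contradiction. Suppose $\Pr(\|\widehat\eta_n-\eta^*\|_H>\varepsilon)\not\to0$. By compactness and uniqueness,
\[
\delta:=\inf\{\|\zeta-\widetilde\eta_0\|_H^2-\|\eta^*-\widetilde\eta_0\|_H^2:\ \zeta\in N_0,\ \|\zeta-\eta^*\|_H\ge\varepsilon/2\}>0 .
\]
A continuity argument then shows that on that event the left side exceeds the right side by at least $\delta/2$ with nonvanishing probability. This contradicts the inequality above.

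\textbf{Step 3: interiority.} Given $\eta^*\in N_0\subseteq N_0(S)$ for every $S$ (each partition's score system is a subvector of the full score), and given $\eta^*\in\operatorname{int}(N)$, both parts of Assumption~\ref{ass: regularity assumptions reference point} hold. This delivers the consequence statement.
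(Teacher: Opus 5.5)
Your proposal is correct and follows the paper's proof step for step: it uses the same basic inequality against the feasible point $\eta^*$, the same use of (vi)--(vii) and then (iii)--(iv) to obtain $d_H(\widehat\eta_n,N_0)\overset{p}{\to}0$, and the same division by $\lambda_n$ combined with uniqueness of the projection in (v). The differences are minor: you close with a well-separated-minimum contradiction instead of the paper's subsequence/convergence-in-distribution argument, which makes explicit the closedness of $N_0$ that the paper uses implicitly, and you verify the interiority/inclusion consequence that the paper leaves to the surrounding text.
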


\begin{proof}
See Appendix \ref{appendix:main-proofs}.
\end{proof}

\section{Main Proofs}
\label{appendix:main-proofs}

\begin{proof}[Proof of Lemma~\ref{lemma: derivatives}]
Fix a partition \(S\in\mathcal S_{\mathrm{AD}}(\eta^*)\). Under the \(S\)-specific ordering, write
\(\eta^*=(\alpha^\ast,\beta^\ast)\). By
Assumption~\ref{ass: regularity assumptions reference point}, the reference point satisfies
\(g(\alpha^\ast,\beta^\ast;S)=0\). The minimum-distance criterion at the fixed value
\(\beta^\ast\) is
\[
Q(\alpha,\beta^\ast;S):=g(\alpha,\beta^\ast;S)'Wg(\alpha,\beta^\ast;S).
\]
Since \(W\) is positive definite, \(Q(\alpha,\beta^\ast;S)\geq0\) for every
\(\alpha\). At \(\alpha=\alpha^\ast\), the criterion equals zero. Hence
\(\alpha^\ast\) solves the conditional MD problem at \(\beta^\ast\), that is,
\[
\alpha^\ast=\alpha(\beta^\ast;S).
\]
This proves part (i).

To obtain the derivative, define the population first-order condition map
\[
F(\alpha,\beta;S)
:=
\frac{\partial Q(\alpha,\beta;S)}{\partial\alpha}
=
2G_\alpha(\alpha,\beta;S)'Wg(\alpha,\beta;S)
\in\mathbb R^{|S|},
\]
where
\[
G_\alpha(\alpha,\beta;S):=
\frac{\partial g(\alpha,\beta;S)}{\partial\alpha'}.
\]
By Assumption~\ref{ass: regularity conditions}, \(g\) is twice continuously differentiable,
so \(F\) is continuously differentiable in a neighborhood of
\((\alpha^\ast,\beta^\ast)\). Since \(g(\alpha^\ast,\beta^\ast;S)=0\),
\[
F(\alpha^\ast,\beta^\ast;S)=0.
\]
The derivative of \(F\) with respect to \(\alpha\), evaluated at the reference point, is
\[
\frac{\partial F(\alpha^\ast,\beta^\ast;S)}{\partial\alpha'}
=
2G_{\alpha,S}'WG_{\alpha,S},
\]
because all terms involving second derivatives of \(g\) are multiplied by
\(g(\alpha^\ast,\beta^\ast;S)=0\). Similarly,
\[
\frac{\partial F(\alpha^\ast,\beta^\ast;S)}{\partial\beta'}
=
2G_{\alpha,S}'WG_{\beta,S}.
\]
The rank condition in \(S\in\mathcal S_{\mathrm{AD}}(\eta^*)\) gives
\(\operatorname{rank}(G_{\alpha,S})=|S|\). Since \(W\) is positive definite,
\(G_{\alpha,S}'WG_{\alpha,S}\) is nonsingular. Therefore the implicit function theorem
applied to \(F(\alpha,\beta;S)=0\) yields a neighborhood of \(\beta^\ast\) and a unique
differentiable map \(\beta\mapsto\alpha(\beta;S)\) satisfying
\[
F(\alpha(\beta;S),\beta;S)=0,
\qquad
\alpha(\beta^\ast;S)=\alpha^\ast.
\]
Differentiating the identity \(F(\alpha(\beta;S),\beta;S)=0\) at
\(\beta=\beta^\ast\) gives
\[
0
=
\frac{\partial F}{\partial\alpha'}
\frac{\partial\alpha(\beta^\ast;S)}{\partial\beta'}
+
\frac{\partial F}{\partial\beta'}.
\]
Substituting the two derivative blocks above and cancelling the common factor 2,
\[
\frac{\partial\alpha(\beta^\ast;S)}{\partial\beta'}
=
-
\left(G_{\alpha,S}'WG_{\alpha,S}\right)^{-1}
G_{\alpha,S}'WG_{\beta,S}
=
D_{\alpha\beta,S}.
\]
Since \(\Gamma\) is continuously differentiable by Assumption~\ref{ass: regularity conditions},
the chain rule applied to
\[\gamma(\beta;S)=\Gamma(\alpha(\beta;S),\beta)\]
gives
\[
\frac{\partial\gamma(\beta^\ast;S)}{\partial\beta'}
=
\Gamma_{\alpha,S}D_{\alpha\beta,S}+\Gamma_{\beta,S}
=
D_{\gamma\beta,S}.
\]
This proves part (ii).

It remains to prove the first-order expansion. Since \(\gamma(\cdot;S)\) is differentiable at
\(\beta^\ast\), it is Fr\'echet differentiable in the finite-dimensional Euclidean space, and
therefore
\[
\frac{
\left\|
\gamma(\beta;S)-\gamma(\beta^\ast;S)-D_{\gamma\beta,S}(\beta-\beta^\ast)
\right\|_2
}{
\|\beta-\beta^\ast\|_2
}
\longrightarrow0
\qquad\text{as }\beta\to\beta^\ast, \; \beta\neq \beta^*.
\]
With the notation
\[
R_S(\beta)
:=
\gamma(\beta;S)-\gamma(\beta^\ast;S)-D_{\gamma\beta,S}(\beta-\beta^\ast),
\]
this is exactly the pointwise statement
\[
\frac{\|R_S(\beta)\|_2}{\|\beta-\beta^\ast\|_2}\to0
\qquad\text{as }\beta\to\beta^\ast.
\]
We now turn this pointwise statement into the displayed uniform statement. Suppose, toward a
contradiction, that
\[
\sup_{0<\|\beta-\beta^\ast\|_2\leq\delta}
\frac{\|R_S(\beta)\|_2}{\|\beta-\beta^\ast\|_2}
\not\longrightarrow0
\qquad\text{as }\delta\downarrow0.
\]
Then there exist \(c>0\), a sequence \(\delta_m\downarrow0\), and points
\(\beta_m\) with \(0<\|\beta_m-\beta^\ast\|_2\leq\delta_m\) such that
\[
\frac{\|R_S(\beta_m)\|_2}{\|\beta_m-\beta^\ast\|_2}\geq c.
\]
But \(\|\beta_m-\beta^\ast\|_2\leq\delta_m\to0\), so \(\beta_m\to\beta^\ast\),
contradicting the pointwise Fr\'echet differentiability limit. Hence
\[
\sup_{0<\|\beta-\beta^\ast\|_2\leq\delta}
\frac{\|R_S(\beta)\|_2}{\|\beta-\beta^\ast\|_2}
\longrightarrow0
\qquad\text{as }\delta\downarrow0.
\]

Finally, consider the normalized neighborhood
\[
B_\epsilon(\beta^\ast;S)
:=
\left\{\beta_0\in B:
\left\|\Sigma_S^{-1}(\beta^\ast-\beta_0)\right\|_2
\leq \epsilon\sqrt{|S^c|}
\right\}.
\]
For any \(\beta_0\in B_\epsilon(\beta^\ast;S)\),
\[
\|\beta_0-\beta^\ast\|_2
=
\left\|\Sigma_S\Sigma_S^{-1}(\beta_0-\beta^\ast)\right\|_2
\leq
\|\Sigma_S\|_2
\left\|\Sigma_S^{-1}(\beta_0-\beta^\ast)\right\|_2
\leq
\|\Sigma_S\|_2\epsilon\sqrt{|S^c|}.
\]
Thus \(B_\epsilon(\beta^\ast;S)\) is contained in a Euclidean ball around \(\beta^\ast\)
of radius \(C_S\epsilon\), where \(C_S:=\|\Sigma_S\|_2\sqrt{|S^c|}\). Applying the
uniform differentiability bound on this ball gives
\[
\sup_{\beta_0\in B_\epsilon(\beta^\ast;S)}\|R_S(\beta_0)\|_2
\leq
\left(
\sup_{0<\|\beta-\beta^\ast\|_2\leq C_S\epsilon}
\frac{\|R_S(\beta)\|_2}{\|\beta-\beta^\ast\|_2}
\right)C_S\epsilon
=o(\epsilon).
\]
This proves part (iii) and completes the proof of the lemma.
\end{proof}

\begin{proof}[Proof of Theorem~\ref{thm:projection_consistency_penalized_score}]
Let
\[
P_n(\eta):=\|\eta-\widetilde\eta_n\|_H^2,
\qquad
P(\eta):=\|\eta-\widetilde\eta_0\|_H^2.
\]
By definition of \(\widehat\eta_n\) as a minimizer of
\eqref{eq:penalized_full_score}, and because \(\eta^*\in N_0(S_{\mathrm{full}})\),
we have
\begin{equation}
Q_n(\widehat\eta_n)+\lambda_nP_n(\widehat\eta_n)
\leq
Q_n(\eta^*)+\lambda_nP_n(\eta^*).
\label{eq:basic_optimality_appendix_revised}
\end{equation}

By Assumption~\ref{ass:mle-reference}(vi), since \(\eta^*\in N_0(S_{\mathrm{full}})\),
\[
Q_n(\eta^*)=O_p(n^{-1}).
\]
Combining this with Assumption~\ref{ass:mle-reference}(vii), which implies
\(n\lambda_n\to\infty\), gives
\[
\frac{Q_n(\eta^*)}{\lambda_n}=o_p(1).
\]
Using \eqref{eq:basic_optimality_appendix_revised}, we therefore obtain
\[
Q_n(\widehat\eta_n)
\leq
Q_n(\eta^*)+\lambda_nP_n(\eta^*).
\]
The first term is \(O_p(n^{-1})=o_p(1)\). The second term is also \(o_p(1)\), because
\(\lambda_n\to0\) by Assumption~\ref{ass:mle-reference}(vii) and
\(P_n(\eta^*)=O_p(1)\) by compactness of \(N\) and
Assumption~\ref{ass:mle-reference}(ii). Hence
\[
Q_n(\widehat\eta_n)=o_p(1).
\]

Next, Assumption~\ref{ass:mle-reference}(iii) gives uniform convergence of \(Q_n\) to \(Q\),
so
\[
Q(\widehat\eta_n)
\leq
Q_n(\widehat\eta_n)+\sup_{\eta\in N}|Q_n(\eta)-Q(\eta)|
=
o_p(1).
\]
By the separation condition in Assumption~\ref{ass:mle-reference}(iv), this implies
\begin{equation}
d_H(\widehat\eta_n,N_0(S_{\mathrm{full}}))\overset{p}{\to}0.
\label{eq:close_to_identified_set_appendix_revised}
\end{equation}
Indeed, if the distance from \(\widehat\eta_n\) to \(N_0(S_{\mathrm{full}})\) were bounded
away from zero with non-vanishing probability, the population criterion \(Q\) would be
bounded away from zero on that event, contradicting \(Q(\widehat\eta_n)=o_p(1)\).

Return now to \eqref{eq:basic_optimality_appendix_revised} and divide both sides by
\(\lambda_n\):
\[
P_n(\widehat\eta_n)
\leq
P_n(\eta^*)+\frac{Q_n(\eta^*)}{\lambda_n}
=
P_n(\eta^*)+o_p(1).
\]
Since \(\widetilde\eta_n\overset{p}{\to}\widetilde\eta_0\) and \(N\) is compact,
\[
\sup_{\eta\in N}|P_n(\eta)-P(\eta)|\overset{p}{\to}0.
\]
To see this, note that
\[
|P_n(\eta)-P(\eta)|
=
\left|
\|\eta-\widetilde\eta_n\|_H^2-
\|\eta-\widetilde\eta_0\|_H^2
\right|
\]
is uniformly controlled over compact \(N\) by a constant times
\(\|\widetilde\eta_n-\widetilde\eta_0\|_H\). Consequently,
\begin{equation}
P(\widehat\eta_n)
\leq
P(\eta^*)+o_p(1).
\label{eq:penalty_comparison_appendix_revised}
\end{equation}

We now use a subsequence argument. Take any subsequence of \(\{\widehat\eta_n\}\).
By compactness of \(N\), it has a further subsequence, still denoted by
\(\{\widehat\eta_n\}\) for notational simplicity, that converges in distribution to some
limit \(\widehat\eta\). By \eqref{eq:close_to_identified_set_appendix_revised}, every such
limit satisfies
\[
\widehat\eta\in N_0(S_{\mathrm{full}})
\qquad\text{a.s.}
\]
Moreover, by continuity of \(P\) and \eqref{eq:penalty_comparison_appendix_revised},
\[
P(\widehat\eta)\leq P(\eta^*)
\qquad\text{a.s.}
\]
But Assumption~\ref{ass:mle-reference}(v) states that \(\eta^*\) is the unique minimizer
of \(P\) over \(N_0(S_{\mathrm{full}})\). Therefore
\[
\widehat\eta=\eta^*
\qquad\text{a.s.}
\]
Since every subsequence has a further subsequence converging to \(\eta^*\), it follows that
\[
\widehat\eta_n\overset{p}{\to}\eta^*.
\]
This proves the theorem.
\end{proof}

\begin{proof}[Proof of Lemma~\ref{lemma: WorstBias}]
Fix a partition \(S\in\mathcal S_{\mathrm{AD}}(\eta^*)\) if $|S^c| = 0$, there are no fixed parameters, the sensitivity statistic $K_S = 0$, and the worst-case bias is exactly zero, so the equality holds trivially. Assume hereafter that $|S^c| > 0$.
Write
\[
D:=\frac{\partial\gamma(\beta^\ast;S)}{\partial\beta'}=D_{\gamma\beta,S}.
\]
By Lemma~\ref{lemma: unif convergence first-taylor expansion}, for
\(\beta_0\in B_\epsilon(\beta^\ast;S)\) we have the expansion
\[
\gamma(\beta_0;S)-\gamma(\beta^\ast;S)
=
D(\beta_0-\beta^\ast)+R_S(\beta_0),
\qquad
\sup_{\beta_0\in B_\epsilon(\beta^\ast;S)}\|R_S(\beta_0)\|_2=o(\epsilon).
\]
Equivalently,
\[
\gamma(\beta^\ast;S)-\gamma(\beta_0;S)
=
D(\beta^\ast-\beta_0)-R_S(\beta_0).
\]
By the triangle inequality, for each \(\beta_0\in B_\epsilon(\beta^\ast;S)\),
\[
\|\gamma(\beta^\ast;S)-\gamma(\beta_0;S)\|_2
\leq
\|D(\beta^\ast-\beta_0)\|_2+
\|R_S(\beta_0)\|_2.
\]
Taking the supremum over \(B_\epsilon(\beta^\ast;S)\) gives
\begin{equation}
\sup_{\beta_0\in B_\epsilon(\beta^\ast;S)}
\|\gamma(\beta^\ast;S)-\gamma(\beta_0;S)\|_2
\leq
\sup_{\beta_0\in B_\epsilon(\beta^\ast;S)}
\|D(\beta^\ast-\beta_0)\|_2
+
\sup_{\beta_0\in B_\epsilon(\beta^\ast;S)}
\|R_S(\beta_0)\|_2.
\label{eq:WorstBias_first_triangle_revised}
\end{equation}
Conversely, using
\[
D(\beta^\ast-\beta_0)
=
\gamma(\beta^\ast;S)-\gamma(\beta_0;S)+R_S(\beta_0),
\]
the triangle inequality gives, for each \(\beta_0\),
\[
\|D(\beta^\ast-\beta_0)\|_2
\leq
\|\gamma(\beta^\ast;S)-\gamma(\beta_0;S)\|_2+
\|R_S(\beta_0)\|_2.
\]
Taking the supremum gives
\begin{equation}
\sup_{\beta_0\in B_\epsilon(\beta^\ast;S)}
\|D(\beta^\ast-\beta_0)\|_2
\leq
\sup_{\beta_0\in B_\epsilon(\beta^\ast;S)}
\|\gamma(\beta^\ast;S)-\gamma(\beta_0;S)\|_2
+
\sup_{\beta_0\in B_\epsilon(\beta^\ast;S)}
\|R_S(\beta_0)\|_2.
\label{eq:WorstBias_second_triangle_revised}
\end{equation}
Combining \eqref{eq:WorstBias_first_triangle_revised} and
\eqref{eq:WorstBias_second_triangle_revised},
\begin{equation}
\left|
\sup_{\beta_0\in B_\epsilon(\beta^\ast;S)}
\|\gamma(\beta^\ast;S)-\gamma(\beta_0;S)\|_2
-
\sup_{\beta_0\in B_\epsilon(\beta^\ast;S)}
\|D(\beta^\ast-\beta_0)\|_2
\right|
\leq
\sup_{\beta_0\in B_\epsilon(\beta^\ast;S)}
\|R_S(\beta_0)\|_2
=
o(\epsilon).
\label{eq:WorstBias_sup_difference_revised}
\end{equation}
Thus,
\[
\operatorname{WorstBias}(\beta^\ast,\epsilon;S)
:=
\sup_{\beta_0\in B_\epsilon(\beta^\ast;S)}
\|\gamma(\beta^\ast;S)-\gamma(\beta_0;S)\|_2
=
\sup_{\beta_0\in B_\epsilon(\beta^\ast;S)}
\|D(\beta^\ast-\beta_0)\|_2
+o(\epsilon).
\]

Now reparametrize the constraint. For \(\beta_0\in B_\epsilon(\beta^\ast;S)\), define
\[
x:=\frac{1}{\epsilon\sqrt{|S^c|}}\Sigma_S^{-1}(\beta^\ast-\beta_0).
\]
Then \(\|x\|_2\leq1\) and
\[
\beta^\ast-\beta_0=\epsilon\sqrt{|S^c|}\,\Sigma_Sx.
\]
Therefore
\[
\begin{aligned}
\sup_{\beta_0\in B_\epsilon(\beta^\ast;S)}
\|D(\beta^\ast-\beta_0)\|_2
&=
\epsilon\sqrt{|S^c|}
\sup_{\|x\|_2\leq1}\|D\Sigma_Sx\|_2 \\
&=
\epsilon\sqrt{|S^c|}\,\|D\Sigma_S\|_2,
\end{aligned}
\]
where the last equality is the definition of the induced Euclidean operator norm. Recalling
\[
K_S:=\sqrt{|S^c|}\left\|
\frac{\partial\gamma(\beta^\ast;S)}{\partial\beta'}\Sigma_S
\right\|_2
=
\sqrt{|S^c|}\|D\Sigma_S\|_2,
\]
we conclude that
\[
\operatorname{WorstBias}(\beta^\ast,\epsilon;S)
=
\epsilon K_S+o(\epsilon).
\]
This completes the proof.
\end{proof}

\begin{proof}[Proof of Theorem~\ref{thm: sufficiency}]
Let
\[
\mathcal S_0:=\mathcal S_{\mathrm{AD}}(\eta^*),
\qquad
W_S(\epsilon):=\operatorname{WorstBias}(\beta^\ast,\epsilon;S).
\]
Since \(d_\eta<\infty\), the number of possible partitions is finite, and hence
\(\mathcal S_0\) is finite. Let
\[
S^*=\arg\min_{S\in\mathcal S_0}K_S.
\]
By Assumption~\ref{ass: uniq}, this minimizer is unique. By Lemma~\ref{lemma: WorstBias}, for
each \(S\in\mathcal S_0\),
\[
W_S(\epsilon)=\epsilon K_S+o_S(\epsilon).
\]
Equivalently, for \(0<\epsilon\) small,
\[
W_S(\epsilon)=\epsilon\{K_S+r_S(\epsilon)\},
\qquad
r_S(\epsilon)\to0
\quad\text{as }\epsilon\downarrow0.
\]
Because \(\mathcal S_0\) is finite,
\[
\rho(\epsilon):=\max_{S\in\mathcal S_0}|r_S(\epsilon)|
\longrightarrow0
\qquad\text{as }\epsilon\downarrow0.
\]

If \(\mathcal S_0=\{S^*\}\), the result is immediate. Otherwise, define the strict gap
\[
\Delta
:=
\min_{S\in\mathcal S_0,\ S\neq S^*}
\{K_S-K_{S^*}\}.
\]
By uniqueness of \(S^*\), \(\Delta>0\). Since \(\rho(\epsilon)\to0\), there exists
\(E>0\) such that, for all \(0<\epsilon\leq E\),
\[
\rho(\epsilon)<\frac{\Delta}{3}.
\]
Hence, for any \(S\in\mathcal S_0\) with \(S\neq S^*\) and any
\(0<\epsilon\leq E\),
\[
\begin{aligned}
W_S(\epsilon)-W_{S^*}(\epsilon)
&=
\epsilon\{K_S-K_{S^*}\}
+
\epsilon\{r_S(\epsilon)-r_{S^*}(\epsilon)\} \\
&\geq
\epsilon\Delta-2\epsilon\rho(\epsilon) \\
&>
\epsilon\Delta-\frac{2}{3}\epsilon\Delta \\
&=
\frac{1}{3}\epsilon\Delta
>
0.
\end{aligned}
\]
Therefore \(W_{S^*}(\epsilon)<W_S(\epsilon)\) for every
\(S\neq S^*\) and every \(0<\epsilon\leq E\). Thus \(S^*\) is the unique minimizer of
\(\operatorname{WorstBias}(\beta^\ast,\epsilon;S)\) over \(\mathcal S_0\) for all
\(0<\epsilon\leq E\). Since \(S^*\) is also the unique minimizer of \(K_S\), we obtain
\[
\arg\min_{S\in\mathcal S_{\mathrm{AD}}(\eta^*)}K_S
=
\arg\min_{S\in\mathcal S_{\mathrm{AD}}(\eta^*)}
\operatorname{WorstBias}(\beta^\ast,\epsilon;S),
\qquad
0<\epsilon\leq E.
\]
This proves the theorem.
\end{proof}

\begin{proof}[Proof of Theorem~\ref{thm: minimax}]
Let
\[
\mathcal S_0:=\mathcal S_{\mathrm{AD}}(\eta^*),
\qquad
W_S(\epsilon):=\operatorname{WorstBias}(\beta^\ast,\epsilon;S),
\]
and define
\[
R_S(n)
:=
\mathbb E\left[
\left\|
\widehat\gamma(\beta^\ast;S)-\gamma(\beta^\ast;S)
\right\|_2^2
\right].
\]
By the definition of the admissible set, every
\(S\in\mathcal S_0\) satisfies the second-moment decay condition
\[
R_S(n)=o(1).
\]
Since \(\mathcal S_0\) is finite, this convergence is uniform over admissible
partitions. Thus, defining
\[
r_n:=\max_{S\in\mathcal S_0} R_S(n),
\]
we have
\[
r_n=o(1).
\]

For each admissible partition, define the finite-sample mean bias
\[
m_S(n)
:=
\mathbb E\!\left[\widehat\gamma(\beta^\ast;S)\right]
-
\gamma(\beta^\ast;S),
\]
and let
\[
V_S(n)
:=
\operatorname{tr}\!\left\{
\operatorname{Var}\!\left(\widehat\gamma(\beta^\ast;S)\right)
\right\}.
\]
The second-moment identity gives
\[
R_S(n)
=
V_S(n)+\|m_S(n)\|_2^2.
\]
Therefore, uniformly over \(S\in\mathcal S_0\),
\[
V_S(n)\leq R_S(n)\leq r_n=o(1),
\qquad
\|m_S(n)\|_2\leq R_S(n)^{1/2}\leq r_n^{1/2}=o(1).
\]

Now fix \(S\in\mathcal S_0\) and
\(\beta_0\in B_\epsilon(\beta^\ast;S)\). Write
\[
b_S(\beta_0)
:=
\gamma(\beta^\ast;S)-\gamma(\beta_0;S).
\]
Using
\[
\widehat\gamma(\beta^\ast;S)-\gamma(\beta_0;S)
=
b_S(\beta_0)
+
\left[
\widehat\gamma(\beta^\ast;S)-\gamma(\beta^\ast;S)
\right],
\]
we obtain the exact decomposition
\[
\begin{aligned}
\operatorname{MSE}(\beta^\ast,\beta_0,S)
&=
\mathbb E\left[
\left\|
\widehat\gamma(\beta^\ast;S)-\gamma(\beta_0;S)
\right\|_2^2
\right] \\
&=
\|b_S(\beta_0)\|_2^2
+
2b_S(\beta_0)'m_S(n)
+
R_S(n).
\end{aligned}
\]
Hence
\[
\begin{aligned}
&
\left|
\operatorname{MSE}(\beta^\ast,\beta_0,S)
-
\left\|
\gamma(\beta^\ast;S)-\gamma(\beta_0;S)
\right\|_2^2
\right| \\
&\qquad
\leq
2
\left\|
\gamma(\beta^\ast;S)-\gamma(\beta_0;S)
\right\|_2
\|m_S(n)\|_2
+
R_S(n) \\
&\qquad
\leq
2W_S(\epsilon)r_n^{1/2}
+
r_n .
\end{aligned}
\]

For fixed \(0<\epsilon\leq E\), and because \(\mathcal S_0\) is finite,
\[
\bar W_\epsilon
:=
\max_{S\in\mathcal S_0} W_S(\epsilon)
<
\infty.
\]
Therefore
\[
\sup_{S\in\mathcal S_0}
\sup_{\beta_0\in B_\epsilon(\beta^\ast;S)}
\left|
\operatorname{MSE}(\beta^\ast,\beta_0,S)
-
\left\|
\gamma(\beta^\ast;S)-\gamma(\beta_0;S)
\right\|_2^2
\right|
\leq
2\bar W_\epsilon r_n^{1/2}+r_n
=
o(1).
\]
It follows that, uniformly over \(S\in\mathcal S_0\),
\[
\max_{\beta_0\in B_\epsilon(\beta^\ast;S)}
\operatorname{MSE}(\beta^\ast,\beta_0,S)
=
W_S(\epsilon)^2
+
\Delta_S(n,\epsilon),
\]
where
\[
\sup_{S\in\mathcal S_0}
|\Delta_S(n,\epsilon)|
=
o(1).
\]

Therefore,
\[
\begin{aligned}
&
\min_{S\in\mathcal S_{\mathrm{AD}}(\eta^*)}
\max_{\beta_0\in B_\epsilon(\beta^\ast;S)}
\operatorname{MSE}(\beta^\ast,\beta_0,S) \\
&\qquad
=
\min_{S\in\mathcal S_0}
\left\{
W_S(\epsilon)^2+\Delta_S(n,\epsilon)
\right\} \\
&\qquad
=
\min_{S\in\mathcal S_0}
W_S(\epsilon)^2
+
o(1).
\end{aligned}
\]

By Theorem~\ref{thm: sufficiency}, there exists \(E>0\) such that, for every
\(0<\epsilon\leq E\), the least-sensitive partition \(S^*\) minimizes
\(W_S(\epsilon)\) over \(\mathcal S_{\mathrm{AD}}(\eta^*)\). Hence
\[
\min_{S\in\mathcal S_0}
W_S(\epsilon)^2
=
W_{S^*}(\epsilon)^2.
\]
Since
\[
W_{S^*}(\epsilon)
=
\operatorname{WorstBias}(\beta^\ast,\epsilon;S^*),
\]
we conclude that
\[
\min_{S\in\mathcal S_{\mathrm{AD}}(\eta^*)}
\max_{\beta_0\in B_\epsilon(\beta^\ast;S)}
\operatorname{MSE}(\beta^\ast,\beta_0,S)
=
\operatorname{WorstBias}(\beta^\ast,\epsilon;S^*)^2
+
o(1).
\]
This proves the theorem.
\end{proof}

\begin{proof}[Proof of Theorem~\ref{thm: local identification}]
Let
\[
\mathcal S_0:=\mathcal S_{\mathrm{AD}}(\eta^*).
\]
By Assumption~\ref{ass:stable-selection}, there exists a neighborhood \(U_0\) of
\(\eta^*\) such that
\[
\mathcal S_{\mathrm{AD}}(\eta)=\mathcal S_0
\qquad\text{for all }\eta\in U_0.
\]
Since the number of partitions is finite, \(\mathcal S_0\) is finite. Let
\[
S^*:=S^*(\eta^*)
\in
\arg\min_{S\in\mathcal S_0}K_S(\eta^*).
\]
By Assumption~\ref{ass: uniq}, this minimizer is unique. If
\(\mathcal S_0=\{S^*\}\), the result is immediate. Otherwise define the strict gap
\[
\Delta
:=
\min_{S\in\mathcal S_0,\ S\neq S^*}
\{K_S(\eta^*)-K_{S^*}(\eta^*)\}.
\]
By uniqueness, \(\Delta>0\).

By Assumption~\ref{ass:stable-selection}, for every \(S\in\mathcal S_0\), the map
\(\eta\mapsto K_S(\eta)\) is continuous in a neighborhood of \(\eta^*\). Because
\(\mathcal S_0\) is finite, there exists a neighborhood \(U_1\subseteq U_0\) of
\(\eta^*\) such that, for every \(\eta\in U_1\) and every \(S\in\mathcal S_0\),
\[
|K_S(\eta)-K_S(\eta^*)|<\frac{\Delta}{3}.
\]
Then, for any \(S\in\mathcal S_0\) with \(S\neq S^*\) and any \(\eta\in U_1\),
\[
\begin{aligned}
K_S(\eta)-K_{S^*}(\eta)
&=
[K_S(\eta^*)-K_{S^*}(\eta^*)]
+[K_S(\eta)-K_S(\eta^*)]
-[K_{S^*}(\eta)-K_{S^*}(\eta^*)] \\
&\geq
\Delta-\frac{\Delta}{3}-\frac{\Delta}{3}
=
\frac{\Delta}{3}>0.
\end{aligned}
\]
Thus \(K_{S^*}(\eta)<K_S(\eta)\) for every
\(S\in\mathcal S_0\setminus\{S^*\}\) and every \(\eta\in U_1\). Since
\(\mathcal S_{\mathrm{AD}}(\eta)=\mathcal S_0\) on \(U_1\), it follows that
\[
S^*(\eta)=S^*(\eta^*)
\qquad\text{for all }\eta\in U_1.
\]
Taking \(U:=U_1\) proves the theorem.
\end{proof}

\begin{proof}[Proof of Proposition~\ref{prop:l2_alpha_rate_local_sequence}]
Let
\[
\Delta_n
:=
\widehat\alpha_n(\beta_n^\ast)-\alpha_n^\ast.
\]
For this proof, set
\[
A_n:=W_n^{1/2},
\qquad
\widehat A_n:=\widehat W^{1/2},
\qquad
J_n(\alpha)
:=
\frac{\partial g_n(\alpha,\beta_n^\ast)}{\partial\alpha'},
\qquad
G_n:=J_n(\alpha_n^\ast),
\]
so that \(C_n=(A_nG_n)^+\). All pathwise inequalities below are understood on the
probability-one event on which the localization and weight-equivalence conditions in
Assumption~\ref{ass:local-sequence-md} hold.

Since
\[
\widehat\alpha_n(\beta_n^\ast)\in\mathcal N_n(r_n)
\]
and \(\mathcal N_n(r_n)\) is convex, the line segment
\[
\alpha_n^\ast+t\Delta_n,
\qquad
t\in[0,1],
\]
lies inside \(\mathcal N_n(r_n)\). A first-order Taylor expansion of
\(\alpha\mapsto g_n(\alpha,\beta_n^\ast)\) around \(\alpha_n^\ast\) gives
\begin{equation}
g_n(\widehat\alpha_n(\beta_n^\ast),\beta_n^\ast)
=
g_n(\alpha_n^\ast,\beta_n^\ast)
+
G_n\Delta_n
+
R_n(\widehat\alpha_n(\beta_n^\ast),\beta_n^\ast),
\label{eq:first_taylor_expansion_gn_fixed_beta}
\end{equation}
where the Taylor residual has the mean-value integral form
\begin{equation}
R_n(\widehat\alpha_n(\beta_n^\ast),\beta_n^\ast)
=
\int_0^1
\left[
J_n(\alpha_n^\ast+t\Delta_n)-G_n
\right]
\Delta_n\,dt.
\label{eq:taylor_residual_integral_form_fixed_beta}
\end{equation}
Since
\[
g_n(\alpha_n^\ast,\beta_n^\ast)=0,
\]
equation~\eqref{eq:first_taylor_expansion_gn_fixed_beta} becomes
\[
g_n(\widehat\alpha_n(\beta_n^\ast),\beta_n^\ast)
=
G_n\Delta_n
+
R_n(\widehat\alpha_n(\beta_n^\ast),\beta_n^\ast).
\]
Premultiplying by \(A_n\) and then by \(C_n\) gives
\[
C_nA_ng_n(\widehat\alpha_n(\beta_n^\ast),\beta_n^\ast)
=
C_nA_nG_n\Delta_n
+
C_nA_nR_n(\widehat\alpha_n(\beta_n^\ast),\beta_n^\ast).
\]
By Assumption~\ref{ass:local-sequence-md}(i),
\[
C_nA_nG_n=I.
\]
Therefore
\[
\Delta_n
=
C_nA_ng_n(\widehat\alpha_n(\beta_n^\ast),\beta_n^\ast)
-
C_nA_nR_n(\widehat\alpha_n(\beta_n^\ast),\beta_n^\ast).
\]
Taking norms and using the triangle inequality,
\begin{equation}
\|\Delta_n\|
\leq
\left\|
C_nA_ng_n(\widehat\alpha_n(\beta_n^\ast),\beta_n^\ast)
\right\|
+
\left\|
C_nA_nR_n(\widehat\alpha_n(\beta_n^\ast),\beta_n^\ast)
\right\|.
\label{eq:two_terms_inequality_fixed_beta}
\end{equation}

We first control the Taylor residual. From
\eqref{eq:taylor_residual_integral_form_fixed_beta},
\[
C_nA_nR_n(\widehat\alpha_n(\beta_n^\ast),\beta_n^\ast)
=
\int_0^1
C_nA_n
\left[
J_n(\alpha_n^\ast+t\Delta_n)-G_n
\right]
\Delta_n\,dt.
\]
Since \(\alpha_n^\ast+t\Delta_n\in\mathcal N_n(r_n)\) for all \(t\in[0,1]\),
Assumption~\ref{ass:local-sequence-md}(v) implies
\[
\left\|
C_nA_nR_n(\widehat\alpha_n(\beta_n^\ast),\beta_n^\ast)
\right\|
\leq
\rho\|\Delta_n\|.
\]
Using this bound in
\eqref{eq:two_terms_inequality_fixed_beta},
\[
\|\Delta_n\|
\leq
\left\|
C_nA_ng_n(\widehat\alpha_n(\beta_n^\ast),\beta_n^\ast)
\right\|
+
\rho\|\Delta_n\|.
\]
Since \(\rho<1\),
\[
\|\Delta_n\|
\leq
\frac{1}{1-\rho}
\left\|
C_nA_ng_n(\widehat\alpha_n(\beta_n^\ast),\beta_n^\ast)
\right\|.
\]
Consequently,
\begin{equation}
\|\Delta_n\|^2
\leq
\frac{\|C_n\|^2}{(1-\rho)^2}
\left\|
A_ng_n(\widehat\alpha_n(\beta_n^\ast),\beta_n^\ast)
\right\|^2.
\label{eq:delta_bound_fixed_beta}
\end{equation}

It remains to control
\[
\mathbb{E}\left[
\left\|
A_ng_n(\widehat\alpha_n(\beta_n^\ast),\beta_n^\ast)
\right\|^2
\right].
\]
By definition,
\[
e_n(\alpha,\beta_n^\ast)
=
\widehat g_n(\alpha,\beta_n^\ast)
-
g_n(\alpha,\beta_n^\ast),
\]
so
\[
g_n(\widehat\alpha_n(\beta_n^\ast),\beta_n^\ast)
=
\widehat g_n(\widehat\alpha_n(\beta_n^\ast),\beta_n^\ast)
-
e_n(\widehat\alpha_n(\beta_n^\ast),\beta_n^\ast).
\]
Thus,
\[
\begin{aligned}
\left\|
A_ng_n(\widehat\alpha_n(\beta_n^\ast),\beta_n^\ast)
\right\|
&\leq
\left\|
A_n\widehat g_n(\widehat\alpha_n(\beta_n^\ast),\beta_n^\ast)
\right\|  \\
&\quad+
\|A_n\|
\left\|
e_n(\widehat\alpha_n(\beta_n^\ast),\beta_n^\ast)
\right\|.
\end{aligned}
\]

We now bound the first term using the sample criterion. By the uniform equivalence of
\(A_n\) and \(\widehat A_n\) in Assumption~\ref{ass:local-sequence-md}(ii),
\[
\left\|
A_n\widehat g_n(\widehat\alpha_n(\beta_n^\ast),\beta_n^\ast)
\right\|
\leq
K
\left\|
\widehat A_n
\widehat g_n(\widehat\alpha_n(\beta_n^\ast),\beta_n^\ast)
\right\|.
\]
By the minimizing property of
\(\widehat\alpha_n(\beta_n^\ast)\) under the sample criterion,
\[
\left\|
\widehat A_n
\widehat g_n(\widehat\alpha_n(\beta_n^\ast),\beta_n^\ast)
\right\|
\leq
\left\|
\widehat A_n
\widehat g_n(\alpha_n^\ast,\beta_n^\ast)
\right\|.
\]
Using again the uniform equivalence of \(A_n\) and \(\widehat A_n\),
\[
\left\|
\widehat A_n
\widehat g_n(\alpha_n^\ast,\beta_n^\ast)
\right\|
\leq
K
\left\|
A_n
\widehat g_n(\alpha_n^\ast,\beta_n^\ast)
\right\|.
\]
Since
\[
g_n(\alpha_n^\ast,\beta_n^\ast)=0,
\]
we have
\[
\widehat g_n(\alpha_n^\ast,\beta_n^\ast)
=
e_n(\alpha_n^\ast,\beta_n^\ast).
\]
Using \(\|A_n\|\leq K\), we obtain
\[
\left\|
A_n\widehat g_n(\widehat\alpha_n(\beta_n^\ast),\beta_n^\ast)
\right\|
\leq
K^3
\left\|
e_n(\alpha_n^\ast,\beta_n^\ast)
\right\|,
\]
where the constant \(K\) may change from line to line but is always finite and independent of
\(n\).

For the second term,
\[
\begin{aligned}
e_n(\widehat\alpha_n(\beta_n^\ast),\beta_n^\ast)
&=
e_n(\alpha_n^\ast,\beta_n^\ast) \\
&\quad+
\left[
e_n(\widehat\alpha_n(\beta_n^\ast),\beta_n^\ast)
-
e_n(\alpha_n^\ast,\beta_n^\ast)
\right].
\end{aligned}
\]
Since
\[
\widehat\alpha_n(\beta_n^\ast)\in\mathcal N_n(r_n)
\qquad\text{and}\qquad
\mathcal N_n(r_n)\subseteq\mathcal N_n(\bar r),
\]
we have
\[
\begin{aligned}
\left\|
e_n(\widehat\alpha_n(\beta_n^\ast),\beta_n^\ast)
\right\|
&\leq
\left\|
e_n(\alpha_n^\ast,\beta_n^\ast)
\right\| \\
&\quad+
\sup_{\alpha\in\mathcal N_n(\bar r)}
\left\|
e_n(\alpha,\beta_n^\ast)
-
e_n(\alpha_n^\ast,\beta_n^\ast)
\right\|.
\end{aligned}
\]
Combining the previous inequalities and using \(\|A_n\|\leq K\), there exists a finite
constant \(K_1<\infty\), independent of \(n\), such that
\[
\begin{aligned}
\left\|
A_ng_n(\widehat\alpha_n(\beta_n^\ast),\beta_n^\ast)
\right\|
&\leq
K_1
\left\|
e_n(\alpha_n^\ast,\beta_n^\ast)
\right\| \\
&\quad+
K_1
\sup_{\alpha\in\mathcal N_n(\bar r)}
\left\|
e_n(\alpha,\beta_n^\ast)
-
e_n(\alpha_n^\ast,\beta_n^\ast)
\right\|.
\end{aligned}
\]
Hence,
\[
\begin{aligned}
\left\|
A_ng_n(\widehat\alpha_n(\beta_n^\ast),\beta_n^\ast)
\right\|^2
&\leq
2K_1^2
\left\|
e_n(\alpha_n^\ast,\beta_n^\ast)
\right\|^2 \\
&\quad+
2K_1^2
\sup_{\alpha\in\mathcal N_n(\bar r)}
\left\|
e_n(\alpha,\beta_n^\ast)
-
e_n(\alpha_n^\ast,\beta_n^\ast)
\right\|^2.
\end{aligned}
\]
Taking expectations and using Assumption~\ref{ass:local-sequence-md}(iv),
\[
\begin{aligned}
\mathbb{E}\left[
\left\|
A_ng_n(\widehat\alpha_n(\beta_n^\ast),\beta_n^\ast)
\right\|^2
\right]
&\leq
2K_1^2
\mathbb{E}\left[
\left\|
e_n(\alpha_n^\ast,\beta_n^\ast)
\right\|^2
\right] \\
&\quad+
2K_1^2
\mathbb{E}\left[
\sup_{\alpha\in\mathcal N_n(\bar r)}
\left\|
e_n(\alpha,\beta_n^\ast)
-
e_n(\alpha_n^\ast,\beta_n^\ast)
\right\|^2
\right] \\
&=
O(q_n^2).
\end{aligned}
\]
Using \eqref{eq:delta_bound_fixed_beta}, we obtain
\[
\begin{aligned}
\mathbb{E}\left[
\|\widehat\alpha_n(\beta_n^\ast)-\alpha_n^\ast\|^2
\right]
&=
\mathbb{E}\left[\|\Delta_n\|^2\right] \\
&\leq
\frac{\|C_n\|^2}{(1-\rho)^2}
\mathbb{E}\left[
\left\|
A_ng_n(\widehat\alpha_n(\beta_n^\ast),\beta_n^\ast)
\right\|^2
\right] \\
&=
O(\|C_n\|^2q_n^2).
\end{aligned}
\]
Finally, if \(\|C_n\|q_n=o(1)\), then
\[
\mathbb{E}\left[
\|\widehat\alpha_n(\beta_n^\ast)-\alpha_n^\ast\|^2
\right]
=
O\left(\|C_n\|^2q_n^2\right)
=
o(1).
\]
This completes the proof.
\end{proof}

\begin{proof}[Proof of Proposition~\ref{prop:gamma_l2_rate_local_sequence}]
Let
\[
\Delta_{\alpha,n}
:=
\widehat\alpha_n(\beta_n^\ast)-\alpha_n^\ast.
\]
By Assumption~\ref{ass:local-sequence-md}(iii),
\[
\widehat\alpha_n(\beta_n^\ast)\in\mathcal N_n(r_n)
\]
with probability one for all sufficiently large \(n\). Since \(r_n\leq\bar r\), we also have
\[
\widehat\alpha_n(\beta_n^\ast)\in\mathcal N_n(\bar r)
\]
with probability one for all sufficiently large \(n\). Also,
\[
\beta_n^\ast\in\mathcal B_n(\bar s).
\]

Using the mean-value integral form for the map
\[
\alpha\mapsto\Gamma(\alpha,\beta_n^\ast),
\]
we obtain
\[
\begin{aligned}
\widehat\gamma_n-\gamma_n^\ast
&=
\Gamma\!\left(
\widehat\alpha_n(\beta_n^\ast),
\beta_n^\ast
\right)
-
\Gamma(\alpha_n^\ast,\beta_n^\ast) \\
&=
\int_0^1
\Gamma_{\alpha,n}
\!\left(
\alpha_n^\ast+t\Delta_{\alpha,n},
\beta_n^\ast
\right)
\Delta_{\alpha,n}
\,dt,
\end{aligned}
\]
where
\[
\Gamma_{\alpha,n}(\alpha,\beta)
:=
\frac{\partial\Gamma(\alpha,\beta)}{\partial\alpha'}.
\]

Because \(\mathcal N_n(\bar r)\) is convex, and because
\[
\alpha_n^\ast,
\widehat\alpha_n(\beta_n^\ast)
\in
\mathcal N_n(\bar r),
\]
the point
\[
\left(
\alpha_n^\ast+t\Delta_{\alpha,n},
\beta_n^\ast
\right)
\]
belongs to
\[
\mathcal N_n(\bar r)\times\mathcal B_n(\bar s)
\]
for all \(t\in[0,1]\). Therefore, by Assumption~\ref{ass:local-sequence-md}(vi),
\[
\begin{aligned}
\|\widehat\gamma_n-\gamma_n^\ast\|
&\leq
\int_0^1
\left\|
\Gamma_{\alpha,n}
\!\left(
\alpha_n^\ast+t\Delta_{\alpha,n},
\beta_n^\ast
\right)
\right\|
\|\Delta_{\alpha,n}\|
dt \\
&\leq
L_\Gamma
\|\Delta_{\alpha,n}\|.
\end{aligned}
\]
Squaring gives
\[
\|\widehat\gamma_n-\gamma_n^\ast\|^2
\leq
L_\Gamma^2
\|\Delta_{\alpha,n}\|^2.
\]
Taking expectations,
\[
\mathbb{E}\left[
\|\widehat\gamma_n-\gamma_n^\ast\|^2
\right]
\leq
L_\Gamma^2
\mathbb{E}\left[
\|\widehat\alpha_n(\beta_n^\ast)-\alpha_n^\ast\|^2
\right].
\]
By Proposition~\ref{prop:l2_alpha_rate_local_sequence},
\[
\mathbb{E}\left[
\|\widehat\alpha_n(\beta_n^\ast)-\alpha_n^\ast\|^2
\right]
=
O\left(\|C_n\|^2q_n^2\right).
\]
Hence
\[
\mathbb{E}\left[
\|\widehat\gamma_n-\gamma_n^\ast\|^2
\right]
=
O\left(\|C_n\|^2q_n^2\right).
\]

Finally, for any random vector \(Y\) and any nonrandom vector \(y\),
\[
E\|Y-y\|^2
=
\operatorname{tr}\{\operatorname{Var}(Y)\}
+
\|E[Y]-y\|^2.
\]
Applying this identity with
\[
Y=\widehat\gamma_n,
\qquad
y=\gamma_n^\ast,
\]
we obtain
\[
\operatorname{tr}\{\operatorname{Var}(\widehat\gamma_n)\}
\leq
\mathbb{E}\left[
\|\widehat\gamma_n-\gamma_n^\ast\|^2
\right].
\]
Therefore
\[
\operatorname{tr}\{\operatorname{Var}(\widehat\gamma_n)\}
=
O\left(\|C_n\|^2q_n^2\right).
\]
If \(\|C_n\|q_n=o(1)\), then
\[
\mathbb{E}\left[
\|\widehat\gamma_n-\gamma_n^\ast\|^2
\right]
=
O\left(\|C_n\|^2q_n^2\right)
=
o(1),
\]
and
\[
\operatorname{tr}\{\operatorname{Var}(\widehat\gamma_n)\}
=
O\left(\|C_n\|^2q_n^2\right)
=
o(1).
\]
Thus, the variance of the object of interest vanishes as \(n\to\infty\). This completes the proof.
\end{proof}

\begin{lemma}[Consistency of the sensitivity statistic]
\label{lem:sensitivity-stat-consistency}
\label{thm: consistency of Sensitivity statistic}
Suppose Assumptions~\ref{ass: regularity conditions},
\ref{ass: regularity assumptions reference point}, and
\ref{ass:partition-consistency-highlevel} hold. Then, for every
\(S\in\mathcal S_{\mathrm{AD}}(\eta^*)\),
\[
\widehat K_S(\widehat\eta)\overset{p}{\to}K_S(\eta^*).
\]
Moreover, since \(\mathcal S_{\mathrm{AD}}(\eta^*)\) is finite,
\[
\max_{S\in\mathcal S_{\mathrm{AD}}(\eta^*)}
|\widehat K_S(\widehat\eta)-K_S(\eta^*)|
\overset{p}{\to}0.
\]
\end{lemma}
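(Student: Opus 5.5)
The plan is a continuous-mapping argument. First I will write $K_S$ as one fixed continuous function of the Jacobians and the weighting matrix evaluated at the reference point. Then I will show that the plug-in inputs are consistent.

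Fix $S\in\mathcal S_{\mathrm{AD}}(\eta^*)$. Define
\[
\Phi_S(G,\Gamma_\eta,W):=\sqrt{|S^c|}\,\bigl\|\{-\Gamma_{\alpha}(G_\alpha'WG_\alpha)^{-1}G_\alpha'WG_\beta+\Gamma_{\beta}\}\Sigma_S\bigr\|_2 ,
\]
where $G=[G_\alpha,G_\beta]$ and $\Gamma_\eta=[\Gamma_\alpha,\Gamma_\beta]$ are split according to $S$. By Lemma~\ref{lemma: derivatives}(ii), $K_S(\eta^*)=\Phi_S(G(\eta^*),\Gamma_\eta(\eta^*),W)$. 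By construction in Algorithm~\ref{alg:ls-partition}, $\widehat K_S(\widehat\eta)=\Phi_S(\widehat G(\widehat\eta),\Gamma_\eta(\widehat\eta),\widehat W)$. The normalization $\Sigma_S$ is nonrandom, so it does not enter the stochastic argument.

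\textbf{Step 1: the inputs are consistent.} For the Jacobian of $g$, I use the decomposition
\[
\|\widehat G(\widehat\eta)-G(\eta^*)\|\le \sup_{\eta\in N}\|\widehat G(\eta)-G(\eta)\|+\|G(\widehat\eta)-G(\eta^*)\|.
\]
The first term is $o_p(1)$ by Assumption~\ref{ass:partition-consistency-highlevel}(iii). The second term is $o_p(1)$ by continuity of $\partial g/\partial\eta'$, which holds because $g$ is $C^2$ (Assumption~\ref{ass: regularity conditions}(ii)), combined with $\widehat\eta\overset{p}{\to}\eta^*$ (Assumption~\ref{ass:partition-consistency-highlevel}(i)) and the continuous mapping theorem. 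For this I need $\widehat\eta\in N$ with probability approaching one, which follows because $\eta^*$ is interior (Assumption~\ref{ass: regularity assumptions reference point}(i)). The Jacobian of $\Gamma$ is a known function, so $\Gamma_\eta(\widehat\eta)\overset{p}{\to}\Gamma_\eta(\eta^*)$ by Assumption~\ref{ass: regularity conditions}(iii) and the continuous mapping theorem. Finally, $\widehat W\overset{p}{\to}W$ by Assumption~\ref{ass:partition-consistency-highlevel}(iv).

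\textbf{Step 2: $\Phi_S$ is continuous at the limit point.} This is the main obstacle, because matrix inversion is continuous only where the matrix is nonsingular. Admissibility gives $\operatorname{rank}(G_\alpha)=|S|$ at $\eta^*$, and $W$ is positive definite, so $\lambda_{\min}(G_\alpha'WG_\alpha)>0$. The map $(G_\alpha,W)\mapsto G_\alpha'WG_\alpha$ is continuous and eigenvalues depend continuously on the matrix. Hence there is a neighborhood of $(G(\eta^*),W)$ on which the minimum eigenvalue stays above half its limiting value. On that neighborhood the inverse exists and is continuous. Multiplication and addition are continuous, and the spectral norm is Lipschitz. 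Therefore $\Phi_S$ is continuous on this neighborhood. Step 1 implies the sample inputs lie in the neighborhood with probability approaching one, and the continuous mapping theorem then gives $\widehat K_S(\widehat\eta)\overset{p}{\to}K_S(\eta^*)$.

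A secondary point is that $\widehat K_S$ is computed only when $\widehat r_S=|S|$. I will note that the limit statement concerns $\Phi_S$ evaluated at the sample inputs. Consistency of the rank estimator (Assumption~\ref{ass:partition-consistency-highlevel}(ii)) ensures that this is the quantity actually computed with probability approaching one.

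\textbf{Step 3: uniformity over partitions.} The admissible set $\mathcal S_{\mathrm{AD}}(\eta^*)$ is finite, since there are at most $2^{d_\eta}$ partitions. For any $\varepsilon>0$, a union bound gives
\[
\Pr\{\max_S|\widehat K_S-K_S|>\varepsilon\}\le\sum_S\Pr\{|\widehat K_S-K_S|>\varepsilon\}\to0,
\]
which yields the uniform statement.
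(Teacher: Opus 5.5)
Your proposal is correct and follows essentially the same continuous-mapping argument as the paper. Both proofs obtain Jacobian consistency from the uniform-convergence-plus-continuity decomposition, get nonsingularity of $G_\alpha'WG_\alpha$ from the admissibility rank condition, use continuity of inversion, products and the spectral norm, and pass to the uniform statement through finiteness of the admissible set. Your remark that, by Assumption~\ref{ass:partition-consistency-highlevel}(ii), $\widehat K_S$ is actually computed (i.e., $\widehat r_S=|S|$) with probability approaching one is a small point the paper leaves implicit.
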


\begin{proof}
Fix \(S\in\mathcal S_{\mathrm{AD}}(\eta^*)\). Write
\[
G_{\eta,S}(\eta^*)
:=
\frac{\partial g(\eta^*;S)}{\partial\eta'},
\qquad
\widehat G_{\eta,S}(\widehat\eta)
:=
\frac{\partial\widehat g(\widehat\eta;S)}{\partial\eta'}.
\]
By Assumption~\ref{ass:partition-consistency-highlevel}(i),
\(\widehat\eta\overset{p}{\to}\eta^*\). By
Assumption~\ref{ass:partition-consistency-highlevel}(iii) and continuity of the population
Jacobian,
\[
\begin{aligned}
\|\widehat G_{\eta,S}(\widehat\eta)-G_{\eta,S}(\eta^*)\|
&\leq
\sup_{\eta\in N}
\left\|
\frac{\partial\widehat g(\eta;S)}{\partial\eta'}
-
\frac{\partial g(\eta;S)}{\partial\eta'}
\right\| \\
&\quad+
\|G_{\eta,S}(\widehat\eta)-G_{\eta,S}(\eta^*)\| \\
&\overset{p}{\to}0.
\end{aligned}
\]
Therefore the partition-specific submatrices satisfy
\[
\widehat G_{\alpha,S}\overset{p}{\to}G_{\alpha,S},
\qquad
\widehat G_{\beta,S}\overset{p}{\to}G_{\beta,S}.
\]

Since \(S\in\mathcal S_{\mathrm{AD}}(\eta^*)\), the conditional local identification
condition gives
\[
\operatorname{rank}(G_{\alpha,S})=|S|.
\]
Because \(W\) is positive definite, 
\[
M_S:=G_{\alpha,S}'WG_{\alpha,S}
\]
is nonsingular. Hence, with probability approaching one,
\[
\widehat M_S
:=
\widehat G_{\alpha,S}'\widehat W\widehat G_{\alpha,S}
\]
is nonsingular. By the continuous mapping theorem and
Assumption~\ref{ass:partition-consistency-highlevel}(iv),
\[
\widehat D_{\alpha\beta,S}
:=
-
\left(\widehat G_{\alpha,S}'\widehat W\widehat G_{\alpha,S}\right)^{-1}
\widehat G_{\alpha,S}'\widehat W\widehat G_{\beta,S}
\overset{p}{\to}
D_{\alpha\beta,S},
\]
where
\[
D_{\alpha\beta,S}
:=
-
\left(G_{\alpha,S}'WG_{\alpha,S}\right)^{-1}
G_{\alpha,S}'WG_{\beta,S}.
\]

By Assumption~\ref{ass: regularity conditions}, \(\Gamma\) is continuously differentiable.
Thus
\[
\widehat\Gamma_{\alpha,S}(\widehat\eta)\overset{p}{\to}\Gamma_{\alpha,S}(\eta^*),
\qquad
\widehat\Gamma_{\beta,S}(\widehat\eta)\overset{p}{\to}\Gamma_{\beta,S}(\eta^*).
\]
Therefore
\[
\widehat D_{\gamma\beta,S}
:=
\widehat\Gamma_{\alpha,S}(\widehat\eta)\widehat D_{\alpha\beta,S}
+
\widehat\Gamma_{\beta,S}(\widehat\eta)
\overset{p}{\to}
D_{\gamma\beta,S},
\]
where
\[
D_{\gamma\beta,S}
:=
\Gamma_{\alpha,S}(\eta^*)D_{\alpha\beta,S}
+
\Gamma_{\beta,S}(\eta^*)
=
\frac{\partial\gamma(\beta^\ast;S)}{\partial\beta'}.
\]
Finally, since the spectral norm is continuous and
Assumption~\ref{ass:partition-consistency-highlevel}(iv),
\[
\widehat K_S(\widehat\eta)
=
\sqrt{|S^c|}
\|\widehat D_{\gamma\beta,S}\Sigma_S\|_2
\overset{p}{\to}
\sqrt{|S^c|}
\|D_{\gamma\beta,S}\Sigma_S\|_2
=
K_S(\eta^*).
\]
This proves pointwise convergence. Since \(\mathcal S_{\mathrm{AD}}(\eta^*)\) is finite,
pointwise convergence implies uniform convergence over \(\mathcal S_{\mathrm{AD}}(\eta^*)\).
\end{proof}

\begin{proof}[Proof of Theorem~\ref{thm: Consistency of partition}]
Let
\[
\mathcal S_0:=\mathcal S_{\mathrm{AD}}(\eta^*),
\qquad
S^*:=S^*(\eta^*)
\in
\arg\min_{S\in\mathcal S_0}K_S(\eta^*).
\]
By Assumption~\ref{ass: uniq}, \(S^*\) is the unique minimizer. If
\(\mathcal S_0=\{S^*\}\), then the result follows directly from the admissible-set
consistency condition in Assumption~\ref{ass:partition-consistency-highlevel}(ii). Hence
suppose \(\mathcal S_0\setminus\{S^*\}\neq\varnothing\), and define the strict gap
\[
\Delta
:=
\min_{S\in\mathcal S_0,\ S\neq S^*}
\{K_S(\eta^*)-K_{S^*}(\eta^*)\}.
\]
By uniqueness, \(\Delta>0\).

Define the events
\[
A_n:=
\{\widehat{\mathcal S}_{\mathrm{AD}}(\widehat\eta)=\mathcal S_0\},
\]
and
\[
B_n:=
\left\{
\max_{S\in\mathcal S_0}
|\widehat K_S(\widehat\eta)-K_S(\eta^*)|
<
\frac{\Delta}{3}
\right\}.
\]
By Assumption~\ref{ass:partition-consistency-highlevel}(ii),
\[
\Pr(A_n)\to1.
\]
By Lemma~\ref{lem:sensitivity-stat-consistency},
\[
\Pr(B_n)\to1.
\]
Therefore
\[
\Pr(A_n\cap B_n)\to1.
\]

On \(A_n\cap B_n\), the estimated admissible set equals the population admissible set.
Moreover, for every \(S\in\mathcal S_0\) with \(S\neq S^*\),
\[
\begin{aligned}
\widehat K_S(\widehat\eta)-\widehat K_{S^*}(\widehat\eta)
&=
[K_S(\eta^*)-K_{S^*}(\eta^*)] \\
&\quad+
[\widehat K_S(\widehat\eta)-K_S(\eta^*)]
-[\widehat K_{S^*}(\widehat\eta)-K_{S^*}(\eta^*)] \\
&>
\Delta-\frac{\Delta}{3}-\frac{\Delta}{3}
=
\frac{\Delta}{3}>0.
\end{aligned}
\]
Thus, on \(A_n\cap B_n\), \(S^*\) is the unique minimizer of
\(\widehat K_S(\widehat\eta)\) over \(\widehat{\mathcal S}_{\mathrm{AD}}(\widehat\eta)\).
Hence
\[
\widehat S^*(\widehat\eta)=S^*(\eta^*)
\]
on \(A_n\cap B_n\). Therefore,
\[
\Pr\{\widehat S^*(\widehat\eta)\neq S^*(\eta^*)\}
\leq
\Pr(A_n^c)+\Pr(B_n^c)
\longrightarrow0.
\]
This proves the consistency of the estimated least-sensitive partition.
\end{proof}

\begin{proof}[Proof of Theorem~\ref{thm:direct-reestimated-derivative}]
Define the sample first-order condition map
\[
\widehat F(\alpha,\beta;S)
:=
\widehat G_{\alpha}(\alpha,\beta;S)'
\widehat W
\widehat g(\alpha,\beta;S).
\]
Let
\[
\widetilde\eta_{S}
:=
(\widehat\alpha(\widehat\beta_{S};S),\widehat\beta_{S}).
\]
By Assumption~\ref{ass:direct-derivative},
\[
\widetilde\eta_{S}\overset{p}{\to}\eta^*.
\]
The sample first-order condition is
\[
\widehat F(\widetilde\eta_{S};S)=0.
\]
Let
\[
\widehat F_{\alpha}(\eta;S)
:=
\frac{\partial\widehat F(\eta;S)}{\partial\alpha'},
\qquad
\widehat F_{\beta}(\eta;S)
:=
\frac{\partial\widehat F(\eta;S)}{\partial\beta'}.
\]
By the product rule, evaluated at \(\widetilde\eta_{S}\),
\[
\widehat F_{\alpha}
=
\widehat G_{\alpha}'\widehat W\widehat G_{\alpha}
+
\widehat R_{\alpha,n},
\]
where \(\widehat R_{\alpha}\) collects the terms involving second derivatives of
\(\widehat g\) multiplied by \(\widehat W\widehat g\). Similarly,
\[
\widehat F_{\beta}
=
\widehat G_{\alpha}'\widehat W\widehat G_{\beta}
+
\widehat R_{\beta},
\]
where \(\widehat R_{\beta}\) collects the mixed second-derivative terms multiplied by
\(\widehat W\widehat g\). All matrices in these two displays are evaluated at
\(\widetilde\eta_{S}\).

Because second derivatives are locally bounded with probability approaching one,
\(\widehat W=O_p(1)\), and
\[
\widehat g(\widetilde\eta_{S};S)
\overset{p}{\to}
g(\eta^*;S)=0,
\]
we have
\[
\widehat R_{\alpha}=o_p(1),
\qquad
\widehat R_{\beta}=o_p(1).
\]
Therefore,
\[
\widehat F_{\alpha}(\widetilde\eta_{S};S)
=
\widehat G_{\alpha}(\widetilde\eta_{S};S)'
\widehat W
\widehat G_{\alpha}(\widetilde\eta_{S};S)+o_p(1),
\]
and
\[
\widehat F_{\beta}(\widetilde\eta_{S};S)
=
\widehat G_{\alpha}(\widetilde\eta_{S};S)'
\widehat W
\widehat G_{\beta}(\widetilde\eta_{S};S)+o_p(1).
\]
Uniform convergence of the Jacobians, consistency of \(\widetilde\eta_{S}\), and
\(\widehat W\overset{p}{\to}W\) imply
\[
\widehat F_{\alpha}(\widetilde\eta_{S};S)
\overset{p}{\to}
G_{\alpha,S}'WG_{\alpha,S},
\]
and
\[
\widehat F_{\beta}(\widetilde\eta_{S};S)
\overset{p}{\to}
G_{\alpha,S}'WG_{\beta,S}.
\]
Since \(G_{\alpha,S}\) has full column rank and \(W\) is positive definite,
\(G_{\alpha,S}'WG_{\alpha,S}\) is nonsingular. Hence
\(\widehat F_{\alpha}(\widetilde\eta_{S};S)\) is nonsingular with probability approaching
one. The sample implicit function theorem gives
\[
\left.
\frac{\partial\widehat\alpha(\beta;S)}{\partial\beta'}
\right|_{\beta=\widehat\beta_{S}}
=
-
\left[
\widehat F_{\alpha}(\widetilde\eta_{S};S)
\right]^{-1}
\widehat F_{\beta}(\widetilde\eta_{S};S).
\]
By the continuous mapping theorem,
\[
\left.
\frac{\partial\widehat\alpha(\beta;S)}{\partial\beta'}
\right|_{\beta=\widehat\beta_{S}}
\overset{p}{\to}
-
(G_{\alpha,S}'WG_{\alpha,S})^{-1}G_{\alpha,S}'WG_{\beta,S}
=
D_{\alpha\beta,S}.
\]

Now consider the direct derivative used in the algorithm. Since
\(\widehat\eta_n\overset{p}{\to}\eta^*\), the same uniform convergence and continuous
mapping argument implies
\[
\widehat D_{\alpha\beta}(S)
\overset{p}{\to}
D_{\alpha\beta,S}.
\]
Consequently,
\[
\left.
\frac{\partial\widehat\alpha(\beta;S)}{\partial\beta'}
\right|_{\beta=\widehat\beta_{S}}
-
\widehat D_{\alpha\beta}(S)
=
o_p(1).
\]
Finally, applying the population implicit function theorem to
\[
F(\alpha,\beta;S):=G_\alpha(\alpha,\beta;S)'Wg(\alpha,\beta;S)
\]
at \((\alpha^\ast,\beta^\ast)\) gives
\[
\frac{\partial\alpha(\beta^\ast;S)}{\partial\beta'}
=
-
(G_{\alpha,S}'WG_{\alpha,S})^{-1}G_{\alpha,S}'WG_{\beta,S}
=
D_{\alpha\beta,S}.
\]
The second-derivative terms in the derivative of the population first-order condition vanish
because \(g(\eta^*;S)=0\). This completes the proof.
\end{proof}

\begin{proof}[Proof of Corollary~\ref{cor:direct-target-derivative}]
By the chain rule,
\[
\left.
\frac{\partial\widehat\gamma(\beta;S)}{\partial\beta'}
\right|_{\beta=\widehat\beta_{S}}
=
\Gamma_\alpha(\widetilde\eta_{S};S)
\left.
\frac{\partial\widehat\alpha(\beta;S)}{\partial\beta'}
\right|_{\beta=\widehat\beta_{S}}
+
\Gamma_\beta(\widetilde\eta_{S};S).
\]
Theorem~\ref{thm:direct-reestimated-derivative} gives the convergence of the derivative of
the re-estimation map. Continuity of \(\Gamma_\alpha\) and \(\Gamma_\beta\), together with
\[
\widetilde\eta_{S}\overset{p}{\to}\eta^*,
\qquad
\widehat\eta\overset{p}{\to}\eta^*,
\]
gives
\[
\left.
\frac{\partial\widehat\gamma(\beta;S)}{\partial\beta'}
\right|_{\beta=\widehat\beta_{S}}
=
D_{\gamma\beta,S}+o_p(1).
\]
The same continuity and continuous-mapping argument applied to the direct derivative gives
\[
\widehat D_{\gamma\beta}(S)
=
D_{\gamma\beta,S}+o_p(1).
\]
Therefore
\[
\left.
\frac{\partial\widehat\gamma(\beta;S)}{\partial\beta'}
\right|_{\beta=\widehat\beta_{S}}
=
\widehat D_{\gamma\beta}(S)+o_p(1),
\]
and the proof is complete.
\end{proof}

\section[Main Tables: Nakamura and Steinsson Application and Simulation]{Main Tables \cite{nakamura2018high} Application and Simulation}\label{appendix:MainresultsNS18}

\renewcommand{\arraystretch}{1.5}

\begin{longtable}{@{}lcccc@{}}
\caption{\textbf{Set $\widehat{\mathcal{S}}_{\mathrm{AD}}(\widehat{\eta})$ and Sensitivity Statistic $\widehat{K}(\widehat{\beta}, S)$}}
\label{tab: Sensitivities}\\

\hline \hline
\textit{Name} & \textit{Estimated} & \textit{Calibrated} &
$\widehat{K}(\widehat{\beta};S)$ & 5\%\textit{ bounds} \\
\hline \hline
\endfirsthead

\multicolumn{5}{c}{\textit{Table \thetable\ (continued)}}\\
\hline \hline
\textit{Name} & \textit{Estimated Parameters} & \textit{Fixed Parameters} &
$\widehat{K}(\widehat{\beta};S)$ & 5\%\textit{ bounds} \\
\hline \hline
\endhead

\hline \hline
\endfoot

\hline \hline
\endlastfoot
\textit{Original} & $( \rho_1, \rho_2, \gamma, \kappa\zeta)$ &$(\rho,\delta,\omega,\theta,\sigma,b) $ & 1.82 & [0.58, 0.768] \\
$S^*$ & $(\delta, \sigma, \rho_{1}, \rho_{2}, \gamma, b)$ & $(\rho, \omega, \theta, \kappa\zeta)$ & 0.06 & [0.67, 0.68] \\
$S_{2}$ & $(\rho, \delta, \rho_{1}, \rho_{2}, \gamma, b)$ & $(\omega, \theta, \sigma, \kappa\zeta)$ & 0.32 & [0.66, 0.69] \\
$S_{3}$ & $(\rho, \rho_{1}, \rho_{2}, \gamma, b)$ & $(\delta, \omega, \theta, \sigma, \kappa\zeta)$ & 0.36 & [0.66, 0.70] \\
$S_{4}$ & $(\rho, \delta, \rho_{2}, \gamma, b)$ & $(\omega, \theta, \sigma, \rho_{1}, \kappa\zeta)$ & 0.38 & [0.66, 0.70] \\
$S_{5}$ & $(\sigma, \rho_{1}, \rho_{2}, \gamma, b)$ & $(\rho, \delta, \omega, \theta, \kappa\zeta)$ & 0.53 & [0.65, 0.70] \\
$S_{6}$ & $(\delta, \rho_{1}, \rho_{2}, \gamma, b)$ & $(\rho, \omega, \theta, \sigma, \kappa\zeta)$ & 0.61 & [0.65, 0.71] \\
$S_{7}$ & $(\rho, \delta, \rho_{1}, \gamma, b)$ & $(\omega, \theta, \sigma, \rho_{2}, \kappa\zeta)$ & 0.61 & [0.65, 0.71] \\
$S_{8}$ & $(\rho_{1}, \rho_{2}, \gamma, b)$ & $(\rho, \delta, \omega, \theta, \sigma, \kappa\zeta)$ & 0.78 & [0.64, 0.72] \\
$S_{9}$ & $(\rho, \rho_{1}, \gamma, b)$ & $(\delta, \omega, \theta, \sigma, \rho_{2}, \kappa\zeta)$ & 0.79 & [0.64, 0.72] \\
$S_{10}$ & $(\delta, \rho_{1}, \gamma, b)$ & $(\rho, \omega, \theta, \sigma, \rho_{2}, \kappa\zeta)$ & 0.84 & [0.64, 0.72] \\
$S_{11}$ & $(\rho_{1}, \gamma, b)$ & $(\rho, \delta, \omega, \theta, \sigma, \rho_{2}, \kappa\zeta)$ & 0.92 & [0.63, 0.72] \\
$S_{12}$ & $(\rho, \delta, \sigma, \rho_{1}, \rho_{2}, \gamma)$ & $(\omega, \theta, \kappa\zeta, b)$ & 1.05 & [0.63, 0.73] \\
$S_{13}$ & $(\rho, \sigma, \rho_{1}, \rho_{2}, \gamma)$ & $(\delta, \omega, \theta, \kappa\zeta, b)$ & 1.12 & [0.62, 0.73] \\
$S_{14}$ & $(\rho, \sigma, \rho_{1}, \rho_{2}, \gamma, b)$ & $(\delta, \omega, \theta, \kappa\zeta)$ & 1.19 & [0.62, 0.74] \\
$S_{15}$ & $(\rho, \delta, \sigma, \rho_{1}, \gamma)$ & $(\omega, \theta, \rho_{2}, \kappa\zeta, b)$ & 1.22 & [0.62, 0.74] \\
$S_{16}$ & $(\rho, \delta, \sigma, \rho_{2}, \gamma)$ & $(\omega, \theta, \rho_{1}, \kappa\zeta, b)$ & 1.23 & [0.62, 0.74] \\
$S_{17}$ & $(\sigma, \rho_{1}, \gamma, b)$ & $(\rho, \delta, \omega, \theta, \rho_{2}, \kappa\zeta)$ & 1.25 & [0.61, 0.74] \\
$S_{18}$ & $(\delta, \sigma, \rho_{1}, \gamma, b)$ & $(\rho, \omega, \theta, \rho_{2}, \kappa\zeta)$ & 1.26 & [0.61, 0.74] \\
$S_{19}$ & $(\delta, \sigma, \rho_{2}, \gamma, b)$ & $(\rho, \omega, \theta, \rho_{1}, \kappa\zeta)$ & 1.34 & [0.61, 0.74] \\
$S_{20}$ & $(\rho, \rho_{2}, \gamma, b)$ & $(\delta, \omega, \theta, \sigma, \rho_{1}, \kappa\zeta)$ & 1.53 & [0.60, 0.75] \\
$S_{21}$ & $(\rho, \sigma, \rho_{2}, \gamma)$ & $(\delta, \omega, \theta, \rho_{1}, \kappa\zeta, b)$ & 1.53 & [0.60, 0.75] \\
$S_{22}$ & $(\rho, \sigma, \rho_{1}, \gamma)$ & $(\delta, \omega, \theta, \rho_{2}, \kappa\zeta, b)$ & 1.59 & [0.60, 0.76] \\
$S_{23}$ & $(\delta, \sigma, \rho_{1}, \rho_{2}, \gamma)$ & $(\rho, \omega, \theta, \kappa\zeta, b)$ & 1.63 & [0.60, 0.76] \\
$S_{24}$ & $(\rho, \sigma, \rho_{1}, \gamma, b)$ & $(\delta, \omega, \theta, \rho_{2}, \kappa\zeta)$ & 1.73 & [0.59, 0.76] \\
$S_{25}$ & $(\delta, \rho_{1}, \rho_{2}, \gamma)$ & $(\rho, \omega, \theta, \sigma, \kappa\zeta, b)$ & 1.82 & [0.59, 0.77] \\
$S_{26}$ & $(\delta, \sigma, \rho_{1}, \gamma)$ & $(\rho, \omega, \theta, \rho_{2}, \kappa\zeta, b)$ & 1.86 & [0.58, 0.77] \\
$S_{27}$ & $(\rho, \delta, \rho_{1}, \rho_{2}, \gamma)$ & $(\omega, \theta, \sigma, \kappa\zeta, b)$ & 1.93 & [0.58, 0.77] \\
$S_{28}$ & $(\rho, \delta, \rho_{1}, \gamma)$ & $(\omega, \theta, \sigma, \rho_{2}, \kappa\zeta, b)$ & 1.94 & [0.58, 0.77] \\
$S_{29}$ & $(\sigma, \rho_{2}, \gamma, b)$ & $(\rho, \delta, \omega, \theta, \rho_{1}, \kappa\zeta)$ & 1.96 & [0.58, 0.78] \\
$S_{30}$ & $(\delta, \rho_{1}, \gamma)$ & $(\rho, \omega, \theta, \sigma, \rho_{2}, \kappa\zeta, b)$ & 1.96 & [0.58, 0.78] \\
$S_{31}$ & $(\sigma, \rho_{1}, \rho_{2}, \gamma)$ & $(\rho, \delta, \omega, \theta, \kappa\zeta, b)$ & 1.98 & [0.58, 0.78] \\
$S_{32}$ & $(\rho_{1}, \rho_{2}, \gamma)$ & $(\rho, \delta, \omega, \theta, \sigma, \kappa\zeta, b)$ & 1.99 & [0.58, 0.78] \\
$S_{33}$ & $(\rho, \rho_{1}, \rho_{2}, \gamma)$ & $(\delta, \omega, \theta, \sigma, \kappa\zeta, b)$ & 2.07 & [0.57, 0.78] \\
$S_{34}$ & $(\sigma, \rho_{1}, \gamma)$ & $(\rho, \delta, \omega, \theta, \rho_{2}, \kappa\zeta, b)$ & 2.09 & [0.57, 0.78] \\
$S_{35}$ & $(\rho_{1}, \gamma)$ & $(\rho, \delta, \omega, \theta, \sigma, \rho_{2}, \kappa\zeta, b)$ & 2.13 & [0.57, 0.78] \\
$S_{36}$ & $(\rho, \rho_{1}, \gamma)$ & $(\delta, \omega, \theta, \sigma, \rho_{2}, \kappa\zeta, b)$ & 2.16 & [0.57, 0.79] \\
$S_{37}$ & $(\rho, \delta, \rho_{2}, \gamma)$ & $(\omega, \theta, \sigma, \rho_{1}, \kappa\zeta, b)$ & 2.39 & [0.56, 0.80] \\
$S_{38}$ & $(\rho, \rho_{2}, \gamma)$ & $(\delta, \omega, \theta, \sigma, \rho_{1}, \kappa\zeta, b)$ & 2.89 & [0.53, 0.82] \\
$S_{39}$ & $(\rho, \sigma, \rho_{2}, \gamma, b)$ & $(\delta, \omega, \theta, \rho_{1}, \kappa\zeta)$ & 3.70 & [0.49, 0.86] \\
$S_{40}$ & $(\delta, \sigma, \rho_{2}, \gamma)$ & $(\rho, \omega, \theta, \rho_{1}, \kappa\zeta, b)$ & 3.71 & [0.49, 0.86] \\
$S_{41}$ & $(\sigma, \rho_{2}, \gamma)$ & $(\rho, \delta, \omega, \theta, \rho_{1}, \kappa\zeta, b)$ & 4.46 & [0.45, 0.90] \\
$S_{42}$ & $(\delta, \rho_{2}, \gamma, b)$ & $(\rho, \omega, \theta, \sigma, \rho_{1}, \kappa\zeta)$ & 4.85 & [0.43, 0.92] \\
$S_{43}$ & $(\delta, \rho_{2}, \gamma)$ & $(\rho, \omega, \theta, \sigma, \rho_{1}, \kappa\zeta, b)$ & 4.94 & [0.43, 0.92] \\
$S_{44}$ & $(\rho_{2}, \gamma)$ & $(\rho, \delta, \omega, \theta, \sigma, \rho_{1}, \kappa\zeta, b)$ & 5.56 & [0.40, 0.96] \\
$S_{45}$ & $(\rho_{2}, \gamma, b)$ & $(\rho, \delta, \omega, \theta, \sigma, \rho_{1}, \kappa\zeta)$ & 5.57 & [0.40, 0.96] \\
$S_{46}$ & $(\rho, \delta, \sigma, \gamma)$ & $(\omega, \theta, \rho_{1}, \rho_{2}, \kappa\zeta, b)$ & 9.25 & [0.22, 1.14] \\
$S_{47}$ & $(\rho, \delta, \gamma, b)$ & $(\omega, \theta, \sigma, \rho_{1}, \rho_{2}, \kappa\zeta)$ & 9.74 & [0.19, 1.16] \\
$S_{48}$ & $(\delta, \sigma, \gamma)$ & $(\rho, \omega, \theta, \rho_{1}, \rho_{2}, \kappa\zeta, b)$ & 9.83 & [0.19, 1.17] \\
$S_{49}$ & $(\rho, \delta, \gamma)$ & $(\omega, \theta, \sigma, \rho_{1}, \rho_{2}, \kappa\zeta, b)$ & 9.95 & [0.18, 1.17] \\
$S_{50}$ & $(\rho, \sigma, \gamma)$ & $(\delta, \omega, \theta, \rho_{1}, \rho_{2}, \kappa\zeta, b)$ & 10.00 & [0.18, 1.18] \\
$S_{51}$ & $(\delta, \gamma, b)$ & $(\rho, \omega, \theta, \sigma, \rho_{1}, \rho_{2}, \kappa\zeta)$ & 10.15 & [0.17, 1.19] \\
$S_{52}$ & $(\delta, \sigma, \gamma, b)$ & $(\rho, \omega, \theta, \rho_{1}, \rho_{2}, \kappa\zeta)$ & 10.23 & [0.17, 1.19] \\
$S_{53}$ & $(\rho, \sigma, \gamma, b)$ & $(\delta, \omega, \theta, \rho_{1}, \rho_{2}, \kappa\zeta)$ & 10.43 & [0.16, 1.20] \\
$S_{54}$ & $(\delta, \gamma)$ & $(\rho, \omega, \theta, \sigma, \rho_{1}, \rho_{2}, \kappa\zeta, b)$ & 10.43 & [0.16, 1.20] \\
$S_{55}$ & $(\sigma, \gamma)$ & $(\rho, \delta, \omega, \theta, \rho_{1}, \rho_{2}, \kappa\zeta, b)$ & 10.52 & [0.15, 1.20] \\
$S_{56}$ & $(\rho, \gamma, b)$ & $(\delta, \omega, \theta, \sigma, \rho_{1}, \rho_{2}, \kappa\zeta)$ & 10.55 & [0.15, 1.20] \\
$S_{57}$ & $(\rho, \gamma)$ & $(\delta, \omega, \theta, \sigma, \rho_{1}, \rho_{2}, \kappa\zeta, b)$ & 10.65 & [0.15, 1.21] \\
$S_{58}$ & $(\gamma, b)$ & $(\rho, \delta, \omega, \theta, \sigma, \rho_{1}, \rho_{2}, \kappa\zeta)$ & 10.88 & [0.13, 1.22] \\
$S_{59}$ & $(\sigma, \gamma, b)$ & $(\rho, \delta, \omega, \theta, \rho_{1}, \rho_{2}, \kappa\zeta)$ & 11.02 & [0.13, 1.23] \\
$S_{60}$ & $(\gamma)$ & $(\rho, \delta, \omega, \theta, \sigma, \rho_{1}, \rho_{2}, \kappa\zeta, b)$ & 11.07 & [0.12, 1.23] \\

\end{longtable}

\renewcommand{\arraystretch}{1.5}
\begin{longtable}{@{}lcccccccccc@{}}
\caption{\textbf{Contribution of Fixed Parameters to Sensitivity Statistic}}\label{tab: ContributionFixed}\\
\hline \hline
\textit{Name} & $\rho$ & $\delta$ & $\omega$ & $\theta$ & $\sigma$ & $\rho_{1}$ & $\rho_{2}$ & $\kappa\zeta$ & $\gamma$ & $b$ \\
\hline \hline
\endfirsthead
\multicolumn{11}{c}{\textit{Table \thetable\ (continued)}}\\
\hline \hline
\textit{Name} & $\rho$ & $\delta$ & $\omega$ & $\theta$ & $\sigma$ & $\rho_{1}$ & $\rho_{2}$ & $\kappa\zeta$ & $\gamma$ & $b$ \\
\hline \hline
\endhead
\hline \hline
\endfoot
\hline \hline
\endlastfoot
\textit{Original} & 0.0\% & 0.0\% & 0.0\% & 0.0\% & 0.0\% & - & - & - & - & 100.0\% \\
$S^*$ & 99.6\% & - & 0.4\% & 0.0\% & - & - & - & 0.0\% & - & - \\
$S_{2}$ & - & - & 0.0\% & 0.0\% & 100.0\% & - & - & 0.0\% & - & - \\
$S_{3}$ & - & 22.9\% & 2.0\% & 4.3\% & 70.7\% & - & - & 0.2\% & - & - \\
$S_{4}$ & - & - & 0.0\% & 0.0\% & 87.6\% & 12.4\% & - & 0.0\% & - & - \\
$S_{5}$ & 0.6\% & 77.7\% & 6.6\% & 14.6\% & - & - & - & 0.5\% & - & - \\
$S_{6}$ & 0.6\% & - & 0.0\% & 0.0\% & 99.4\% & - & - & 0.0\% & - & - \\
$S_{7}$ & - & - & 0.0\% & 0.0\% & 33.8\% & - & 66.2\% & 0.0\% & - & - \\
$S_{8}$ & 0.7\% & 2.6\% & 0.2\% & 0.5\% & 96.1\% & - & - & 0.0\% & - & - \\
$S_{9}$ & - & 1.2\% & 0.1\% & 0.2\% & 27.8\% & - & 70.6\% & 0.0\% & - & - \\
$S_{10}$ & 0.8\% & - & 0.0\% & 0.0\% & 86.8\% & - & 12.4\% & 0.0\% & - & - \\
$S_{11}$ & 0.8\% & 1.6\% & 0.1\% & 0.3\% & 93.2\% & - & 4.1\% & 0.0\% & - & - \\
$S_{12}$ & - & - & 0.0\% & 0.0\% & - & - & - & 0.0\% & - & 100.0\% \\
$S_{13}$ & - & 0.2\% & 0.0\% & 0.0\% & - & - & - & 0.0\% & - & 99.7\% \\
$S_{14}$ & - & 78.0\% & 6.8\% & 14.7\% & - & - & - & 0.5\% & - & - \\
$S_{15}$ & - & - & 0.0\% & 0.0\% & - & - & 7.5\% & 0.0\% & - & 92.5\% \\
$S_{16}$ & - & - & 0.0\% & 0.0\% & - & 10.2\% & - & 0.0\% & - & 89.8\% \\
$S_{17}$ & 0.3\% & 0.3\% & 0.0\% & 0.0\% & - & - & 99.4\% & 0.0\% & - & - \\
$S_{18}$ & 0.3\% & - & 0.0\% & 0.0\% & - & - & 99.7\% & 0.0\% & - & - \\
$S_{19}$ & 0.2\% & - & 0.0\% & 0.0\% & - & 99.8\% & - & 0.0\% & - & - \\
$S_{20}$ & - & 8.9\% & 0.7\% & 1.7\% & 2.7\% & 85.9\% & - & 0.1\% & - & - \\
$S_{21}$ & - & 2.9\% & 0.3\% & 0.6\% & - & 52.0\% & - & 0.0\% & - & 44.2\% \\
$S_{22}$ & - & 1.3\% & 0.1\% & 0.3\% & - & - & 13.6\% & 0.0\% & - & 84.7\% \\
$S_{23}$ & 0.0\% & - & 0.0\% & 0.0\% & - & - & - & 0.0\% & - & 100.0\% \\
$S_{24}$ & - & 46.2\% & 3.9\% & 8.7\% & - & - & 40.9\% & 0.3\% & - & - \\
$S_{25}$ & 0.0\% & - & 0.0\% & 0.0\% & 0.0\% & - & - & 0.0\% & - & 100.0\% \\
$S_{26}$ & 0.0\% & - & 0.0\% & 0.0\% & - & - & 0.2\% & 0.0\% & - & 99.7\% \\
$S_{27}$ & - & - & 0.0\% & 0.0\% & 1.4\% & - & - & 0.0\% & - & 98.6\% \\
$S_{28}$ & - & - & 0.0\% & 0.0\% & 1.0\% & - & 2.1\% & 0.0\% & - & 97.0\% \\
$S_{29}$ & 0.0\% & 16.4\% & 1.4\% & 3.1\% & - & 79.0\% & - & 0.1\% & - & - \\
$S_{30}$ & 0.0\% & - & 0.0\% & 0.0\% & 0.1\% & - & 0.6\% & 0.0\% & - & 99.3\% \\
$S_{31}$ & 0.0\% & 0.3\% & 0.0\% & 0.1\% & - & - & - & 0.0\% & - & 99.6\% \\
$S_{32}$ & 0.1\% & 0.6\% & 0.0\% & 0.1\% & 0.3\% & - & - & 0.0\% & - & 98.9\% \\
$S_{33}$ & - & 0.0\% & 0.0\% & 0.0\% & 1.4\% & - & - & 0.0\% & - & 98.5\% \\
$S_{34}$ & 0.0\% & 0.5\% & 0.0\% & 0.1\% & - & - & 0.4\% & 0.0\% & - & 98.9\% \\
$S_{35}$ & 0.0\% & 0.6\% & 0.0\% & 0.1\% & 0.2\% & - & 0.0\% & 0.0\% & - & 98.9\% \\
$S_{36}$ & - & 0.5\% & 0.0\% & 0.1\% & 0.6\% & - & 5.3\% & 0.0\% & - & 93.5\% \\
$S_{37}$ & - & - & 0.0\% & 0.0\% & 1.6\% & 7.7\% & - & 0.0\% & - & 90.7\% \\
$S_{38}$ & - & 1.6\% & 0.1\% & 0.3\% & 1.7\% & 33.9\% & - & 0.0\% & - & 62.4\% \\
$S_{39}$ & - & 10.1\% & 0.9\% & 1.9\% & - & 87.1\% & - & 0.1\% & - & - \\
$S_{40}$ & 0.0\% & - & 0.0\% & 0.0\% & - & 47.8\% & - & 0.0\% & - & 52.2\% \\
$S_{41}$ & 0.0\% & 0.5\% & 0.0\% & 0.1\% & - & 65.2\% & - & 0.0\% & - & 34.2\% \\
$S_{42}$ & 0.2\% & - & 0.0\% & 0.0\% & 8.8\% & 91.0\% & - & 0.0\% & - & - \\
$S_{43}$ & 0.1\% & - & 0.0\% & 0.0\% & 3.5\% & 86.9\% & - & 0.0\% & - & 9.5\% \\
$S_{44}$ & 0.1\% & 0.1\% & 0.0\% & 0.0\% & 2.7\% & 89.0\% & - & 0.0\% & - & 8.1\% \\
$S_{45}$ & 0.1\% & 0.1\% & 0.0\% & 0.0\% & 6.7\% & 93.0\% & - & 0.0\% & - & - \\
$S_{46}$ & - & - & 0.0\% & 0.0\% & - & 80.0\% & 18.1\% & 0.0\% & - & 1.9\% \\
$S_{47}$ & - & - & 0.0\% & 0.0\% & 0.2\% & 80.2\% & 19.6\% & 0.0\% & - & - \\
$S_{48}$ & 0.0\% & - & 0.0\% & 0.0\% & - & 80.6\% & 16.7\% & 0.0\% & - & 2.7\% \\
$S_{49}$ & - & - & 0.0\% & 0.0\% & 0.0\% & 79.8\% & 17.9\% & 0.0\% & - & 2.3\% \\
$S_{50}$ & - & 0.0\% & 0.0\% & 0.0\% & - & 80.1\% & 18.0\% & 0.0\% & - & 1.9\% \\
$S_{51}$ & 0.0\% & - & 0.0\% & 0.0\% & 0.6\% & 84.8\% & 14.6\% & 0.0\% & - & - \\
$S_{52}$ & 0.0\% & - & 0.0\% & 0.0\% & - & 74.4\% & 25.6\% & 0.0\% & - & - \\
$S_{53}$ & - & 1.5\% & 0.1\% & 0.3\% & - & 89.5\% & 8.6\% & 0.0\% & - & - \\
$S_{54}$ & 0.0\% & - & 0.0\% & 0.0\% & 0.0\% & 82.8\% & 15.1\% & 0.0\% & - & 2.1\% \\
$S_{55}$ & 0.0\% & 0.0\% & 0.0\% & 0.0\% & - & 81.1\% & 16.3\% & 0.0\% & - & 2.6\% \\
$S_{56}$ & - & 0.0\% & 0.0\% & 0.0\% & 0.1\% & 81.0\% & 18.9\% & 0.0\% & - & - \\
$S_{57}$ & - & 0.0\% & 0.0\% & 0.0\% & 0.0\% & 80.0\% & 17.7\% & 0.0\% & - & 2.3\% \\
$S_{58}$ & 0.0\% & 0.0\% & 0.0\% & 0.0\% & 0.5\% & 85.2\% & 14.2\% & 0.0\% & - & - \\
$S_{59}$ & 0.0\% & 0.0\% & 0.0\% & 0.0\% & - & 77.4\% & 22.6\% & 0.0\% & - & - \\
$S_{60}$ & 0.0\% & 0.0\% & 0.0\% & 0.0\% & 0.0\% & 82.9\% & 15.0\% & 0.0\% & - & 2.1\% \\
\end{longtable}

\section{Monte Carlo appendix}
\label{app:mc-implementation}

The Monte Carlo exercise is implemented as follows. Let $\eta_0$ denote the reference structural parameter vector used in the empirical application. For each value of $\eta_0$, we solve the structural model and compute the model-implied vector
\[
m(\eta_0)\in\mathbb{R}^{33},
\]
whose entries correspond to the same reduced-form responses matched in the CMD estimation.

The simulated observations are draws of this stacked reduced-form moment vector. Specifically, for each Monte Carlo replication and sample size $n$, we draw
\[
m_i = m(\eta_0) + u_i,
\qquad
u_i \sim N(0,\widehat{\Omega}),
\qquad i=1,\ldots,n,
\]
where $\widehat{\Omega}$ is the covariance matrix computed from the bootstrap distribution of the empirical reduced-form moments. We then compute the sample mean and sample covariance of $\{m_i\}_{i=1}^n$ and use these objects to construct the same moment structure that enters the empirical CMD estimator.

Thus, the Monte Carlo sample size $n$ indexes the precision of the simulated reduced-form moment vector, rather than the number of raw daily or tick observations. We report results for $n=150$, which is of the same order as the number of FOMC-event observations underlying the empirical reduced-form moments, and for $n=500$, which provides a larger-sample benchmark in which sampling noise is smaller.

For each partition $S$ and each miscalibration size $\varepsilon$, we compute the partition-specific worst-case perturbation of the fixed parameters. The fixed parameters are held at these miscalibrated values, while the estimated parameters are re-estimated using the simulated CMD moments. Across Monte Carlo replications, this delivers the finite-sample distribution of $\widehat{\gamma}$ under each partition and miscalibration size. We summarize these distributions using bias, variance, and mean squared error relative to the reference value $\gamma_0$.

\subsection{Robustness to the weak-identification rank threshold}
\label{app:weak_id_threshold_robustness}\label{appendix: weakid_tolerance}

This appendix checks whether the Monte Carlo conclusions depend on the
hard-threshold used to estimate the conditional rank of each partition. In the
baseline implementation, a partition is kept whenever
\[
    \widehat r_S = |S|,
    \qquad
    \widehat r_S
    =
    \#\{j:\widehat\sigma_{j,S}>\tau_n\},
    \qquad
    \tau_n=\left(\frac{\log n}{n}\right)^{1/2}.
\]
To study sensitivity to this implementation choice, we replace the baseline cutoff by
\[
    \tau_n(a)
    =
    \left(\frac{\log n}{n}\right)^a,
    \qquad
    a\in\{0.50,0.60,0.70,0.80,0.90,1.00\}.
\]
The scalar \(a\) is not a structural parameter. It is the exponent applied to
\(\log n/n\) in the rank threshold. Since \(\log n/n<1\) in the samples considered
here, larger values of \(a\) imply a smaller cutoff and therefore a more permissive
admissibility rule. For each value of \(a\), I recompute the admissible set, rank
admissible partitions by the sensitivity statistic \(K_S\), and repeat the same
worst-case Monte Carlo exercise for the top-ranked partitions.

Table~\ref{tab:rank_threshold_robustness} reports the results in compressed form.
The first panel shows how the top-five sensitivity ranking changes with \(a\).
The second panel reports Monte Carlo MSEs only for the distinct partitions that
appear in one of those rankings. This avoids repeating identical rows: for
\(a\in\{0.50,0.60,0.70,0.80\}\), both the number of admissible partitions and the
top-five ranking are exactly the same. The main conclusion is stable. The selected
partition \(S^*\) remains the least-sensitive partition for every threshold
exponent. When the cutoff is made more permissive, additional partitions enter the
admissible set, but they do not overturn the main finite-sample message: the
least-sensitive partition remains unchanged and, when miscalibration is large
relative to the sample size, delivers the lowest MSE.

\begin{landscape}
\begin{table}[!t]
\centering
\scriptsize
\setlength{\tabcolsep}{3pt}
\renewcommand{\arraystretch}{1.12}
\begin{threeparttable}
\caption{Robustness to the rank-threshold exponent}
\label{tab:rank_threshold_robustness}

\begin{tabular}{@{}p{0.15\linewidth}c p{0.18\linewidth} p{0.32\linewidth} p{0.20\linewidth}@{}}
\toprule
Threshold exponent \(a\)
& \(\#\widehat{\mathcal S}_{AD}(a)\)
& Top five by \(K_S\)
& Sensitivities
& Main change relative to baseline \\
\midrule
\(a\in\{0.50,0.60,0.70,0.80\}\)
& 60
& \(S^*,\ S_2,\ S_3,\ S_4,\ S_5\)
& \(0.0582,\ 0.3159,\ 0.3602,\ 0.3772,\ 0.5324\)
& No change: same admissible set size and same top-five ranking. \\

\(a=0.90\)
& 86
& \(S^*,\ S_2,\ S_3,\ S_4,\ S_{0.9}\)
& \(0.0582,\ 0.3159,\ 0.3602,\ 0.3772,\ 0.3860\)
& More permissive threshold admits additional partitions; the first four positions are unchanged. \\

\(a=1.00\)
& 95
& \(S^*,\ S_{1a},\ S_{1b},\ S_{1c},\ S_2\)
& \(0.0582,\ 0.0591,\ 0.0600,\ 0.3153,\ 0.3159\)
& Near-ties appear only under the most permissive threshold, but \(S^*\) remains first. \\
\bottomrule
\end{tabular}

\vspace{0.8em}

\begin{tabularx}{\linewidth}{@{}l
>{\raggedright\arraybackslash}p{0.19\linewidth}
>{\raggedright\arraybackslash}p{0.16\linewidth}
r
rrr
rrr@{}}
\toprule
& & & &
\multicolumn{3}{c}{MSE, \(n=150\)}
& \multicolumn{3}{c}{MSE, \(n=500\)} \\
\cmidrule(lr){5-7}\cmidrule(lr){8-10}
Partition
& Estimated parameters
& Fixed parameters
& \(K_S\)
& \(\varepsilon=5\%\)
& \(\varepsilon=10\%\)
& \(\varepsilon=15\%\)
& \(\varepsilon=5\%\)
& \(\varepsilon=10\%\)
& \(\varepsilon=15\%\) \\
\midrule
\(S^*\)
& \(\delta,\sigma,\rho_1,\rho_2,\gamma,b\)
& \(\rho,\omega,\theta,\kappa\zeta\)
& 0.0582
& 4.7 & 5.5 & 6.7
& 1.4 & 1.8 & 2.5 \\

\(S_2\)
& \(\rho,\delta,\rho_1,\rho_2,\gamma,b\)
& \(\omega,\theta,\sigma,\kappa\zeta\)
& 0.3159
& 5.7 & 8.6 & 22.4
& 4.5 & 7.0 & 21.0 \\

\(S_3\)
& \(\rho,\rho_1,\rho_2,\gamma,b\)
& \(\delta,\omega,\theta,\sigma,\kappa\zeta\)
& 0.3602
& 10.7 & 8.8 & 12.9
& 9.6 & 7.3 & 11.7 \\

\(S_4\)
& \(\rho,\delta,\rho_2,\gamma,b\)
& \(\omega,\theta,\sigma,\rho_1,\kappa\zeta\)
& 0.3772
& 21.1 & 349.3 & 1492.8
& 20.4 & 345.1 & 1494.3 \\

\(S_5\)
& \(\sigma,\rho_1,\rho_2,\gamma,b\)
& \(\rho,\delta,\omega,\theta,\kappa\zeta\)
& 0.5324
& 79.7 & 448.9 & 602.8
& 78.1 & 437.9 & 605.3 \\

\(S_{0.9}\)
& \(\rho,\omega,\rho_2,\gamma,b\)
& \(\delta,\theta,\sigma,\rho_1,\kappa\zeta\)
& 0.3860
& 56.8 & 653.2 & 3449.9
& 56.9 & 648.5 & 3540.0 \\

\(S_{1a}\)
& \(\rho,\delta,\sigma,\rho_1,\rho_2,\gamma,b\)
& \(\omega,\theta,\kappa\zeta\)
& 0.0591
& 54.1 & 51.7 & 52.2
& 15.7 & 16.9 & 17.2 \\

\(S_{1b}\)
& \(\omega,\sigma,\rho_1,\rho_2,\gamma,b\)
& \(\rho,\delta,\theta,\kappa\zeta\)
& 0.0600
& 4.0 & 6.7 & 18.2
& 1.2 & 4.1 & 16.0 \\

\(S_{1c}\)
& \(\rho,\omega,\rho_1,\rho_2,\gamma,b\)
& \(\delta,\theta,\sigma,\kappa\zeta\)
& 0.3153
& 5.7 & 9.0 & 31.3
& 4.5 & 7.3 & 29.3 \\
\bottomrule
\end{tabularx}

\begin{tablenotes}[flushleft]
\footnotesize
\item Notes: The rank-threshold exponent \(a\) defines
\(\tau_n(a)=(\log n/n)^a\). Larger \(a\) corresponds to a smaller singular-value
cutoff and therefore a more permissive admissibility rule. \(K_S\) is the
sensitivity statistic used to rank admissible partitions. MSEs are multiplied by
\(10^4\) and rounded to one decimal place. The table reports MSEs, rather than
bias and variance separately, to focus on the robustness of the substantive
ranking; the worst-case direction is recomputed separately for each partition and
each value of \(\varepsilon\), so the sign of the bias is not the object of
comparison.
\end{tablenotes}
\end{threeparttable}
\end{table}
\end{landscape}


\bibliographystyle{qe} 
\bibliography{bibliography}  


\end{appendix}
\end{document}